\documentclass[a4paper,11pt,reqno]{amsart} 
\usepackage{amsmath}
\usepackage{amsfonts}
\usepackage[T1]{fontenc}  
\usepackage{lmodern}%
\usepackage{microtype}%
\usepackage[utf8]{inputenc}
\usepackage[mathscr]{euscript}
\usepackage{amssymb,latexsym}
\usepackage{amsthm}
\usepackage{color}
\usepackage{graphicx}
\usepackage[hidelinks]{hyperref}
\usepackage[font=scriptsize]{caption}
\usepackage{subcaption}
\usepackage{cite}
\usepackage{bbm}
\usepackage{amsmath}
\usepackage{tikz}
\usetikzlibrary{patterns}

\fboxrule0.0001pt \fboxsep0pt 
\newcommand{\Ab}{\mathbf{A}}

\usepackage[lmargin=3 cm,rmargin=3 cm,tmargin=3.5cm,bmargin=2.5cm,paper=a4paper]{geometry}
\DeclareMathOperator{\curl}{curl}
 \DeclareMathOperator{\dist}{dist} 
\DeclareMathOperator{\supp}{supp} \DeclareMathOperator{\dom}{\mathrm{Dom}}
\DeclareMathOperator{\csch}{\mathrm{csch}}
\newtheorem{thm}{Theorem}[section]
\newtheorem{pro}[thm]{Proposition}

\newtheorem{lem}[thm]{Lemma}

\newtheorem{theorem}[thm]{Theorem}
\newtheorem{assumption}[thm]{Assumption}

\newtheorem{lemma}[thm]{Lemma}
\newtheorem{proposition}[thm]{Proposition}
\newtheorem{corollary}[thm]{Corollary}

\theoremstyle{remark}
\newtheorem{rem}[thm]{Remark}

\newcommand{\R}{\mathbb{R}}

\newcommand{\Om}{\Omega}

\numberwithin{equation}{section}
\title{A 3D-Schr\"odinger operator under magnetic steps with semiclassical applications}
\author[W. Assaad]{Wafaa Assaad}
\address{Lebanese International University, Faculty of Arts and Sciences, Beirut, Lebanon}
\email{wafaa.assaad@liu.edu.lb}
\author[E. L. Giacomelli]{Emanuela L. Giacomelli}
\address{LMU Munich, Department of Mathematics, Theresienstr. 39, 80333 Munich, Germany}
\email{emanuela.giacomelli@math.lmu.de}
\usepackage[deletedmarkup=xout]{changes}														
	\definechangesauthor[name={Emanuela}, color={blue}]{ema}

	\definechangesauthor[name={Wafaa}, color={green}]{wafaa}

\begin{document}
\maketitle
\begin{abstract}
	We define a Schr\"odinger operator on the half-space  with  a discontinuous magnetic field having a piecewise-constant strength and a uniform direction. Motivated by applications in the theory of superconductivity, we study the infimum of the spectrum of the operator. We  give sufficient conditions on the strength and the direction of the magnetic field such that the aforementioned infimum is an eigenvalue of a reduced model operator on the half-plane. We use the Schr\"odinger operator on the half-space to study a new semiclassical problem in bounded domains of the space, considering a magnetic Neumann Laplacian with a piecewise-constant magnetic field. We then  make precise the localization of the semiclassical ground state near specific points at the discontinuity jump of the magnetic field. 
\end{abstract}

\section{Introduction}\label{sec:int}
We consider a Schr\"odinger operator defined on the half-space and having a magnetic field with a piecewise-constant strength and a uniform direction. Such operator is interesting to be considered in new situations in the theory of superconductivity as we will describe later. 
We set the half-space to be $\mathbb{R}^3_+ := \{\mathbf{x}\in \mathbb{R}^3\, \vert\, \mathbf{x} = (x_1, x_2, x_3), \, \, x_2>0\}$ and we split it in two regions in which the strength of the magnetic field is different as follows.  Let $\alpha \in (0,\pi)$, using spherical coordinates, we define the domains $\mathcal{D}_\alpha^1$ and  $\mathcal{D}_\alpha^2$ of $\mathbb{R}_+^3$:
 \begin{equation}\label{eq: def D1alpha}
 	\mathcal D^1_\alpha = \left\{\mathbf{x}\in\mathbb{R}^3\, \vert\, \mathbf{x} =  \rho(\cos\theta\sin\phi,\sin\theta\sin\phi,\cos\phi),\, \rho\in(0,\infty),\,0<\theta<\alpha,\,\phi\in(0,\pi)\right\},
 \end{equation}
 \begin{equation}\label{eq: def D12alpha}
 	\mathcal D^2_\alpha = \left\{\mathbf{x}\in\mathbb{R}^3\, \vert\, \mathbf{x} =  \rho(\cos\theta\sin\phi,\sin\theta\sin\phi,\cos\phi),\, \rho\in(0,\infty),\,\alpha<\theta<\pi,\,\phi\in(0,\pi)\right\}.
 \end{equation}
Let $a\in[-1,1)\setminus\{0\}$ and\footnote{By symmetry considerations, we restrict the study to the case where $ \gamma \in[0,\pi/2]$.} $\gamma\in [0,\pi/2]$,  we introduce the following magnetic field\footnote {The choice of the discontinuous magnetic field $\mathbf{B}_{\alpha,\gamma,a}$  as in \eqref{eq:Bs} is motivated by getting the operator $\mathcal{L}_{\alpha,\gamma,a}$ as the right tangent operator in localizations problems on bounded domains, considered later in the paper  (see the applications in Section \ref{sec: motivation}).}  in $\mathbb{R}_+^3$
\begin{eqnarray}\label{eq:Bs}\mathbf{B}_{ \alpha,\gamma,a} &=& (\cos\alpha\sin\gamma,\sin\alpha\sin\gamma,\cos\gamma)\left(\mathbbm{1}_{\mathcal D^1_\alpha}+a\mathbbm{1}_{\mathcal D^2_\alpha}\right)
\\
&=& (\cos\alpha\sin\gamma,\sin\alpha\sin\gamma,\cos\gamma)\mathsf s_{ \alpha,a}.\nonumber
\end{eqnarray}
Here (and in the sequel) $\mathbbm{1}_{\sharp}$ denotes the characteristic function corresponding to the set $\sharp$ (in this case $\sharp = \mathcal{D}_\alpha^1, \mathcal{D}_\alpha^2$). The function $\mathsf{s}_{\alpha, a}$ represents the strength of the magnetic field (see Figure \ref{fig:1}). The choice of the values $a$ in $ [-1,1)\setminus \{0\}$ will be discussed later (see~Remark~\ref{rem: a=0}). 

We consider the magnetic Neumann realization of the following self-adjoint operator on $\R^3_+$
	\begin{equation}\label{eq:La+}
	\mathcal L_{ \alpha,\gamma,a} =-(\nabla-i\Ab_{ \alpha,\gamma,a} )^2,
	\end{equation}
 where $\Ab_{ \alpha,\gamma,a} \in H^1_{\mathrm{loc}}(\R_+^3,\R^3)$ is a magnetic potential  such that $\mathrm{curl}\mathbf{A}_{\alpha,\gamma, a} = \mathbf{B}_{\alpha, \gamma, a}$. A choice of the magnetic potential $\Ab_{ \alpha,\gamma,a}$ is fixed in~\eqref{eq:Ag}. 
	The domain of the operator $\mathcal L_{ \alpha,\gamma,a} $ is
	\begin{figure}
		\centering
		\begin{tikzpicture}[scale= 0.5]
		\draw[->] (0,0) to (0,5.5);
		\draw[->] (0,0) to (5.5,0);
		\draw[->] (0,0) to (-2.5,-3);
		\draw[->] (0,0) to (2,4);
		\draw[->] (3.6,3.5) to (2.6,1.5); 
		\draw[->] (-2.6,1.5) to (-1.4,3.5); 
  		\node at (3.5,4.1) {\small{$\mathbf{B}_{\alpha,\gamma,a}$}};
  		\node at (-1.5,4.1) {\small{$\mathbf{B}_{\alpha,\gamma,a}$}};
		\draw (0,0) to (4,-3);
		\draw[dotted] (2,4) to (2,-1.5);
		\draw[dotted] (4,-3) to (4,5);
		\node at (0, -0.5) {\tiny{$\alpha$}};
		\node at (0.2,1) {\tiny{$\gamma$}};
		\node at (-3.7,5.7) {\tiny{$\mathcal{D}^1_\alpha$}};
		\node at (-3.8,5) {\tiny{$\mathsf{s}_{\alpha,a} = 1$}};
		\node at (5.5,5.7) {\tiny{$\mathcal{D}^2_\alpha$}};
		\node at (5.5,5) {\tiny{$\mathsf{s}_{\alpha,a} = a$}};
		\node at (0,5.7) {\tiny{$x_3$}};
		\node at (5.8, 0) {\tiny{$x_2$}};
		\node at (-2.65, -3.2) {\tiny{$x_1$}};
		\node at (3.4, -1.3) {\small{$P_\alpha$}};
		\draw[<-]([shift=(30:1cm)]-0.61,0) arc (60:90:0.5cm);
		\draw[->]([shift=(30:1cm)]-1,-0.65) arc (190:361:0.15cm);
		\node at (-0.1,7) {\small{$\boxed{\textcolor{white}{\sum}\hspace{-0.4cm}{ a \in[-1,0)}\hspace{0.15cm}}$}};
		\end{tikzpicture}
		\hspace{0.6cm}
		\begin{tikzpicture}[scale= 0.5]
		\draw[->] (0,0) to (0,5.5);
		\draw[->] (0,0) to (5.5,0);
		\draw[->] (0,0) to (-2.5,-3);
		\draw[->] (0,0) to (2,4);
		\draw[<-] (3.6,3.5) to (2.6,1.5); 
		\draw[->] (-2.6,1.5) to (-1.4,3.5); 
  		\node at (3.5,4.1) {\small{$\mathbf{B}_{\alpha,\gamma,a}$}};
  		\node at (-1.5,4.1) {\small{$\mathbf{B}_{\alpha,\gamma,a}$}};
		\draw (0,0) to (4,-3);
		\draw[dotted] (2,4) to (2,-1.5);
		\draw[dotted] (4,-3) to (4,5);
		\node at (0, -0.5) {\tiny{$\alpha$}};
		\node at (0.2,1) {\tiny{$\gamma$}};
		\node at (-3.7,5.7) {\tiny{$\mathcal{D}^1_\alpha$}};
		\node at (-3.8,5) {\tiny{$\mathsf{s}_{\alpha,a} = 1$}};
		\node at (5.5,5.7) {\tiny{$\mathcal{D}^2_\alpha$}};
		\node at (5.5,5) {\tiny{$\mathsf{s}_{\alpha,a} = a$}};
		\node at (0,5.7) {\tiny{$x_3$}};
		\node at (5.8, 0) {\tiny{$x_2$}};
		\node at (-2.65, -3.2) {\tiny{$x_1$}};
		\node at (3.4, -1.3) {\small{$P_\alpha$}};
		\draw[<-]([shift=(30:1cm)]-0.61,0) arc (60:90:0.5cm);
		\draw[->]([shift=(30:1cm)]-1,-0.65) arc (190:361:0.15cm);
		\node at (-0.1,7) {\small{$\boxed{\textcolor{white}{\sum}\hspace{-0.4cm}{a\in(0,1)}\hspace{0.15cm}}$}};
		\end{tikzpicture}
		\caption{ Let $\alpha \in (0,\pi)$, $\gamma\in[0,\pi/2]$ and $a\in[-1,1)\setminus\{0\}$. The magnetic field $\mathbf{B}_{ \alpha,\gamma,a}$ in $\R_+^3$ can have different directions in the two regions $\mathcal{D}_\alpha^1$ and $\mathcal{D}_\alpha^2$, according to the sign of $a$. The strength of the magnetic field is $\mathsf s_{ \alpha,a}=1$ in $\mathcal D^1_\alpha$ and $\mathsf s_{ \alpha,a}=a$ in $\mathcal D^2_\alpha$.  The transition of the strength occurs at the plane $P_\alpha$ of equation $x_1\sin\alpha-x_2\cos\alpha=0$, referred to as the discontinuity plane. The angle $\gamma$ (modulo $-\pi$) represents the angle that
		 $\mathbf{B}_{ \alpha,\gamma,a}$ makes with the $x_3$-axis.} 
		\label{fig:1}
	\end{figure}
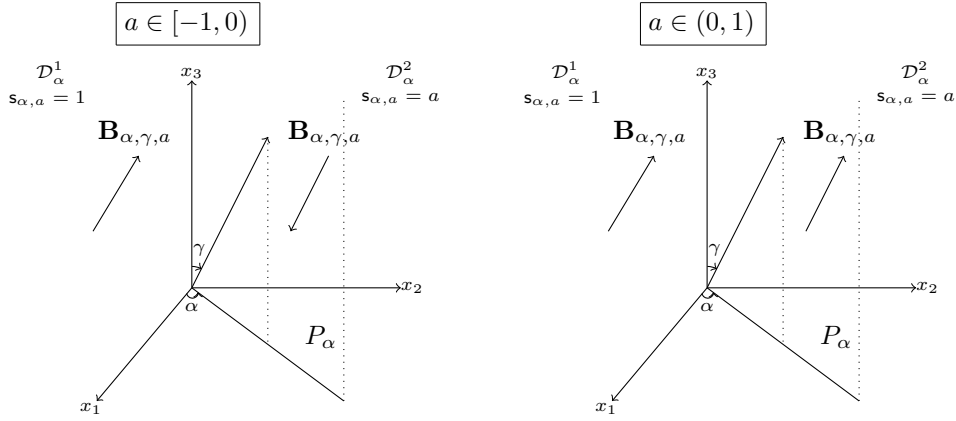
	\begin{multline}\label{eq:domLa++}
	\mathcal{D}(\mathcal L_{ \alpha,\gamma,a}) =\big\{u\in L^2(\R^3_+)~:~ (\nabla-i\Ab_{ \alpha,\gamma,a} )^n u \in L^2(\R^3_+),\\ \mathrm{for}\, n\in \{1,2\},(\nabla-i\Ab_{ \alpha,\gamma,a})u\cdot (0,1,0)|_{\partial \R^3_+}=0\big\}.
	\end{multline}
	In the present paper, we study the bottom of the spectrum of $\mathcal L_{ \alpha,\gamma,a}$  (see Section~\ref{sec:bot}), and give applications to this study in 3D bounded domains (see Section~\ref{sec: application}).

\subsection{Motivation}	\label{sec: motivation}
In the theory of superconductivity and in generic situations, a superconductor submitted to a sufficiently strong magnetic field  loses permanently its superconducting properties when the intensity of the magnetic field exceeds  a certain (unique) critical value---the so-called \textit{third critical field} denoted by $H_{C_3}$. We say that  the material passes to the normal state (see~\cite{saint1963onset,fournais2010spectral}).
	 	The Ginzburg--Landau (GL) model is used   to study  this phase transition from superconducting to normal states.
	 This is naturally a three-dimensional (3D) model, but it is usually reduced  to a two-dimensional (2D) one supposing that the superconductor is a long-cylindrical wire and that the direction of the magnetic  field  is perpendicular to the cross section of the wire (see~e.g.~\cite{SS}).  The 2D GL model was extensively used for both constant or smooth variable external magnetic fields in the case of domains with smooth boundary (see~e.g.~\cite{lu99eigen,helffer2001magnetic,fournais2010spectral,raymond2009sharp, raymond2013var} or domains with corners (see~\cite{BNF, bonnaillie2015ground}). Recently,~\cite{Assaad3,assaad2020magnetic} (see also~\cite{AssaadHearing21}) examined this phase transition for 2D GL models with piecewise-constant magnetic fields . Also, we refer to~\cite{CG1, CG2, CG3, CR1, CR2, HK} for the study of superconductivity right before the normal state.

	 Within this context, 3D models were  studied  in the mathematical literature for more general (bounded or unbounded) domains, not necessarily cylinders, subjected to constant  or smooth variable magnetic fields (see e.g.~\cite{lu2000surface,helffer2004magnetic,popoff2013schrodinger,popoff2015model, raymond2017}). Such studies involved a linear Schr\"odinger operator, $-(h\nabla - i\mathbf{A})^2$, defined  on an open and bounded set   $\Omega\subset \mathbb{R}^3$, with smooth boundary or having edges, where $\mathbf{A}\in H_{\mathrm{loc}}^1(\R^3)$  is a magnetic vector potential and $\mathrm{curl}\mathbf{A} = \mathbf{B}$ is the external magnetic field having a constant or a smooth variable strength. As the semiclassical parameter $h$ goes to $0$, the third critical field $H_{C_3}$ is estimated using the asymptotics of the first eigenvalue, $\lambda(\mathbf{B};\Omega, h)$, of this operator\footnote{Due to gauge invariance \cite[Section~1.1]{fournais2010spectral}, it is standard that the magnetic potential $\mathbf{A}$ contributes to the spectrum of $-(h\nabla - i\mathbf{A})^2$ only through its associated magnetic field $\mathbf{B}$, which justifies the  notation $\lambda(\mathbf{B};\Omega, h)$.}  (see e.g.~ \cite[Proposition 1.9]{fournais2006third},\cite{lu2000surface,giorgi2002breakdown,fournais2010spectral,Assaad3}).  
 Such  asymptotics of $\lambda(\mathbf{B};\Omega, h)$ are usually obtained by using a variational argument where local energies are studied in different zones of the superconductor (like the interior, the boundary, or near the edges). The local study involves effective Schr\"odinger operators  of the form $-(\nabla - i\mathbf{A})^2$, with  magnetic fields having a constant strength, defined on unbounded domains like $\R^3$, $\R^3_+$ or infinite wedges (see~\cite{lu2000surface,bonnaillie2015ground}). While the  operator on $\R^3$ is related to the study in the interior of $\Om$, that on $\R^3_+$ is related to the study at the smooth boundary  of $\Om$ and  depends on the angle between the magnetic field and the boundary $\R^3_+$. Moreover, the  operators on infinite wedges are considered (\cite{Pan2,popoff2013schrodinger,popoff2015model}) for the study near the edges of $\Om$ (when exist), and depend on both the direction of the magnetic field and the opening angle of the wedge. Studying the effective models permit to determine the eventual localization of superconductivity in $\Om$, before its breakdown.  We refer the reader to the introduction in~\cite{popoff2015model} for a brief explanation about the link between the original model on $\Om$ and the various effective models (see also~\cite{bonnaillie2015ground} for a more detailed explanation). 

 Back to the operator $\mathcal{L}_{\alpha, a,\gamma}$ defined in the present contribution, such a 3D operator with a \emph{discontinuous magnetic field} was not considered yet in the literature. We show that  $\mathcal{L}_{\alpha,\gamma,a}$ is a leading operator that plays an essential role in the study of semiclassical problems similar to the aforementioned ones in $\Om$ (see \cite[Section~8]{fournais2022tunneling} and \cite{Gia}), but in new situations where the magnetic field is piecewise-constant. As seen later in the paper, we consider the semiclassical problem associated to a piecewise constant magnetic field in $\Omega\subset\mathbb{R}^3$, provide asymptotics of its ground state energy,  and establish the concentration of its ground state in specific regions near the discontinuity surface, i.e., the surface at which the jump of the magnetic field occurs (see Section \ref{sec: application}).

\subsection{Main results}  We present the  main results of our paper in Section~\ref{sec:bot}, and applications to these results in Section~\ref{sec: application}.
\subsubsection{Bottom of the spectrum of $\mathcal{L}_{\alpha, \gamma, a}$ } \label{sec:bot}
We recall the operator $\mathcal{L}_{\alpha, \gamma, a}$ introduced in \eqref{eq:La+}
	\begin{equation}\label{eq:Lalfa}
		\mathcal{L}_{\alpha, \gamma, a} = - (\nabla - i \mathbf{A}_{\alpha, \gamma, a})^2, \qquad \mathrm{in}\, \mathbb{R}^3_+,
	\end{equation}
	with the domain $\mathcal{D}(\mathcal{L}_{\alpha, \gamma, a})$ defined in \eqref{eq:domLa++}. We consider the bottom of the spectrum of this operator
	\begin{equation}\label{eq: def lambda}\lambda_{\alpha, \gamma, a} := \inf\mathrm{sp}(\mathcal{L}_{\alpha, \gamma , a}).\end{equation} 
	
	Using a Fourier transform, the operator $\mathcal L_{ \alpha,\gamma,a}$ can be decomposed into a family of 2D operators on $\R^2_+$, $\mathcal L_{\underline\Ab_{\alpha,\gamma,a}}+V_{\underline{\mathbf{ B}}_{\alpha,\gamma,a}},\tau$, parametrized by $\tau\in\R$, and defined on $\mathbb{R}^2_+$ as follows
	\begin{equation}\label{eq:LV}
		\mathcal L_{\underline\Ab_{\alpha,\gamma,a}}+V_{\underline{\mathbf{ B}}_{\alpha,\gamma,a},\tau}= -(\nabla-i\underline \Ab_{\alpha,\gamma,a})^2+V_{\underline{\mathbf{ B}}_{\alpha,\gamma,a},\tau},\end{equation} 
	where 
	$\underline{\Ab}_{\alpha, \gamma, a}$  is a vector potential, defined in~\eqref{eq:Abar}, representing a projection of the vector potential $\mathbf{A}_{\alpha,\gamma, a}$ in~\eqref{eq:La+} on $\mathbb{R}^2_+$, $\underline{\mathbf{B}}_{\alpha, \gamma, a}= (\underline b_1,\underline b_2)$ is a magnetic field, defined in~\eqref{eq:bunder}, projecting the field $\underline{\mathbf{B}}_{\alpha, \gamma, a}$ in~\eqref{eq:Bs} on $\mathbb{R}^2_+$, and $V_{\underline{\mathbf{ B}}_{\alpha,\gamma,a},\tau}=\big(x_1\underline b_2-x_2\underline b_1-\tau\big)^2$ is an electric potential defined in~\eqref{eq:V0}.
	The bottom of the spectrum of $\mathcal L_{\underline\Ab_{\alpha,\gamma,a}}+V_{\underline{\mathbf{ B}}_{\alpha,\gamma,a},\tau}$ is denoted by $\underline\sigma(\alpha,\gamma,a,\tau)$, which highlights its dependence on the parameters $\alpha,\gamma,a$, and $\tau$. Having (see~\eqref{eq:l3})
	
	\[\lambda_{\alpha, \gamma,a} =\inf_{\tau\in\R} \underline\sigma(\alpha,\gamma,a,\tau),\]
	the examination of $\lambda_{\alpha, \gamma,a}$ reduces to that of the function $\tau\mapsto\underline\sigma(\alpha,\gamma,a,\tau)$. This examination leads to an important comparison between $\lambda_{\alpha, \gamma,a}$ and other well-known spectral values, $\beta_a$ and $\zeta_{\nu_0}$, where $a\in [-1, 1)\setminus \{0\}$ is the parameter appearing in the definition of $\mathcal{L}_{\alpha, \gamma, a}$ and $\nu_0 :=\arcsin(\sin\alpha\sin\gamma)$. The value $\beta_a$ is the bottom of the spectrum of a Schr\"odinger operator defined on $\R^3$ in \eqref{eq:L3}, with a piecewise-constant magnetic field (splitting $\R^3$ in two half-spaces, the strength of the field takes the values $1$ and $a$  in these half-spaces, respectively). The value $\zeta_{\nu_0}$ is the bottom of the spectrum of a magnetic Neumann Schr\"odinger operator defined on $\mathbb{R}_+^3$ in ~\eqref{eq:O2}, with a constant magnetic field making an angle $\nu_0$ with the $(x_1x_3)$ plane.

\begin{theorem}\label{thm:main}
	Let $a\in[-1,1)\setminus\{0\}$, $\alpha\in(0,\pi)$, $\gamma\in[0,\pi/2]$, and $\nu_0 =\arcsin(\sin\alpha\sin\gamma)$. Let $\lambda_{\alpha, \gamma, a}$ be the bottom of the spectrum of the operator $\mathcal{L}_{\alpha, \gamma, a}$ defined in \eqref{eq:La+}. It holds 
	 \begin{equation}\label{eq:lb1}\lambda_{\alpha, \gamma,a}\leq\min\big(\beta_a,|a|\zeta_{\nu_0}\big),
	 \end{equation}
	where $\beta_a$ and $\zeta_{\nu_0}$ are respectively the bottom of the spectrum of the operators defined in~\eqref{eq:L3} and~\eqref{eq:O2}.
	
	Furthermore, if\begin{equation}\label{eq:lb}
			\lambda_{\alpha, \gamma,a}<\min\big(\beta_a,|a|\zeta_{\nu_0}\big),
		\end{equation}
	then  there exists $\tau_*\in\R$ such that
	\[\lambda_{\alpha, \gamma,a}=\underline\sigma(\alpha,\gamma,a,\tau_*)\] 
	and $\underline\sigma(\alpha,\gamma,a,\tau_*)$ is an eigenvalue of the operator $\mathcal L_{\underline\Ab_{\alpha,\gamma,a}}+V_{\underline{\mathbf{ B}}_{\alpha,\gamma,a}},\tau_*$ defined in \eqref{eq:LV}.
\end{theorem}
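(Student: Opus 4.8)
The plan is to exploit the fibered structure already recorded in the excerpt: after a partial Fourier transform in the variable along the discontinuity plane, $\mathcal{L}_{\alpha,\gamma,a}$ decomposes as a direct integral $\int_{\R}^{\oplus}\big(\mathcal L_{\underline\Ab_{\alpha,\gamma,a}}+V_{\underline{\mathbf{ B}}_{\alpha,\gamma,a},\tau}\big)\,d\tau$, so that $\lambda_{\alpha,\gamma,a}=\inf_{\tau\in\R}\underline\sigma(\alpha,\gamma,a,\tau)$. The first inequality \eqref{eq:lb1} will come from constructing quasimodes: near a point of the discontinuity plane the operator looks locally like the $\R^3$-model with piecewise-constant field (giving the bound $\beta_a$), while far from the discontinuity plane, inside the region $\mathcal D^2_\alpha$, it looks like a constant-field half-space operator with field of strength $|a|$ making angle $\nu_0$ with the boundary plane (giving the bound $|a|\zeta_{\nu_0}$, after rescaling the field strength). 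Both bounds follow by inserting suitably translated and cut-off ground-state quasimodes of the model operators into the Rayleigh quotient for $\mathcal L_{\alpha,\gamma,a}$.

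\textbf{The decisive point} is the second statement: the strict inequality \eqref{eq:lb} forces the infimum over $\tau$ to be attained. The standard mechanism here is a Persson-type / concentration-compactness argument for the fiber operators. First I would show that $\tau\mapsto\underline\sigma(\alpha,\gamma,a,\tau)$ is continuous and that its limits as $\tau\to\pm\infty$ are bounded below by $\min(\beta_a,|a|\zeta_{\nu_0})$ — intuitively, as $|\tau|\to\infty$ the electric potential $V_{\underline{\mathbf{ B}}_{\alpha,\gamma,a},\tau}=(x_1\underline b_2 - x_2\underline b_1 - \tau)^2$ pushes the effective potential well out to infinity along the discontinuity line, and the mass of any would-be minimizer escapes either into a region where the local model is the $\R^3$-field-jump operator (energy $\ge\beta_a$) or into the bulk of $\mathcal D^2_\alpha$ (energy $\ge|a|\zeta_{\nu_0}$). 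Granting this, the function $\tau\mapsto\underline\sigma$ is continuous, bounded below, and under \eqref{eq:lb} takes a value strictly below its liminf at $\pm\infty$; hence it attains its infimum at some $\tau_*\in\R$, and $\lambda_{\alpha,\gamma,a}=\underline\sigma(\alpha,\gamma,a,\tau_*)$.

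\textbf{From minimizing fiber to eigenvalue.} It remains to upgrade ``$\underline\sigma(\alpha,\gamma,a,\tau_*)$ is the bottom of the spectrum of the fiber operator'' to ``it is an eigenvalue.'' For this I would again invoke a Persson-type characterization of the essential spectrum of the two-dimensional operator $\mathcal L_{\underline\Ab_{\alpha,\gamma,a}}+V_{\underline{\mathbf{ B}}_{\alpha,\gamma,a},\tau_*}$ on $\R^2_+$: its essential spectrum starts at $\inf_{R>0}\inf\operatorname{Spec}\big(\text{operator restricted to }\{|x|>R\}\big)$, and a localization estimate identifies this threshold with $\min(\beta_a,|a|\zeta_{\nu_0})$ (the only two ``channels at infinity'' are the constant-field region inside $\mathcal D^2_\alpha$, and the neighborhoods of the discontinuity line at spatial infinity, whose bottom spectral values are exactly $|a|\zeta_{\nu_0}$ and $\beta_a$ respectively). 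Since $\underline\sigma(\alpha,\gamma,a,\tau_*)=\lambda_{\alpha,\gamma,a}<\min(\beta_a,|a|\zeta_{\nu_0})$ by \eqref{eq:lb}, the bottom of the spectrum of the fiber operator lies strictly below its essential spectrum, hence is a discrete eigenvalue of finite multiplicity, which is precisely the claim.

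\textbf{Main obstacle.} The hard part is the rigorous control of the fiber operators at infinity — both the $|\tau|\to\infty$ asymptotics of $\underline\sigma(\alpha,\gamma,a,\tau)$ and the computation of the bottom of the essential spectrum of the fiber operator at $\tau_*$. Both require a careful IMS-type partition of $\R^2_+$ adapted to the geometry (a neighborhood of the projected discontinuity line versus its complement), together with matching of the local models to the reference operators in \eqref{eq:L3} and \eqref{eq:O2} via explicit changes of gauge and coordinates; the angle-dependence entering through $\nu_0=\arcsin(\sin\alpha\sin\gamma)$ and the field-strength rescaling producing the factor $|a|$ have to be tracked precisely. Everything else — continuity in $\tau$, the variational quasimode constructions for \eqref{eq:lb1}, and the final ``below essential spectrum $\Rightarrow$ eigenvalue'' conclusion — is standard once these two estimates are in place.
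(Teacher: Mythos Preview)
Your overall strategy matches the paper's: fiber decomposition, upper bound via quasimodes, attainment of $\inf_\tau$ via behavior at $\tau\to\pm\infty$, then eigenvalue via Persson below the essential threshold. However, your identification of the ``channels at infinity'' for the fiber operator is wrong, and this matters for understanding why the two quantities $\beta_a$ and $|a|\zeta_{\nu_0}$ each enter.

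For the essential spectrum of the fiber operator $\mathcal L_{\underline\Ab_{\alpha,\gamma,a}}+V_{\underline{\mathbf B}_{\alpha,\gamma,a},\tau_*}$ at fixed $\tau_*$, you claim two channels: the discontinuity line (contributing $\beta_a$) and the constant-field bulk of $D^2_\alpha$ (contributing $|a|\zeta_{\nu_0}$). The second is false. The electric potential $V_{\underline{\mathbf B}_{\alpha,\gamma,a},\tau_*}(x)=(\underline s_{\alpha,a}(x_1\sin\alpha-x_2\cos\alpha)\sin\gamma-\tau_*)^2$ vanishes only on a line parallel to $l_\alpha$; away from that line it grows quadratically in $|x|$. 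In the angular sectors near the boundary but away from $l_\alpha$, the potential therefore tends to $+\infty$, and those sectors contribute $+\infty$ to the Persson threshold, not $|a|\zeta_{\nu_0}$. The paper's Proposition~\ref{prop:ess} makes this precise: the \emph{only} channel at spatial infinity is along the discontinuity line, yielding $\underline\sigma_{ess}(\alpha,\gamma,a,\tau)=\inf_\xi\big(\mu_a(\tau\sin\gamma+\xi\cos\gamma)+(\xi\sin\gamma-\tau\cos\gamma)^2\big)\ge\beta_a$ (Corollary~\ref{cor:comp}).

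The quantity $|a|\zeta_{\nu_0}$ enters through a \emph{different} limit: it is $\lim_{\tau\to+\infty}\underline\sigma(\alpha,\gamma,a,\tau)$ (Proposition~\ref{prop:sp-lim}), obtained by translating along the boundary (not by $|x|\to\infty$ at fixed $\tau$). Thus the paper splits the hypothesis $\lambda_{\alpha,\gamma,a}<\min(\beta_a,|a|\zeta_{\nu_0})$ into two separate uses: $\lambda_{\alpha,\gamma,a}<|a|\zeta_{\nu_0}$ forces the infimum over $\tau$ to be attained at some $\tau_*$, and then $\lambda_{\alpha,\gamma,a}<\beta_a\le\underline\sigma_{ess}(\alpha,\gamma,a,\tau_*)$ forces $\underline\sigma(\alpha,\gamma,a,\tau_*)$ to be an eigenvalue. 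Your proposal collapses both roles into the single bound $\min(\beta_a,|a|\zeta_{\nu_0})$; the conclusions you need still follow (since $\min(\beta_a,|a|\zeta_{\nu_0})$ is below both relevant thresholds), but if you actually tried to prove the essential-spectrum identification you state, you would fail, and your geometric picture of why it holds is incorrect.
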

\begin{rem}[The choice of $a\in [-1,1)\setminus \{0\}$]\label{rem: a=0}
	One can choose any two distinct real values $b_1$ and $b_2$ for the strength of the magnetic field  $\mathbf{B}_{ \alpha,\gamma,a}$ respectively in $\mathcal D^1_\alpha$ and $\mathcal D^2_\alpha$. However, by a simple scaling argument, one can reduce the study to the case $b_1=1$ and $b_2=a$, where $a$ is a value in $[-1,1)$. 

In the case $a=0$, the energy $\beta_a$, appearing in Theorem~\ref{thm:main}, is equal to zero (see~\cite{hislop2016band}). Hence, the comparison between the three energies $\lambda_{\alpha, \gamma,a}$, $\beta_a$, and $|a|\zeta_{\nu_0}$ is trivial:
		\[\lambda_{\alpha, \gamma,a}\geq\min\big(\beta_a,|a|\zeta_{\nu_0}\big)=0.\]
		Moreover, our proof technically relies on the assumption $a\neq0$ in many places, for instance when using translations to link our problem to the toy models in Section~\ref{sec: known op}, which have  well-explored spectra. We exclude the case $a=0$ from our study. 
\end{rem}
\begin{rem}[On the semiclassical problem]\label{rem: lambda omega}
		As mentioned earlier, the operator $\mathcal L_{ \alpha,\gamma,a}$ will be used in studying a semiclassical problem on a smooth and bounded domain $\Om$ of $\R^3$, subjected to a piecewise-constant magnetic field. This semiclassical problem is introduced in Section~\ref{sec: application}. When the bottom of the spectrum, $\lambda_{\alpha, \gamma,a}$, of  $\mathcal L_{ \alpha,\gamma,a}$ is an eigenvalue of a certain $\mathcal L_{\underline\Ab_{\alpha,\gamma,a}}+V_{\underline{\mathbf{ B}}_{\alpha,\gamma,a}},\tau_*$ , one can use its  corresponding eigenfunction to construct  a trial function in $\Om$, supported near some point(s) of the discontinuity region of $\partial\Om$ corresponding to $(\alpha, \gamma,a)$,  which yields a desired upper bound in the asymptotic estimates of the semiclassical ground state in $\Om$ (see the proof of Proposition \ref{pro: upper bound lambda b}).  This motivates our interest in setting the condition~\eqref{eq:lb} in Theorem~\ref{thm:main}.
\end{rem}
In Theorem~\ref{thm:main}, we gave sufficient conditions for $\lambda_{\alpha, \gamma,a}$ to be an eigenvalue of the operator $\mathcal L_{\underline\Ab_{\alpha,\gamma,a}}+V_{\underline{\mathbf{ B}}_{\alpha,\gamma,a},\tau_*}$ in~\eqref{eq:LV-2}, for a certain $\tau_*\in\R$. Our next result  provides a condition on  $(\alpha,\gamma,a)$ such that~\eqref{eq:lb} is realized.
	\begin{proposition}\label{prop:exn}
		Let $a\in[-1,1)\setminus\{0\}$,  $\alpha\in(0,\pi)$, $\gamma\in[0,\pi/2]$, and $\nu_0 =\arcsin(\sin\alpha\sin\gamma)$. Consider the function $P[\alpha,\gamma,a]:(0,+\infty) \rightarrow \R$ defined by 
		\begin{equation}\label{eq: P}
			P[\alpha,\gamma,a](x) =  A[\alpha, \gamma, a]x^2-\frac \pi 2\Lambda[\alpha,\gamma,a] x+\frac\pi 2,\end{equation}
		with
		\begin{eqnarray}\label{eq: def A}
				A[\alpha, \gamma, a] &:=& \frac 1{128} (-1+\coth\pi)\Big\{\pi\cos^2\gamma\Big[4(a-1)((a-e^\pi)e^{\pi -\alpha} + (ae^{\pi} - 1)e^\alpha)
				\\
				&&-(a-1)^2(e^{2\pi-2\alpha} + e^{2\alpha})-2e^\pi\big(-4a+(3-2a+3a^2)\cosh\pi\big)\Big]\nonumber
				\\
				&&+4(e^{2\pi}-1)\Big[-\big(a^2(\pi-\alpha)+\alpha\big)\big(-3+\cos(2\gamma)\big)+2(a^2-1)\sin^2\gamma\sin(2\alpha)\Big]\Big\}\nonumber
		\end{eqnarray}
		and
		\begin{equation}\label{eq:lmda}
			\Lambda[\alpha,\gamma,a]:=\min(\beta_a,|a|\zeta_{\nu_0}),
		\end{equation}
		where $\beta_a$ and $\zeta_{\nu_0}$ are respectively the bottom of the spectrum of the operators defined in~\eqref{eq:L3} and~\eqref{eq:O2}.
		If there exists $x=x(\alpha,\gamma,a)>0$ such that $P[\alpha,\gamma,a](x)<0$, then $\inf_{\tau} \underline\sigma(\alpha,\gamma,a,\tau)$ is attained in $\R$, i.e. there exists $\tau_*\in\R$ satisfying
		\[\inf_{\tau} \underline\sigma(\alpha,\gamma,a,\tau)=\underline\sigma(\alpha,\gamma,a,\tau_*).\] 
		Moreover, $\underline\sigma(\alpha,\gamma,a,\tau_*)$ is an eigenvalue of the operator $\mathcal L_{\underline\Ab_{\alpha,\gamma,a}}+V_{\underline{\mathbf{ B}}_{\alpha,\gamma,a},\tau_*}$ defined in \eqref{eq:LV-2}.
	\end{proposition}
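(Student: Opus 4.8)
The plan is to reduce the statement to the variational characterization $\lambda_{\alpha,\gamma,a}=\inf_\tau \underline\sigma(\alpha,\gamma,a,\tau)$ and to Theorem~\ref{thm:main}: once we exhibit a single $x>0$ with $P[\alpha,\gamma,a](x)<0$ we want to conclude that the strict inequality~\eqref{eq:lb} holds, i.e.\ $\lambda_{\alpha,\gamma,a}<\Lambda[\alpha,\gamma,a]=\min(\beta_a,|a|\zeta_{\nu_0})$, and then Theorem~\ref{thm:main} immediately gives both the existence of a minimizer $\tau_*$ and the fact that $\underline\sigma(\alpha,\gamma,a,\tau_*)$ is an eigenvalue of $\mathcal L_{\underline\Ab_{\alpha,\gamma,a}}+V_{\underline{\mathbf{B}}_{\alpha,\gamma,a},\tau_*}$. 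So the whole content of the proof is producing an explicit test state whose Rayleigh quotient is governed by the quadratic polynomial $P[\alpha,\gamma,a]$.

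First I would construct, for each $x>0$, a normalized trial function $u_x$ for the two-dimensional operator $\mathcal L_{\underline\Ab_{\alpha,\gamma,a}}+V_{\underline{\mathbf{B}}_{\alpha,\gamma,a},\tau}$ on $\mathbb{R}^2_+$, or equivalently for the three-dimensional operator $\mathcal L_{\alpha,\gamma,a}$ after undoing the partial Fourier transform; the natural candidate is a Gaussian-type profile (a product of a Gaussian in the tangential variable coming from the effective harmonic oscillator and an exponentially decaying function in the normal variable, with decay rate proportional to $x$), with the parameter $x$ playing the role of an inverse length scale to be optimized. The reason $\coth\pi$, $e^{\pi-\alpha}$, $e^{\alpha}$, $\cosh\pi$ etc.\ appear in the coefficient $A[\alpha,\gamma,a]$ is that the boundary of the relevant wedge/half-plane sits at angular coordinates $\theta=0$ and $\theta=\pi$ (from $\phi\in(0,\pi)$) while the discontinuity is at $\theta=\alpha$, so integrating the Gaussian profile against the piecewise-constant magnetic potential over the two angular sectors $(0,\alpha)$ and $(\alpha,\pi)$ produces exactly such exponential integrals; the factor $\tfrac{\pi}{2}$ and the linear term $-\tfrac{\pi}{2}\Lambda[\alpha,\gamma,a]x$ come from subtracting the comparison energy $\Lambda$ times the $L^2$-norm, again computed against the same Gaussian. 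Concretely I would: (i) write the quadratic form $q(u_x)-\Lambda\|u_x\|^2$; (ii) scale out $x$ so that $q(u_x)=a_2 x^2 + \dots$ with $a_2=A[\alpha,\gamma,a]$ after carrying out the (lengthy but routine) Gaussian integrals over $\rho\in(0,\infty)$ and $\theta\in(0,\pi)$; (iii) read off that $q(u_x)-\Lambda\|u_x\|^2 = \tfrac{2}{\pi}\,P[\alpha,\gamma,a](x)\cdot\|u_x\|^2$ (up to the normalization constant), so that $P[\alpha,\gamma,a](x)<0$ forces the Rayleigh quotient of $u_x$ below $\Lambda$.

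Then I would argue: by the min–max principle applied to $\mathcal L_{\alpha,\gamma,a}$ (or to the fibered operator at the optimal $\tau$), $\lambda_{\alpha,\gamma,a}\le q(u_x)/\|u_x\|^2 < \Lambda[\alpha,\gamma,a]=\min(\beta_a,|a|\zeta_{\nu_0})$, which is exactly hypothesis~\eqref{eq:lb} of Theorem~\ref{thm:main}. Invoking that theorem yields the existence of $\tau_*\in\R$ with $\lambda_{\alpha,\gamma,a}=\inf_\tau\underline\sigma(\alpha,\gamma,a,\tau)=\underline\sigma(\alpha,\gamma,a,\tau_*)$ and that this value is an eigenvalue of $\mathcal L_{\underline\Ab_{\alpha,\gamma,a}}+V_{\underline{\mathbf{B}}_{\alpha,\gamma,a},\tau_*}$, finishing the proof. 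The main obstacle is step (ii): choosing the trial function cleverly enough — in particular aligning it with the natural coordinates adapted to the discontinuity plane $P_\alpha$ and to the boundary $\partial\R^3_+$, and picking the right $\tau$ in the fiber so that cross terms vanish — so that the Gaussian integrals collapse to the closed-form coefficient $A[\alpha,\gamma,a]$ in~\eqref{eq: def A}; getting every constant (the $\tfrac{1}{128}$, the $(-1+\coth\pi)$ prefactor, the interplay of $\cos^2\gamma$, $\sin^2\gamma\sin(2\alpha)$ and the $a^2(\pi-\alpha)+\alpha$ terms) to match requires careful bookkeeping, and verifying that the error terms are genuinely absorbed into the stated quadratic (i.e.\ that there is no hidden $x^0$ or $x^1$ contribution beyond $\tfrac{\pi}{2}$ and $-\tfrac{\pi}{2}\Lambda x$) is where the real work lies.
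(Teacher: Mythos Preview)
Your high-level strategy is exactly the paper's: build a trial state whose Rayleigh quotient minus $\Lambda$ equals (a positive multiple of) $P[\alpha,\gamma,a](x)$, so that $P(x)<0$ forces $\lambda_{\alpha,\gamma,a}<\Lambda$, and then invoke Theorem~\ref{thm:main}. That part is fine.

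The gap is in the trial function itself, and it is not just bookkeeping. The ansatz you describe---``a product of a Gaussian in the tangential variable and an exponentially decaying function in the normal variable''---is not the one that produces the polynomial $P$ with the stated coefficient $A[\alpha,\gamma,a]$. What the paper actually does is work in the fiber $\tau=0$, pass to a symmetric gauge $\breve\Ab=\tfrac12(-x_2,x_1)\underline{\mathsf s}_{\alpha,a}\cos\gamma$, write the quadratic form in polar coordinates $(\rho,\theta)$ on the sector $(-\pi+\alpha,\alpha)$, and take
\[
u_0(\rho,\theta)=e^{-\omega\rho^2/2}\,e^{-i\rho g(\theta)},\qquad x=\tfrac1\omega,
\]
i.e.\ an isotropic radial Gaussian multiplied by an \emph{angular phase}. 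The function $g$ is taken piecewise of the form $c_1e^\theta+c_2e^{-\theta}$ on $(-\pi+\alpha,0)$ and $c_3e^\theta+c_4e^{-\theta}$ on $(0,\alpha)$, with continuity at $0$. After the $\rho$-integrals (which are the standard $\int_0^\infty\rho^n e^{-\omega\rho^2}d\rho$ and give the powers of $x=1/\omega$), the functional $\mathcal J[u_0]=\tilde Q_{\mathrm{pol}}(u_0)-\Lambda\|u_0\|^2$ becomes \emph{quadratic in the free coefficients $c_1,c_2,c_3$}. The specific expression for $A[\alpha,\gamma,a]$---with the $e^{\pi-\alpha}$, $e^\alpha$, $\coth\pi$, etc.---arises only after \emph{minimizing} that quadratic over $(c_1,c_2,c_3)$ and substituting the optimal values back in; those exponentials come from evaluating $e^{\pm\theta}$ at the endpoints $\theta=-\pi+\alpha,0,\alpha$, not from integrating a Gaussian against the step potential as you suggest. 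Without the phase factor $e^{-i\rho g(\theta)}$ there is no mechanism to cancel the magnetic term $\tfrac1{\rho^2}|(\partial_\theta-i\underline{\mathsf s}\tfrac{\rho^2}2\cos\gamma)u|^2$, and without the subsequent optimization over the $c_i$ you will not land on the coefficient~\eqref{eq: def A}. So the plan as written would not reach the stated $P[\alpha,\gamma,a]$; you need the specific radial-Gaussian-times-angular-phase ansatz and the quadratic minimization in the phase coefficients.
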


\begin{rem}[Admissible triplets $(\alpha, \gamma, a)$]\label{rem: admissible triple} In Section ~\ref{sec: known op}, we provide the following lower bound for the value $\Lambda[\alpha,\gamma,a]$  in~\eqref{eq:lmda}
\[\Lambda[\alpha,\gamma,a] \geq|a|\Theta_0,\]  where $\Theta_0$ is the de Gennes constant defined in~\eqref{eq:teta0}. Moreover,\cite{bonnaillie2012harmonic} gives an explicit lower bound , $\Theta_0^{\mathrm{low}}$, of $\Theta_0$ equal to $0.590106125-10^{-9}$. Hence, if one defines
\[
P_{\Theta_0^{\mathrm{low}}}[\alpha,\gamma,a](x) := A[\alpha, \gamma, a]x^2-\frac \pi 2|a|\Theta_0^\mathrm{low} x+\frac\pi 2,\]
for $x>0$ and $A[\alpha, \gamma, a]$  as in \eqref{eq: def A}, one observes that $P[\alpha,\gamma,a](x)\leq P_{\Theta_0^{\mathrm{low}}}[\alpha,\gamma,a](x) $. By computation, we get that for all $a\in[-1,1)\setminus\{0\}$,  $\alpha\in(0,\pi)$ and $\gamma\in[0,\pi/2]$, $P_{\Theta_0^{\mathrm{low}}}[\alpha,\gamma,a](x)$ admits a minimum $\underline{x}>0$. Using Mathematica, we plot the region of triplets $(\alpha,\gamma,a)$ satisfying 
\[
	\min_{x>0}P_{\Theta_0^{\mathrm{low}}}[\alpha,\gamma,a](x)=P_{\Theta_0^{\mathrm{low}}}[\alpha,\gamma,a](\underline{x})<0.
\]
 These triplets are represented by the colored region in Figure~\ref{fig:Exn}. Consequently, the corresponding $\lambda_{\alpha, \gamma,a}=\inf_{\tau} \underline\sigma(\alpha,\gamma,a,\tau)$ is equal to  $\underline\sigma(\alpha,\gamma,a,\tau_*)$, for a certain  $\tau_*=\tau_*(\alpha,\gamma,a)\in\R$. Furthermore, $\underline\sigma(\alpha,\gamma,a,\tau_*)$ is   
an eigenvalue of the corresponding operator $\mathcal L_{\underline\Ab_{\alpha,\gamma,a}}+V_{\underline{\mathbf{B}}_{\alpha,\gamma,a},\tau_*}$ defined in \eqref{eq:LV-2}.
\end{rem}
\begin{figure}
	\centering
	\includegraphics[scale=0.55]{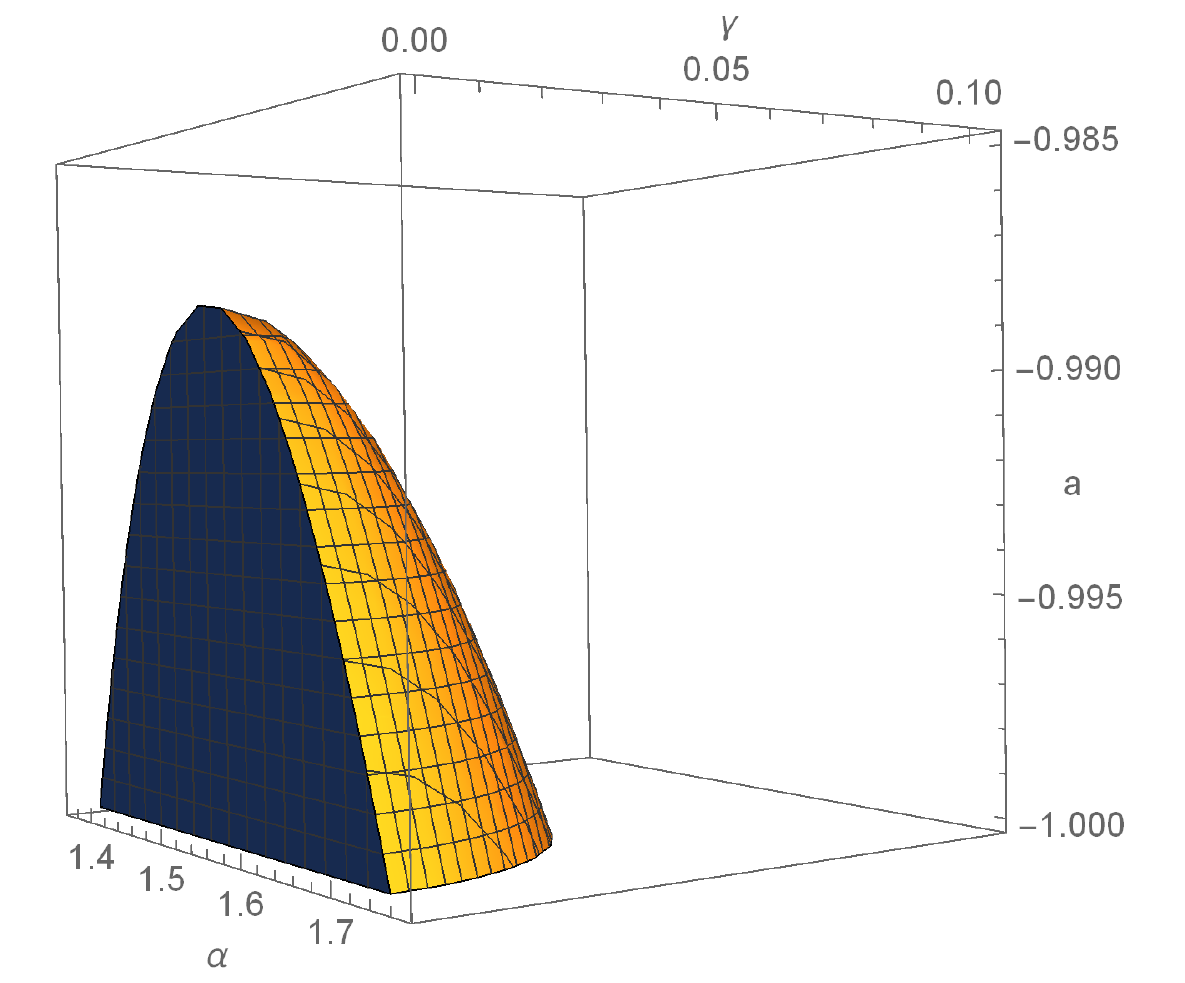}
	\caption{For the triplets $(\alpha,\gamma,a)$ in the colored region, $\lambda_{\alpha, \gamma,a}$ is an eigenvalue of an operator $\mathcal L_{\underline\Ab_{\alpha,\gamma,a}}+V_{\underline{\mathbf{B}}_{\alpha,\gamma,a},\tau_*}$.}
	\label{fig:Exn}
\end{figure}

\subsubsection{Applications: a semiclassical problem in a 3D bounded domain}\label{sec: application}
Let $\Omega\subset\mathbb{R}^3$ be an open bounded and simply connected set, with a smooth boundary. Let $\mathcal C$ be a simple smooth curve in the $(x_1x_2)$ plane, with an infinite length. We define $S$ as the intersection between $\mathcal C\times \R$ and $\Omega$:
\[S:=(\mathcal C\times \R)\cap \Omega.\]
We assume that $S$ cuts $\Omega$ into two disjoint non-empty open sets $\Omega_1$ and $\Omega_2$:
\[\Omega=\Omega_1\cup\Omega_2\cup S.\]

Let $a\in [-1,1) \setminus \{0\}$, we define a piecewise constant magnetic field $\mathbf{B}$ in $\Omega$ as follows:
\begin{equation}\label{eq: magn field applic}
	\mathbf{B}(x) = \mathbf{s}(x)(0,0,1), \qquad \mathbf{s} = \mathbbm{1}_{\Omega_1} + a \mathbbm{1}_{\Omega_2}.
\end{equation}
Note that the magnetic field $\mathbf{B}$ is tangent to $S$.  Moreover, its strength $|\mathbf{B}|$ exhibits a  discontinuity jump at $S$. We will refer to $S$ as the \textit{discontinuity surface}. Moreover, we denote by $\Gamma$  the boundary of $S$ 
\[\Gamma:=\partial S=(\mathcal C\times \R)\cap \partial \Omega.\]
We refer to $\Gamma$ as the \textit{discontinuity curve} (see~Figure~\ref{fig:Omega}).

\begin{figure}
		\centering\begin{tikzpicture}[scale= 0.7]
		 \draw[preaction={draw}]
    plot[smooth cycle]
    coordinates{
      (0,2) (0,5,3) (0.8,6.5) (1.8, 7.5) (2.5,7.22) (3.5,6.5) (5, 6.5) (6,5) (6,2) (6,0) (5, -2) (2, -2.3) (0,0)
    };
     \draw[orange, preaction={draw, pattern=north east lines, pattern color = teal}]
    plot[smooth cycle]
    coordinates{
      (2.5,7.2) (3.8,4.5) (4,3) (4.7,1) (5, -2) (3,-1) (2.5,0) (2,1) (2,3) (1.9,4)
    };
 	\draw[->] (2.8,3) to (2.6,5);
 	\node at (3.2, 4) {\tiny{$\mathbf{B}$}};
    \draw[dotted] (0,2) to [bend left = 10] (6,2);
    \draw[dotted] (0,2) to [bend right = 10] (6,2);
    \node at (4.3, -1) {\tiny{$S$}};
    \node at (3.7, 6) {\tiny{$\Gamma$}};
    \node at (0.5, 0.5) {\tiny{$\Omega$}};
\end{tikzpicture}
\caption{ Illustration of the domain $\Omega$, the discontinuity surface $S$ (shaded), and the discontinuity edge $\Gamma:=\partial S$. The magnetic field $\mathbf{B}$ is tangent to $S$ and its strength exhibits a jump of discontinuity at $S$.} 
	\label{fig:Omega}
\end{figure}
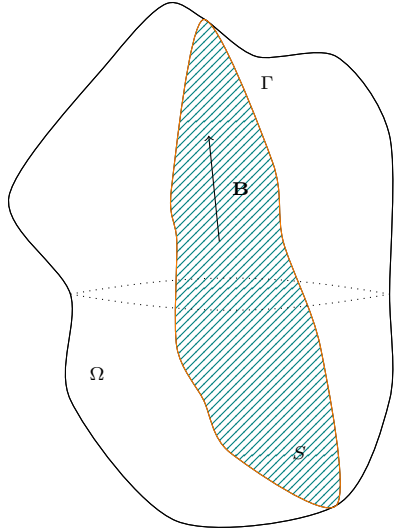

Let  $\mathbf{F}\in H^{1}(\Omega, \mathbb{R}^3)$ be a vector potential  satisfying $\mathrm{curl}\mathbf{F} = \mathbf{B}$ (see~\cite[Theorem D.3.1]{fournais2010spectral}).
Let $\mathfrak{b}>0$, we consider the Neumann realization of the following self-adjoint operator in $\Omega$:
\begin{equation}\label{eq: def semicl op}
	\mathcal{P}_{\mathfrak{b}, \mathbf{F}}= - (\nabla - i \mathfrak{b}\mathbf{F})^2.
\end{equation}
The domain of $\mathcal{P}_{\mathfrak{b}, \mathbf{F}}$ is 
\begin{equation}\label{eq: dom P}
	 \mathcal{D}( \mathcal{P}_{\mathfrak{b}, \mathbf{F}}):=\{  u \in L^2(\Omega) \,:\,  (\nabla-i\mathfrak{b} F)^ju\in L^2(\Omega) \,\,\, \mbox{for}\,\,\, j =1,2,\quad (\nabla - i \mathfrak{b}\mathbf{F}) u \cdot n\vert_{\partial \Omega} = 0 \},
\end{equation}
where $n$ is the inward unit normal vector of $\partial \Omega$. We denote by $\lambda(\mathfrak{b})$ the bottom of the spectrum (the lowest eigenvalue or the ground state energy) of $\mathcal{P}_{\mathfrak{b}, \mathbf{F}}$.

We carry out our analysis in the asymptotic regime\footnote{Taking $\mathfrak b\rightarrow+\infty$ in our problem is equivalent to taking $h\rightarrow  0$ in the semiclassical problems aforementioned in Section~\ref{sec: motivation}. } $\mathfrak b\rightarrow+\infty$. The result of this section (Theorem~\ref{thm: localization} below) is established under the following assumption:
\begin{assumption}\label{asu} For any point $x\in\Gamma$, we denote by $\alpha_x$ the angle between the tangent plane of $\partial\Omega$  and the discontinuity surface $S$ at $x$ (taken towards $\Omega_1$). We also denote by $\gamma_x$ the angle between the magnetic field $\mathbf{B}$ and the discontinuity edge $\Gamma$ at $x$. We assume that there exists $\overline{x}\in\Gamma$ such that 
\begin{equation*} 
	\lambda_{\alpha_{\overline{x}},\gamma_{\overline{x}}, a}<|a|\Theta_0,
\end{equation*}
	where $\lambda_{\alpha_{\overline{x}},\gamma_{\overline{x}}, a}$ is defined in \eqref{eq: def lambda}.

In this case, it obviously holds that
\[\inf_{x\in\Gamma}\lambda_{\alpha_{x},\gamma_{x}, a}<|a|\Theta_0.\]
\end{assumption}
\begin{rem}[Comments on Assumption \ref{asu}]\label{rem:asu}
	 Let $a\in [-1,1]\setminus\{0\}$, $\alpha\in (0,\pi)$, $\gamma\in [0,\pi/2]$.
	\begin{enumerate}
		\item[(i)] We validate the above assumption as follows. Proposition~\ref{prop:exn} (and the computation below it) provide examples of triplets $(\alpha,\gamma, a)$  where
		\begin{equation}\label{eq:1}
			\lambda_{\alpha,\gamma, a}<\min(\beta_a,|a|\zeta_{\nu_0}),
		\end{equation}
	for $\nu_0=\arcsin(\sin\alpha\sin\gamma)$. Having at least a point $\overline{x}\in\Gamma$ such that $\gamma_{\overline x}=0$, we can assume that the corresponding couple $(\alpha_{\overline{x}},a)$ at $\overline{x}$ satisfies~\eqref{eq:1}, that is
	\begin{equation}\label{eq:2}
		\lambda_{\alpha_{\overline{x}},0, a}<\min(\beta_a,|a|\zeta_{0}),
		\end{equation} 	
($\zeta_{0}=\zeta_{\nu_{0}=0}$).	Indeed, one can take $(\alpha_{\overline{x}},a)$ living sufficiently near $(\pi/2,-1)$  to get~\eqref{eq:2} satisfied (see Proposition~\ref{prop:exn} and Figure~\ref{fig:Exn})\footnote{A simple instance of $\Omega$ and $\mathbf{B}$ satisfying Assumption~\ref{asu} is when having $\Omega$ a ball that cuts the plane $\{x_2=0\}$ by an angle $\alpha=\pi/2$ (or sufficiently near $\pi/2$) and having $a=-1$ (or sufficiently near $-1$).}.
	
	Now, by Sections~\ref{sec:eff2} and~\ref{sec:nu+}
	\[\zeta_{0}=\Theta_0,\ \mbox{and}\ \beta_a\geq|a|\Theta_0.\]
	Hence,~\eqref{eq:2} reads
	\begin{equation}\label{eq:3}
			\lambda_{\alpha_{\overline{x}},0, a}<|a|\Theta_0.
	\end{equation}	
	\item[(ii)] The conditions in Assumption~\ref{asu} are motivated in what follows. 
\begin{enumerate}
	\item 
For the points $\overline{x}$ satisfying the foregoing assumption, we have
	\begin{equation}\label{eq:c1}
		\lambda_{\alpha_{\overline{x}},\gamma_{\overline{x}}, a} <|a|\Theta_0\leq\min(\beta_a,|a|\zeta_{\overline\nu}),\end{equation}
	where $\overline\nu:=\arcsin(\sin\overline\alpha\sin\overline\gamma)$.
	The last inequality follows from the fact that $\beta_a\geq|a|\Theta_0$ (see Section~\ref{sec:eff2}), $\zeta_{0}=\Theta_0$, and the function $\nu\mapsto\zeta_\nu$ is strictly increasing on $[0,\pi/2]$ (see Section~\ref{sec:nu+}).
	Hence, Assumption~\ref{asu} implies that the energy $\lambda_{\alpha_{\overline x},\gamma_{\overline x}, a}$ corresponding to each of these points satisfies the condition~\eqref{eq:lb} in Theorem~\ref{thm:main}. This theorem ensures that $\lambda_{\alpha_{\overline{x}},\gamma_{\overline{x}}, a}$ is an eigenvalue of the operator $\mathcal L_{\underline\Ab_{\alpha_{\overline x},\gamma_{\overline x},a}}+V_{\underline{\mathbf{ B}}_{\alpha_{\overline x},\gamma_{\overline x},a},\tau_*}$, for some $\tau_*\in\R$, defined in \eqref{eq:LV-2}.  The corresponding eigenfunction will be used in establishing an upper bound of $\lambda(\mathfrak b)$ (see Proposition~\ref{pro: upper bound lambda b}). 
	\item Furthermore, while the condition
		\[\lambda_{\alpha_{\overline{x}},\gamma_{\overline{x}}, a}<\min(\beta_a,|a|\zeta_{\overline\nu})\]
		in~\eqref{eq:c1} is sufficient to get the foregoing upper bound of Proposition~\ref{pro: upper bound lambda b}, the more strict condition
		\[\lambda_{\alpha_{\overline{x}},\gamma_{\overline{x}}, a}<|a|\Theta_0\]
		in Assumption~\ref{asu}  is crucial in localizing the ground state in $\Omega$ near the points $\overline x$ of $\Gamma$ realizing this assumption (see Theorem~\ref{thm: localization} below). Indeed, under this strict condition, the local energies near the points $\overline{x}$ will be strictly smaller than those away from these points. The minimum of the  latter energies is comparable with $\mathfrak b|a|\Theta_0$, as $\mathfrak b\rightarrow+\infty$ (see Proposition~\ref{pro: lower bound quadratic form}).
\end{enumerate}
	\end{enumerate} 
\end{rem}
Our next result provides a localization of the eigenfunction (the ground state) corresponding to the eigenvalue $\lambda(\mathfrak{b}) $ near the points of the discontinuity edge $\Gamma$ satisfying the conditions in Assumption~\ref{asu}. 
Let $D$ be the set of these points
\begin{equation}\label{eq:D}
D= \big\{ \overline{x}\in \Gamma\,\,\, \vert\,\,\, \lambda_{\alpha_{\overline{x}}, \gamma_{\overline{x}}, a}  < |a| \Theta_0\big\}.
\end{equation}

\begin{thm}\label{thm: localization}
 Under Assumption \ref{asu}, there exist positive constants $C, \eta,\mathfrak{b}_0$ such that for all $\mathfrak{b}\geq \mathfrak{b}_0$ and for any  ground state $\psi$ of $\mathcal{P}_{\mathfrak{b}, \mathbf{F}}$, it holds
 \begin{equation}
 	\int_{\Omega} e^{2\eta \sqrt{\mathfrak{b}} \mathrm{dist}(x, D)}\left(|\psi|^2  + \mathfrak{b}^{-1}|(\nabla - i \mathfrak{b}\mathbf{F})\psi|^2 \right)\,dx \leq C\|\psi\|^2_{L^2(\Omega)}.
 \end{equation}
 Consequently, for any $N\in \mathbb{N}$,  the following localization estimate holds
 \begin{equation}
 	\int_{\Omega} \mathrm{dist}(x, D)^N |\psi|^2\,dx = \mathcal{O}(\mathfrak{b}^{-\frac{N}{2}}).
 \end{equation}
\end{thm}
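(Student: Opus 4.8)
\textbf{Proof proposal for Theorem~\ref{thm: localization}.}

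The plan is to establish an Agmon-type exponential decay estimate for the ground state $\psi$ away from the set $D$, following the classical strategy (cf.~\cite{fournais2010spectral}) adapted to the present setting with a piecewise-constant magnetic field and a discontinuity edge. The starting point is the ground state identity: for any real-valued Lipschitz weight $\Phi$ with $e^{\Phi/\sqrt{\mathfrak b}}$ (or rather $e^\Phi$ with $\Phi = \eta\sqrt{\mathfrak b}\,\mathrm{dist}(\cdot,D)$ suitably truncated) bounded, one has the IMS-type localization formula
\[
Q_{\mathfrak b}(e^{\Phi}\psi) = \lambda(\mathfrak b)\|e^{\Phi}\psi\|_{L^2(\Omega)}^2 + \||\nabla\Phi|\,e^{\Phi}\psi\|_{L^2(\Omega)}^2,
\]
where $Q_{\mathfrak b}$ is the quadratic form of $\mathcal P_{\mathfrak b,\mathbf F}$. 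The left-hand side must then be bounded from below by a local energy estimate: I would partition $\Omega$ by a cover adapted to the parameter $\mathfrak b$ (balls of radius $\sim \mathfrak b^{-\rho}$ for a suitable $\rho\in(3/8,1/2)$), distinguishing the interior of $\Omega_1,\Omega_2$, the smooth boundary $\partial\Omega$, the discontinuity surface $S$, and the discontinuity edge $\Gamma$.

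The key quantitative input is Proposition~\ref{pro: lower bound quadratic form} (invoked in Remark~\ref{rem:asu}), which gives that the local ground state energy, away from a fixed neighbourhood of $D$, is bounded below by $\mathfrak b\,|a|\Theta_0 - o(\mathfrak b)$ (more precisely by $\mathfrak b(|a|\Theta_0 - \varepsilon)$ on balls centered outside a neighbourhood of $D$, by comparison with the various effective model operators: $\mathfrak b|a|\Theta_0$ at the boundary in $\Omega_2$, $\mathfrak b\Theta_0$ or $\mathfrak b$ in $\Omega_1$ and the interior, $\mathfrak b|a|\zeta_{\nu}$ and $\mathfrak b\,\min(\beta_a,|a|\zeta_\nu)$ near $S$ and the non-special parts of $\Gamma$, all of which dominate $\mathfrak b|a|\Theta_0$ by the monotonicity of $\nu\mapsto\zeta_\nu$ and $\beta_a\geq|a|\Theta_0$). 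Meanwhile, by Assumption~\ref{asu} and Proposition~\ref{pro: upper bound lambda b}, $\lambda(\mathfrak b) \leq \mathfrak b\,\inf_{x\in\Gamma}\lambda_{\alpha_x,\gamma_x,a} + o(\mathfrak b) < \mathfrak b\,|a|\Theta_0 - c\,\mathfrak b$ for some $c>0$ and $\mathfrak b$ large. Hence on the region $\{\mathrm{dist}(\cdot,D)\geq \delta\}$ there is a spectral gap of size $\gtrsim \mathfrak b$ between the local energy and $\lambda(\mathfrak b)$, which is exactly what powers the decay: choosing $\Phi = \eta\sqrt{\mathfrak b}\,\chi(\mathrm{dist}(x,D))$ with $\chi$ a truncation and $\eta$ small, the gradient term $\||\nabla\Phi|e^\Phi\psi\|^2 \lesssim \eta^2\mathfrak b\|e^\Phi\psi\|^2$ is absorbed into the gap, yielding $\int_{\{\mathrm{dist}\geq\delta\}} e^{2\eta\sqrt{\mathfrak b}\,\mathrm{dist}(x,D)}|\psi|^2 \leq C\|\psi\|^2$; on $\{\mathrm{dist}<\delta\}$ the weight is $O(1)$ and the bound is trivial. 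The companion bound on $\mathfrak b^{-1}\||(\nabla-i\mathfrak b\mathbf F)\psi|e^{\eta\sqrt{\mathfrak b}\,\mathrm{dist}}\|^2$ follows by feeding the decay of $|\psi|$ back into the ground state identity (bounding $Q_{\mathfrak b}(e^\Phi\psi)$ from above). The final moment estimate $\int \mathrm{dist}(x,D)^N|\psi|^2 = \mathcal O(\mathfrak b^{-N/2})$ is then immediate from $t^N e^{-2\eta\sqrt{\mathfrak b}\,t} \leq C_N (\mathfrak b)^{-N/2}$ applied pointwise, combined with $\|\psi\|=1$.

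I would organize the proof in the following steps: (1) reduce to the normalized case $\|\psi\|_{L^2(\Omega)}=1$ and state the IMS localization formula with the weight $\Phi$; (2) recall from Proposition~\ref{pro: upper bound lambda b} the upper bound $\lambda(\mathfrak b) \leq \mathfrak b(|a|\Theta_0 - 2c)$ for some $c>0$; (3) recall from Proposition~\ref{pro: lower bound quadratic form} (with the chosen cover scale $\mathfrak b^{-\rho}$ and the standard error from magnetic field approximation on each ball, of order $\mathfrak b^{1-2\rho} + \mathfrak b^{2\rho}$, optimized near $\rho=3/8$) the lower bound $Q_{\mathfrak b}(v) \geq (\mathfrak b|a|\Theta_0 - C\mathfrak b^{7/8})\|v\|^2$ for $v$ supported in $\{\mathrm{dist}(\cdot,D)\geq\delta\}$; (4) combine: $Q_{\mathfrak b}(e^\Phi\psi) \geq (\mathfrak b|a|\Theta_0 - C\mathfrak b^{7/8})\int_{\mathrm{dist}\geq\delta}|e^\Phi\psi|^2 - C\int_{\mathrm{dist}<\delta}|e^\Phi\psi|^2$, plug into the identity, and absorb $\||\nabla\Phi|e^\Phi\psi\|^2\leq\eta^2\mathfrak b\|e^\Phi\psi\|^2$ using $c\mathfrak b > \eta^2\mathfrak b + C\mathfrak b^{7/8}$ for small $\eta$ and large $\mathfrak b$; (5) conclude the weighted $L^2$ bound on $\psi$, then the weighted bound on the magnetic gradient, then the polynomial moments.

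The main obstacle I anticipate is Step~(3): establishing the uniform local lower bound $\mathfrak b|a|\Theta_0 - o(\mathfrak b)$ away from $D$ with genuinely uniform constants over all of $\Gamma\setminus(\text{neighbourhood of }D)$ and over the full interior/boundary/surface decomposition. This requires a careful partition of unity, control of the approximation of $\mathbf B$ by its value at the center of each ball (the usual magnetic-field-curvature error), and invoking the correct effective operator on each type of ball — in particular near points of $\Gamma$ where $\lambda_{\alpha_x,\gamma_x,a}\geq|a|\Theta_0$ one uses the comparison $\lambda_{\alpha_x,\gamma_x,a}\geq|a|\Theta_0$ directly, and one must ensure this inequality, together with $\beta_a\geq|a|\Theta_0$, $\zeta_\nu\geq\Theta_0$, and $\nu\mapsto\zeta_\nu$ increasing, degrades gracefully under localization. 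This is where the hypothesis $\lambda_{\alpha_{\overline x},\gamma_{\overline x},a}<|a|\Theta_0$ rather than merely $<\min(\beta_a,|a|\zeta_{\overline\nu})$ is essential — as noted in Remark~\ref{rem:asu}(ii) — since only the former guarantees a strict bulk gap of order $\mathfrak b$. Assuming Proposition~\ref{pro: lower bound quadratic form} and Proposition~\ref{pro: upper bound lambda b} are already available, the remaining argument is the standard Agmon machinery and presents no serious difficulty.
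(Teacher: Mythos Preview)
Your proposal is correct and follows essentially the same Agmon-estimate strategy as the paper's proof: the ground state identity $Q_{\mathfrak b,\mathbf F}(e^{\Phi}\psi)=\lambda(\mathfrak b)\|e^{\Phi}\psi\|^2+\||\nabla\Phi|e^{\Phi}\psi\|^2$ with $\Phi=\eta\sqrt{\mathfrak b}\,\mathrm{dist}(\cdot,D)$ (suitably regularized), the lower bound on $Q_{\mathfrak b,\mathbf F}$ from Proposition~\ref{pro: lower bound quadratic form}, the upper bound on $\lambda(\mathfrak b)$ from Proposition~\ref{pro: upper bound lambda b}, and the resulting spectral gap of size $(|a|\Theta_0-\lambda_D)\mathfrak b$ to absorb the gradient term for $\eta<\sqrt{|a|\Theta_0-\lambda_D}$. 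The paper's only cosmetic differences are the specific choice of weight $g(x)=\eta\max(\mathrm{dist}(x,D),R_0\mathfrak b^{-1/2})$ (rather than a generic truncation) and splitting at the $\mathfrak b$-dependent scale $R_0\mathfrak b^{-1/2}$ instead of a fixed $\delta$.
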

\bigskip

\subsection{Paper organization}
The rest of the paper is organized as follows. In Section \ref{sec: known op}, we recall some known operators in the plane and the half-space which are useful for our analysis.  In Section \ref{sec: R3+ steps}, we decompose our operator into 2D reduced operators. For these reduced operators, we derive some properties of the bottom of essential spectrum and the bottom of the spectrum.
The proof of Theorem \ref{thm:main} is then established in Section~\ref{sec:proof}. We use Theorem \ref{thm:main} in Section 5 to prove  Theorem \ref{thm: localization}, establishing localization results in 3D bounded domains under discontinuous magnetic fields.
\section{Known effective operators}\label{sec: known op}
In this section, we introduce useful linear Schr\"odinger operators on the plane and  the half-space that were explored earlier in the literature. 
\subsection{An operator with a discontinuous  magnetic field on the plane}\label{sec:eff2}
Let  $a \in [-1,1)\setminus \{0\}$. We consider a magnetic potential ${\mathcal A}_a\in H^1_{\mathrm{loc}}(\R^2,\R^2)$ with the following associated piecewise-constant magnetic field 
\begin{equation*}\label{eq:A1}
\curl{\mathcal A}_a(x)=\mathbbm{1}_{\{x_2 >0\}}(x)+a\mathbbm{1}_{\{x_2 <0\}}(x),\quad (x_1,x_2)\in \R^2.
\end{equation*}
We introduce the self-adjoint operator on $\R^2$
\begin{equation*}\label{eq:ham_operator}
\mathcal L_a=-(\nabla-i{\mathcal A}_a)^2,
\end{equation*}
with domain
\[
	\mathcal{D}(\mathcal L_a) :=\big\{u\in L^2(\R^2)\,:\,(\nabla-i{\mathcal A}_a)^n u \in L^2(\R^2),\, \mathrm{for}\ n\in \{1,2\}\big\}.
\]
  We denote the bottom of the spectrum by $\beta_a$.
The operator $\mathcal L_a$ has been studied in~\cite{hislop2015edge,hislop2016band,Assaad2019,Assaadlowest20}: using a  Fourier transform, $\mathcal L_a$ was reduced to  a  family of Schr\"odinger operators on $L^2(\R)$, $\mathfrak h_a[\xi]$, parametrized by $\xi\in\R$. For each fixed $\xi\in \mathbb{R}$, the operator $\mathfrak h_a[\xi]$ is defined by
\begin{equation}\label{eq:ha}
\mathfrak h_a[\xi] =
\begin{cases}
-\frac{d^2}{dt^2}+(at-\xi)^2,& t<0\,,\\
-\frac{d^2}{dt^2}+(t-\xi)^2,& t>0\,.
\end{cases}
\end{equation}
We have {(see~\cite{Assaad2019})}
\begin{equation}\label{eq:beta}
\beta_a=\inf_{\xi \in \R} \mu_a(\xi)\,,
\end{equation} 
where $\mu_a(\xi)$ is the  bottom of the spectrum of the operator $\mathfrak h_a[\xi]$, i.e., 
\begin{equation}\label{eq:mu_a}
\mu_a(\xi)=\inf \mathrm{sp}\big(\mathfrak h_a[\xi]\big).
\end{equation}

  We collect the following useful properties of $\beta_a$:
\begin{itemize}
	\item For $0<a<1$,  $\beta_a=a$ and $\beta_a$ is not attained by $\mu_a(\xi)$, for all $\xi\in \R$.
	\item For $-1\leq a<0$, $|a|\Theta_0\leq\beta_a<|a|$ and  $\beta_a=\mu_a(\xi_a)$, for a certain (unique) $\xi_a \in \R$. Here, $\Theta_0$ is the de Gennes constant defined as the bottom of the spectrum of the magnetic Neumann realization of the Schr\"odinger operator $-(\nabla - i\Ab)^2$, with a unit magnetic field ($\curl\Ab=1$),  on the half-plane (see~e.g.~\cite{fournais2010spectral})

\begin{equation}\label{eq:teta0}
\Theta_0=\inf\mathrm{sp} [-(\nabla - i\Ab)^2]\approxeq 0.59.
\end{equation}
\end{itemize}

\begin{rem}[The value $\beta_a$ as the bottom of spectrum of a Schr\"odinger operator on $\R^3$]\label{rem: beta a R3}
Consider the following Schr\"odinger operator on $\R^3$
	\begin{equation}\label{eq:L3}
\mathsf{L}_{a} := -(\nabla - i\mathsf{A}_{a})^2,	
\end{equation}
with domain 
\begin{equation} 
	\mathcal{D}(\mathsf{L}_a) := \left\{ u\in L^2(\mathbb{R}^3)\, \vert\, (\nabla - i \mathsf{A}_{a})^ju\in L^2(\mathbb{R}^3), \quad \mbox{for}\,\,\, j\in\{1,2\} \right\},
\end{equation}
where $\mathsf{A}_{a}\in H^1_{\mathrm{loc}}(\mathbb{R}^3, \mathbb{R}^3)$ is a magnetic potential such that the corresponding magnetic field has a piecewise-constant strength equal to $\mathbbm{1}_{\{x_2 >0\}} + a\mathbbm{1}_{\{x_2 <0\}}=:\delta_a$. More precisely, we can fix the gauge and set $\mathsf{A}_{a} = (-\delta_a x_2, 0,0)$. With this choice of the magnetic potential, performing a partial Fourier transform it is possible to compare the bottom of the spectrum of $\mathsf{L}_a$ with the bottom of the spectrum of the operator $\mathcal{L}_a$ introduced above, proving that 
\begin{equation}
	\inf\mathrm{sp}(\mathsf{L}_a) = \beta_a,
\end{equation}
where $\beta_a$ is as in \eqref{eq:beta}.
\end{rem}
\subsection{An operator with a constant field on the half-space}\label{sec:nu+}
Let  $\nu\in[0,\pi/2]$. We introduce the  following magnetic field  with a unit strength on $\R_+^3$
\begin{equation*}\label{eq:Bnu}
	\mathbf{B}_\nu=(0,\sin\nu,\cos\nu),
\end{equation*}
and an associated magnetic potential $\Ab_\nu\in H^1_{\mathrm{loc}}(\R_+^3,\R^3)$, ($\curl \Ab_\nu=\mathbf{B}_\nu$). Note that $\mathbf{B}_\nu$ makes an angle $\nu$ with the $(x_1x_3)$ plane.

Now, we consider the magnetic Neumann realization of the following self-adjoint operator on the half space 
\begin{equation}\label{eq:O2}
	\mathrm H_\nu=-(\nabla-i\Ab_\nu)^2\quad  \mathrm{in}\ L^2(\R^3_+),
\end{equation}
We denote by $\zeta_\nu$ the bottom of the spectrum of $\mathrm{H}_\nu$, 
\begin{equation}\label{eq:zeta}
	\zeta_\nu = \inf \mathrm {sp}\big(\mathrm H_\nu\big).
\end{equation} 
We present the following useful properties of $\zeta_\nu$ (see~e.g.~\cite{lu2000surface,lu99eigen,morame2005remarks}):
\begin{equation}\label{eq: prop zeta}
\zeta_0=\Theta_0,\qquad  \zeta_{\pi/2}= 1,\qquad   \zeta_\nu\in(\Theta_0,1)\,\,\, \mathrm{for}\  \nu\in(0,\pi/2),
\end{equation} 
where $\Theta_0$ is the de Gennes constant defined in~\eqref{eq:teta0}. Moreover, (see~e.g.~\cite{lu2000surface,lu99eigen,morame2005remarks}), the map $\nu\mapsto\zeta_\nu$ is strictly increasing for $\nu\in [0, \pi/2]$.
%
\section{The operator with magnetic steps on the half space}\label{sec: R3+ steps}
Let  $a\in[-1,1)\setminus\{0\}$, $\alpha\in(0,\pi)$ and $ \gamma \in[0,\pi/2]$. We recall the operator $\mathcal L_{ \alpha,\gamma,a}$ with a discontinuous field  on $\R^3_+$ introduced in \eqref{eq:La+}
\begin{equation} \mathcal{L}_{ \alpha,\gamma,a} = -(\nabla - i \mathbf{A}_{\alpha,\gamma,a})^2, \end{equation}
with the domain defined in \eqref{eq:domLa++} as  
\begin{multline}
	\mathcal{D}(\mathcal L_{ \alpha,\gamma,a}) =\big\{u\in L^2(\R^3_+)~:~ (\nabla-i\Ab_{ \alpha,\gamma,a} )^{n} u \in L^2(\R^3_+),\\ \mathrm{for}\ n\in \{1,2\},(\nabla-i\Ab_{ \alpha,\gamma,a})u\cdot (0,1,0)|_{\partial \R^3_+}=0\big\}.
	\end{multline}
Using the min-max principle, we  write the bottom of the spectrum of $\mathcal{L}_{\alpha, \gamma, a}$ as 
\begin{equation}\label{eq:Lmdas2}
\lambda_{\alpha, \gamma,a} =\inf_{\substack{u\in \mathcal{D}(Q_{\alpha, \gamma,a}) \\ u\neq 0 }}\frac{Q_{\alpha, \gamma,a}(u)}{\|u\|_{ L^2(\R^3_+)}^2},
\end{equation}
where $Q_{\alpha, \gamma,a}$ is the quadratic form associated to the operator $\mathcal{L}_{\alpha,\gamma,a}$, defined by  
\[Q_{\alpha, \gamma,a} (u)=\|(\nabla-i\Ab_{ \alpha,\gamma,a} )u\|_{L^2(\R_+^3)}^2\]
on the domain
\[\mathcal{D}(Q_{\alpha, \gamma,a}) :=\big\{u\in L^2(\R_+^3)~:~(\nabla-i\Ab_{ \alpha,\gamma,a} ) u \in L^2(\R_+^3)\big\}.\]
We also recall the magnetic field introduced in~\eqref{eq:Bs}, and we denote by $b_j$, $j=1,2,3$ its components:
\begin{equation}\mathbf{B}_{ \alpha,\gamma,a} = (\cos\alpha\sin\gamma,\sin\alpha\sin\gamma,\cos\gamma)\mathsf s_{ \alpha,a} =: (b_1,b_2, b_3),
\end{equation}
where $s_{ \alpha,a} = \mathbbm{1}_{\mathcal D^1_\alpha}+a\mathbbm{1}_{\mathcal D^2_\alpha}$ (see Figure \ref{fig:1}).
Now, we fix the choice of the magnetic potential $\Ab_{ \alpha,\gamma,a} $. Let 
\begin{equation}\label{eq:Ag}
\Ab_{ \alpha,\gamma,a} =(A_1,A_2,A_3)
\end{equation}
such that
\begin{align*}
A_1&=0\\
A_2&=\begin{cases}
\cos\gamma x_1-(1-a)\cos\gamma  \cot\alpha x_2 &\mathrm{for}\ x\in\mathcal D^1_\alpha\\
 a \cos\gamma x_1&\mathrm{for}\ x\in\mathcal D^2_\alpha
\end{cases}\\
A_3&=\begin{cases}
x_2\cos\alpha\sin\gamma-x_1\sin\alpha\sin\gamma &\mathrm{for}\ x\in\mathcal D^1_\alpha\\
 a(x_2\cos\alpha\sin\gamma-x_1\sin\alpha\sin\gamma)&\mathrm{for}\ x\in\mathcal D^2_\alpha.
\end{cases}
\end{align*}
This choice of the vector potential guarantees the continuity of $\Ab_{ \alpha,\gamma,a} $ at the discontinuity plane $P_\alpha$ (see Figure \ref{fig:1}) and that $\Ab_{ \alpha,\gamma,a} \in H^1_{\mathrm{loc}}(\R^3_+,\R^3)$. Moreover, with this vector potential, the operator $\mathcal{L}_{\alpha,\gamma, a}$ is translation invariant in the $x_3$ variable. Hence, its spectrum is absolutely continuous, and a reduction of the study to a family of 2D operators is allowed as we see below.
\subsection{A family of reduced 2D operators}\label{sec: red 2D op}
Let $a\in [-1,1)\setminus \{0\}$, $\alpha\in (0,\pi)$, $\gamma\in [0,\pi/2]$. A partial Fourier transform in the $x_3$ variable yields the following decomposition of the operator $\mathcal L_{ \alpha,\gamma,a} $ (see~\cite{simon1972methods})
\begin{equation}\label{eq:F1}
	\mathcal L_{ \alpha,\gamma,a} =\int^\oplus_{\tau\in\R}\big(\mathcal L_{\underline\Ab_{\alpha,\gamma,a}}+V_{\underline{\mathbf{ B}}_{\alpha,\gamma,a},\tau}\big)\,d\tau,\end{equation}
where 
\begin{equation}\label{eq:LV-2}
	\mathcal L_{\underline\Ab_{\alpha,\gamma,a}}+V_{\underline{\mathbf{ B}}_{\alpha,\gamma,a},\tau}= -(\nabla-i\underline \Ab_{\alpha,\gamma,a})^2+V_{\underline{\mathbf{ B}}_{\alpha,\gamma,a},\tau}\end{equation}
is a  Schr\"odinger operator on $\R_+^2:=\{(x_1,x_2)\in \R^2~:~x_2>0\}$, parametrized by $\tau\in\R$ and such that we have the following.
\begin{itemize}	
	\item The magnetic potential $\underline{\Ab}_{\alpha, \gamma, a}:=(\underline A_1,\underline A_2)$ represents the projection of the vector potential $\mathbf{A}_{\alpha,\gamma, a}$ defined in \eqref{eq:Ag} on $\mathbb{R}^2_+$, i.e., 
	\begin{align}\label{eq:Abar}
	\underline A_1&:=0\nonumber\\
	\underline A_2&:=\begin{cases}
	\cos\gamma x_1-(1-a)\cos\gamma\cot\alpha x_2&\mathrm{for}\ (x_1,x_2)\in D^1_\alpha,\\
	a \cos\gamma x_1&\mathrm{for}\ (x_1,x_2)\in D^2_\alpha,
	\end{cases}
	\end{align}
where $D_\alpha^1$ and $D^2_\alpha$ represent respectively the orthogonal projection of the regions $\mathcal{D}_\alpha^1$ and $\mathcal{D}_\alpha^2$ over the plane $(x_1x_2)$:
\begin{equation}\label{eq: def D1alpha R2}
 	D^1_\alpha = \left\{(x_1,x_2)\in\mathbb{R}^2\, \vert\, (x_1,x_2)=  \rho(\cos\theta,\sin\theta),\, \rho\in(0,\infty),\,0<\theta<\alpha\right\},
 \end{equation}
 \begin{equation}\label{eq: def D12alpha R2}
 D^2_\alpha = \left\{(x_1,x_2)\in\mathbb{R}^2\, \vert\, (x_1,x_2) =  \rho(\cos\theta,\sin\theta),\, \rho\in(0,\infty),\,\alpha<\theta<\pi\right\}.
 \end{equation}
Note that $\underline{\mathbf{A}}_{\alpha,\gamma, a}$ satisfies 
\begin{equation}\label{eq:Ab}
\underline b_3:=\curl\underline \Ab_{ \alpha,\gamma,a}=\underline {\mathsf s}_{{\alpha,a}}\cos\gamma, 
\end{equation} 
where $\underline {\mathsf s}_{\alpha, a}$ is the step function defined in $\R^2_+$ by
	\begin{equation}\label{eq:step}
		\underline {\mathsf s}_{\alpha,a} = \mathbbm{1}_{D^1_\alpha}+a\mathbbm{1}_{D^2_\alpha}.\end{equation}
	\item The field $\underline{\mathbf{B}}_{\alpha, \gamma, a}$ is a magnetic field that projects $\mathbf{B}_{\alpha,\gamma,a}$ on $\mathbb{R}^2_+$ and it is defined as follows
	\begin{eqnarray}\label{eq:bunder}
	 \underline{\mathbf{B}}_{\alpha, \gamma, a}= (\underline b_1,\underline b_2) = (\cos\alpha\sin\gamma, \sin\alpha\sin\gamma) \underline{\mathbf{s}}_{\alpha,a}.
	\end{eqnarray}
	Note that $\underline{\mathbf{B}}_{\alpha, \gamma, a}$ is discontinuous along the line $l_\alpha := P_\alpha\cap \mathbb{R}_+^2$ (see Figure \ref{fig:1}), in the following we refer to $l_\alpha$ as the discontinuity line.
	\item The electric potential $V_{\underline{\mathbf{ B}}_{\alpha,\gamma,a},\tau}$ is defined as
	\begin{eqnarray}\label{eq:V0} V_{\underline{\mathbf{ B}}_{\alpha,\gamma,a},\tau}&=&\big(x_1\underline b_2-x_2\underline b_1-\tau\big)^2,
		\\
		&=& \nonumber} 	{\begin{cases}  (x_1\sin\alpha\sin\gamma - x_2 \cos\alpha\sin\gamma - \tau)^2 &\mbox{for }\, \, (x_1, x_2)\in D^1_\alpha, \\ [a(x_1\sin\alpha\sin\gamma - x_2 \cos\alpha\sin\gamma)- \tau]^2 &\mbox{for }\,\, (x_1, x_2)\in D^2_\alpha.\end{cases}
	\end{eqnarray}
 We highlight the dependence of the electric potential on the magnetic field $\underline{\mathbf{ B}}_{\alpha, \gamma, a}$.
	\end{itemize}
We introduce the quadratic form associated to $\mathcal L_{\underline\Ab_{\alpha, \gamma, a}}+V_{\underline{\mathbf{B}}_{\alpha, \gamma, a},\tau}$:
\begin{equation}\label{eq:Qa}
\underline Q_{\alpha, \gamma, a}^{\tau}(u)=\int_{\R^2_+}\Big(|(\nabla-i\underline \Ab_{\alpha, \gamma, a})u|^2+V_{\underline{\mathbf{ B}}_{\alpha, \gamma, a},\tau}|u|^2\Big)\,dx_1dx_2.\end{equation}
The form domain is
\begin{equation*}\label{eq:domQ}
\mathcal{D}(\underline Q^{\tau}_{\alpha, \gamma, a})=\big\{u\in L^2(\R^2_+)~:~ (\nabla-i\underline\Ab_{\alpha, \gamma, a}) u \in L^2(\R^2_+),\ |x_1\underline b_2-x_2\underline b_1|u\in L^2(\R^2_+)\big\}.\end{equation*}
We denote by $\underline{\sigma}(\alpha, \gamma, a, \tau)$ the bottom of the spectrum of the operator $\mathcal L_{\underline\Ab_{\alpha,\gamma, a}}+V_{\underline{\mathbf{ B}}_{\alpha,\gamma,a},\tau}$. We have
\begin{equation}\label{eq:unders}
\underline\sigma(\alpha,\gamma,a,\tau)=\inf \mathrm {sp}(\mathcal L_{\underline\Ab_{\alpha, \gamma, a}}+V_{\underline{\mathbf{ B}}_{\alpha, \gamma, a},\tau})=\inf_{\substack {u\in  \mathcal{D}(\underline Q^{\tau}_{\alpha,\gamma, a})\\u\neq 0}}\frac {\underline Q^{\tau}_{\alpha,\gamma,a}(u)}{\|u\|_{L^2(\R^2_+)}^2}.
\end{equation}
By~\eqref{eq:F1}, we have 
\begin{equation}\label{eq:l3}
\lambda_{\alpha, \gamma,a} =\inf_\tau \underline\sigma(\alpha,\gamma,a,\tau).
\end{equation}
Hence, the study of $\lambda_{\alpha, \gamma,a}$ transforms to that of the associated band function $\tau\mapsto\underline\sigma(\alpha,\gamma,a,\tau)$. This study will be the subject of the next subsections.
\subsection{Case of a magnetic field parallel to the  $x_3-$axis}\label{sec:tangent}
We first treat the simple case when the magnetic field $\mathbf{B}_{ \alpha,\gamma,a} =(0,0,1)\mathsf s_{ \alpha,a}$ (i.e. when $\gamma=0$). In this case, the field is parallel to the  $x_3-$axis, thus $\underline{\mathbf{B}}_{\alpha, 0, a} = 0$. The operator $\mathcal L_{\underline\Ab_{\alpha,0,a}}+V_{\underline{\mathbf{ B}}_{\alpha,0,a},\tau}$ reduces to a simpler operator 
\[
	 \mathcal L_{\underline\Ab_{\alpha,0,a}}+V_{\underline{\mathbf{ B}}_{\alpha,0,a},\tau} = -(\nabla-i\underline \Ab_{\alpha,0,a})^2+\tau^2.
\]
For each $\tau\in\R$, the bottom of the spectrum of $\mathcal L_{\underline\Ab_{\alpha,0,a}}+V_{\underline{\mathbf{ B}}_{\alpha,0,a},\tau}$ equals
\[\underline\sigma(\alpha,0,a,\tau)=\mu(\alpha,a)+\tau^2,\]
where $\mu(\alpha,a)$ is the bottom of the spectrum of the operator $ \mathcal L_{\underline\Ab_{\alpha,0,a}}= -(\nabla + i \underline{\mathbf{A}}_{\alpha,0, a})^2$.
It immediately follows that 
\begin{equation}\label{eq: lambda mu}
\lambda_{\alpha, 0,a,\R^3_+}=\inf_\tau\underline\sigma(\alpha,0,a,\tau)=\underline\sigma(\alpha,0,a,0) =\mu(\alpha,a).\end{equation}
We present some properties of the operator $\mathcal L_{\underline\Ab_{\alpha,0, a}}$, that is of $\mathcal L_{\underline\Ab_{\alpha,0, a}}+V_{\underline{\mathbf{ B}}_{\alpha,0,a},0}$, obtained in~\cite[Section~3]{Assaad3}. 
 We denote by $\inf \mathrm{sp}_{ess}$ the infimum of the essential spectrum. From \cite[Theorem 3.1]{Assaad3}, we know that 
 \begin{equation}\label{eq: inf ess sp gamma0}
	\inf \mathrm{sp}_{ess}\mathcal L_{\underline\Ab_{\alpha,0, a}}=\inf \mathrm{sp}_{ess}(\mathcal L_{\underline\Ab_{\alpha,0,a}}+V_{\underline{\mathbf{ B}}_{\alpha,0,a},\tau=0})=|a|\Theta_0.
\end{equation}
As a consequence, if $\mu(\alpha, a) < |a|\Theta_0$  then $\mu(\alpha,a)$ is an eigenvalue of $\mathcal L_{\underline\Ab_{\alpha,0,a}}+V_{\underline{\mathbf{ B}}_{\alpha,0,a},0}$. The foregoing properties will be used in the proof of Theorem \ref{thm:main}, in the case $\gamma = 0$ (see Section \ref{sec:proof}).
%
\subsection{Case of a magnetic field non-parallel to  the $x_3-$axis}\label{sec: not tangent}
Now, we treat the case where  the magnetic field $\mathbf{B}_{ \alpha,\gamma,a}$ is not parallel to the $x_3-$ axis, that is the case when $\gamma\neq0$ ({see Figure \ref{fig:1}}). In this case, two auxiliary operators will be involved in the analysis. These operators are denoted by $H^{\mathrm{bnd}}_{\alpha,\gamma,a}[\tau]$ and $H^{\mathrm{stp}}_{\alpha, \gamma, a}[\tau]$ and are respectively defined on $\R^2_+$ and $\R^2$ with a constant (resp. piecewise constant) magnetic field. We refer to $H^{\mathrm{bnd}}_{\alpha,\gamma,a}[\tau]$ as the `boundary operator' since  it will be used in the proof of Proposition~\ref{prop:sp-lim} while studying the operator $\mathcal L_{\underline\Ab_{\alpha, \gamma, a}}+V_{\underline{\mathbf{ B}}_{\alpha,\gamma,a},\tau}$ near the boundary of $\R_+^2$ away from the discontinuity line $l_\alpha = P_\alpha \cap \R^2_+$ (see Figure~\ref{fig:1}). Similarly, we refer to $H^{\mathrm{stp}}_{\alpha,\gamma,a}[\tau]$ as the `step operator' since  it will be used in the study near the discontinuity line away from the boundary (see the proof of Proposition~\ref{prop:ess}). We introduce these operators in what follows.
\subsubsection{The boundary operator}
Let $\tau\in\R$. We define $H^{\mathrm{bnd}}_{\alpha,\gamma,a}[\tau]$ as the magnetic Neumann realization of the following self-adjoint operator on $\R^2_+$
\begin{equation}\label{eq:Hbnd}
H^{\mathrm {bnd}}_{\alpha, \gamma, a} [\tau] =-(\nabla-ia\Ab^{\mathrm{bnd}}_{\gamma})^2+[a(x_1\sin\alpha\sin\gamma-x_2\cos\alpha\sin\gamma)-\tau]^2,
\end{equation}
where $\Ab^{\mathrm{bnd}}_{\gamma}\in H^1_{\mathrm{loc}}(\R^2_+)$ is a magnetic potential with an associated constant magnetic field $\curl \Ab^{\mathrm{bnd}}_{\gamma}=\cos\gamma$.
This operator was studied in~\cite{popoff2013schrodinger} in the case $a=1$. Using translation, it was proven that the infimum of the spectrum of  $H^{\mathrm{bnd}}_{\alpha, \gamma, 1}[\tau]$ is independent of $\tau$. More precisely, in \cite[Lemma2.3]{popoff2013schrodinger} it is shown that  
\begin{equation}\label{eq:Hbnd1}
\inf\mathrm{sp}\big(H^{\mathrm{bnd}}_{\alpha,\gamma,1}[\tau]\big)=\zeta_{\nu_0},\qquad \forall \tau \in\R,
\end{equation}
where $\zeta_{\nu_0}$ is the value defined in~\eqref{eq:zeta} for $\nu_0=\arcsin(\sin\alpha\sin\gamma)$.
\begin{lem}[Bottom of the spectrum of the boundary operator]\label{lem: bott essHbnd} Let $a \in [-1,1)\setminus\{0\}$, $\alpha\in (0,\pi)$ and let $\gamma\in (0, \pi/2]$. Let $\tau\in \R$. It holds 
\begin{equation*}\label{eq:s1}
\inf\mathrm{sp} (H^{\mathrm{bnd}}_{\alpha, \gamma, a}[\tau])=|a|\zeta_{\nu_0}.
\end{equation*}
\end{lem}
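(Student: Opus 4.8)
The strategy is to reduce the operator $H^{\mathrm{bnd}}_{\alpha,\gamma,a}[\tau]$ to the already-understood case $a=1$ treated in~\cite{popoff2013schrodinger}, exploiting the scaling relation between the magnetic field of strength $a$ and that of strength $1$. First I would observe that the magnetic potential $\Ab^{\mathrm{bnd}}_\gamma$ has constant curl $\cos\gamma$, hence $a\Ab^{\mathrm{bnd}}_\gamma$ has constant curl $a\cos\gamma$; writing $a\Ab^{\mathrm{bnd}}_\gamma = \mathrm{sgn}(a)\,|a|\,\Ab^{\mathrm{bnd}}_\gamma$ and absorbing the sign by complex conjugation (which is a unitary that does not change the spectrum), we may assume $a>0$ and work with the potential $a\Ab^{\mathrm{bnd}}_\gamma$ having curl $a\cos\gamma>0$. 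The key point is then to compare $H^{\mathrm{bnd}}_{\alpha,\gamma,a}[\tau]$ with $H^{\mathrm{bnd}}_{\alpha,\gamma_a,1}[\tau_a]$ for a suitably rescaled angle $\gamma_a$ and parameter $\tau_a$.

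\textbf{Rescaling.} Introduce the dilation $x=(x_1,x_2)\mapsto |a|^{-1/2}x=:y$, which maps $\R^2_+$ onto itself and respects the Neumann boundary condition. Under the induced unitary on $L^2(\R^2_+)$, the operator $-(\nabla-ia\Ab^{\mathrm{bnd}}_\gamma)^2$ scales like $|a|$ times an operator with magnetic field of strength $a\cos\gamma/|a| = \mathrm{sgn}(a)\cos\gamma$, i.e.\ (after the conjugation above) with unit-normalized field whose ``vertical'' component corresponds to $\cos\gamma$. Simultaneously the electric potential $[a(x_1\sin\alpha\sin\gamma - x_2\cos\alpha\sin\gamma)-\tau]^2$ becomes $|a|$ times $[\,|a|^{1/2}\,\mathrm{sgn}(a)\,(y_1\sin\alpha\sin\gamma - y_2\cos\alpha\sin\gamma) - \tau/|a|^{1/2}\,]^2$, which is again of the form appearing in~\eqref{eq:Hbnd} but with $a$ replaced by $1$, $\sin\gamma$ replaced by $|a|^{1/2}\sin\gamma$ — no, more carefully, one sees that after the dilation the combined operator equals $|a|$ times an operator of exactly the type $H^{\mathrm{bnd}}_{\alpha,\gamma,1}[\tilde\tau]$ for a new parameter $\tilde\tau = \tau/|a|^{1/2}$, provided one also tracks that the constant magnetic field strength is now $\cos\gamma$ rather than $1$. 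Since~\cite[Lemma~2.3]{popoff2013schrodinger} and~\eqref{eq:Hbnd1} apply to the $a=1$ operator with constant field $\cos\gamma$ and show its spectral bottom equals $\zeta_{\nu_0}$ independently of $\tilde\tau$, we conclude $\inf\mathrm{sp}(H^{\mathrm{bnd}}_{\alpha,\gamma,a}[\tau]) = |a|\,\zeta_{\nu_0}$ for every $\tau$.

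\textbf{Main obstacle.} The delicate bookkeeping is making sure that the scaling in the $V$-term, the scaling in the magnetic term, and the normalization of the constant magnetic field are all mutually compatible, so that one genuinely lands on an operator of the form $H^{\mathrm{bnd}}_{\alpha,\gamma,1}[\,\cdot\,]$ (rather than merely something similar). In particular one must check that the geometric angle $\nu_0=\arcsin(\sin\alpha\sin\gamma)$ is unchanged by the dilation — which it is, since $\nu_0$ depends only on $\alpha,\gamma$ and not on $a$ — and that the quantity $\sin\alpha\sin\gamma$ entering the drift term of the $a=1$ operator is consistent with the direction of the rescaled constant field. A clean way to organize this is: (i) reduce to $a>0$ by conjugation; (ii) apply the unitary dilation $U_a u(x) = |a|^{-1/2}u(|a|^{-1/2}x)$ and compute $U_a^{-1}H^{\mathrm{bnd}}_{\alpha,\gamma,a}[\tau]U_a = |a|\,\widetilde H$; (iii) identify $\widetilde H$ with $H^{\mathrm{bnd}}_{\alpha,\gamma,1}[\tau/|a|^{1/2}]$ up to a gauge transformation fixing the constant-field normalization; (iv) invoke~\eqref{eq:Hbnd1}. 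Step (iii) is where all the care is needed, but it is a routine — if slightly tedious — change of variables, and no new spectral input beyond~\cite[Lemma~2.3]{popoff2013schrodinger} is required.
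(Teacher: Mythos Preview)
Your proposal is correct and follows essentially the same route as the paper: a unitary dilation by $|a|^{-1/2}$ (plus complex conjugation when $a<0$) reduces $H^{\mathrm{bnd}}_{\alpha,\gamma,a}[\tau]$ to $|a|$ times $H^{\mathrm{bnd}}_{\alpha,\gamma,1}[\,\cdot\,]$, after which one invokes~\eqref{eq:Hbnd1}. Your worries in step~(iii) about the field normalization and about $\nu_0$ are unnecessary---the curl of $\Ab^{\mathrm{bnd}}_\gamma$ is $\cos\gamma$ (not $1$) already in the definition of the $a=1$ operator, so the rescaled operator lands exactly on $H^{\mathrm{bnd}}_{\alpha,\gamma,1}$ with the \emph{same} $\gamma$; and the precise value of the new Fourier parameter (your $\tau/|a|^{1/2}$, the paper's $\tau/a$) is immaterial since the $a=1$ bottom is $\tau$-independent.
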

\begin{proof} By a simple scaling argument, one can prove that 
\[\inf\mathrm{sp}( H^{\mathrm{bnd}}_{\alpha, \gamma, a}[\tau])=|a|\inf \mathrm{sp}(H^{\mathrm{bnd}}_{\alpha,\gamma,1}[\tau/a])\]. Combining this with \eqref{eq:Hbnd1} completes the proof.
\end{proof}
\vskip 3mm
\subsubsection{The step operator}
Let $\tau\in\R$. We define $H^{\mathrm{stp}}_{\alpha, \gamma, a}[\tau]$ as the following self-adjoint operator on $\R^2$
\begin{equation}\label{eq:Hedg}
H^{\mathrm{stp}}_{\alpha, \gamma, a}[\tau] =  -(\nabla-i\Ab^{\mathrm{stp}}_{\alpha, \gamma, a})^2+[(x_1\sin\alpha\sin\gamma-x_2\cos\alpha\sin\gamma)\mathsf s^{\mathrm{stp}}_{\alpha, a}-\tau]^2,
\end{equation}
where $\Ab^{\mathrm{stp}}_{\alpha, \gamma, a}\in H^1_{\mathrm{loc}}(\R^2)$ is such that $\curl \Ab^{\mathrm{stp}}_{\alpha, \gamma, a}=\mathsf s^{\mathrm{stp}}_{\alpha, a}\cos\gamma$, and $\mathsf s^{\mathrm{stp}}_{\alpha,a}$ is the following step function on $\R^2$
\[
	\mathsf s^{\mathrm{stp}}_{\alpha,a} := \mathbbm 1_{P_\alpha^+}+a\mathbbm 1_{P_\alpha^-},
\]
 with 
\begin{equation*}
	P_\alpha^+ := \{(x_1, x_2) \in \R^2 \, \vert \, x_1\sin\alpha-x_2\cos\alpha>0\},\end{equation*} \begin{equation*}P_\alpha^- := \{(x_1, x_2)\in \R^2 \, \vert \, x_1\sin\alpha-x_2\cos\alpha<0\}. \end{equation*}
On can see the magnetic field $\curl \Ab^{\mathrm{stp}}_{\alpha, \gamma, a}$ in $\R^2$ as the analogous of the magnetic field $\curl\underline \Ab_{ \alpha,\gamma,a}$ in $\R^2_+$, defined in~\eqref{eq:Ab}, with the sets $P_\alpha^+$ and 	$P_\alpha^-$ as the analogous of the sets $D^1_\alpha$ (in~\eqref{eq: def D1alpha R2}) and $D^2_\alpha$ (in~\eqref{eq: def D12alpha R2}) respectively.

The next lemma  determines the infimum of the spectrum of $H^{\mathrm{stp}}_{\alpha,\gamma, a}[\tau]$.
\begin{lem}[Bottom of the spectrum of the step operator]\label{lem: bott ess sp Hstp} Let $a \in [-1,1)\setminus\{0\}$, $\alpha\in (0,\pi)$ and let $\gamma\in (0, \pi/2]$. Let $\tau\in \R$. It holds 
\begin{equation*}
\inf\mathrm{sp}\big(H^{\mathrm{stp}}_{\alpha, \gamma, a}[\tau]\big)=\inf_{\xi\in \R} \big[\mu_a(\tau\sin\gamma+\xi\cos\gamma)+(\xi\sin\gamma-\tau\cos\gamma)^2\big],
\end{equation*}
where $\mu_a(\cdot)$ is the value defined in~\eqref{eq:mu_a}. 
\end{lem}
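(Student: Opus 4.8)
The plan is to diagonalize $H^{\mathrm{stp}}_{\alpha, \gamma, a}[\tau]$ by first rotating the plane $\R^2$ so that the discontinuity line $P_\alpha = \{x_1\sin\alpha - x_2\cos\alpha = 0\}$ becomes a coordinate axis, and then performing a partial Fourier transform in the direction along that line. Concretely, I would introduce the orthogonal change of variables $y_1 = x_1\sin\alpha - x_2\cos\alpha$, $y_2 = x_1\cos\alpha + x_2\sin\alpha$ (a rotation, hence unitary on $L^2(\R^2)$ and leaving the Laplacian invariant). In the new coordinates the step function $\mathsf s^{\mathrm{stp}}_{\alpha,a}$ becomes $\mathbbm 1_{\{y_1>0\}} + a\,\mathbbm 1_{\{y_1<0\}}$, i.e. exactly the step $\delta_a$ appearing in Section~\ref{sec:eff2}, and the magnetic field $\curl\Ab^{\mathrm{stp}}_{\alpha,\gamma,a}$ becomes the one-dimensional step $(\mathbbm 1_{\{y_1>0\}} + a\mathbbm 1_{\{y_1<0\}})\cos\gamma$, which depends only on $y_1$. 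Up to a gauge transformation I can then take $\Ab^{\mathrm{stp}}_{\alpha,\gamma,a}$ to be of the form $(0, \cos\gamma\, g(y_1))$ with $g'(y_1) = \mathsf s^{\mathrm{stp}}$, so that the operator is translation-invariant in $y_2$.

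Next I would examine the electric potential term. The linear form $x_1\sin\alpha\sin\gamma - x_2\cos\alpha\sin\gamma = y_1\sin\gamma$ in the rotated coordinates, so the potential becomes $[y_1\sin\gamma\,\mathsf s^{\mathrm{stp}}_{\alpha,a}(y_1) - \tau]^2$, which \emph{also} depends only on $y_1$ — there is no $y_2$-dependence at all. Hence $H^{\mathrm{stp}}_{\alpha,\gamma,a}[\tau]$, after rotation and gauge change, is a Schrödinger operator on $\R^2$ that is translation-invariant in $y_2$; a partial Fourier transform in $y_2$ (with dual variable $\xi$) decomposes it into the direct integral over $\xi\in\R$ of the one-dimensional operators
\[
-\frac{d^2}{dy_1^2} + \big(\cos\gamma\, g(y_1) - \xi\big)^2 + \big(y_1\sin\gamma\,\mathsf s^{\mathrm{stp}}_{\alpha,a}(y_1) - \tau\big)^2 .
\]
The task is then to recognize this fiber operator as a rescaled/shifted version of $\mathfrak h_a[\cdot]$ from \eqref{eq:ha}. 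Completing the square inside the two quadratic terms — or better, performing a further rotation in the $(y_1$-momentum$,\, $parameter$)$ plane by the angle $\gamma$, i.e. setting the combination $\tau\sin\gamma + \xi\cos\gamma$ to play the role of the spectral parameter and $\xi\sin\gamma - \tau\cos\gamma$ to play the role of a residual constant — should reduce the fiber operator to $\mathfrak h_a[\tau\sin\gamma + \xi\cos\gamma]$ plus the constant $(\xi\sin\gamma - \tau\cos\gamma)^2$. Taking the infimum of the spectrum over the direct integral, i.e. over $\xi$, and using $\inf\mathrm{sp}\,\mathfrak h_a[\cdot] = \mu_a(\cdot)$ then yields the claimed formula.

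The main obstacle I anticipate is the bookkeeping in the last step: verifying that the pair (appropriately normalized magnetic potential primitive $g(y_1)$, linear electric weight $y_1\sin\gamma\,\mathsf s^{\mathrm{stp}}_{\alpha,a}$) really does combine, after the $\gamma$-rotation of the $(\partial_{y_1},\xi)$ data, into precisely the one-dimensional step Hamiltonian $\mathfrak h_a$ with argument $\tau\sin\gamma + \xi\cos\gamma$. In particular one must be careful that the magnetic field strength $\cos\gamma$ and the electric weight $\sin\gamma$ on the two sides of $y_1=0$ carry the same factor $\mathsf s^{\mathrm{stp}}_{\alpha,a}$ (they do, by \eqref{eq:Ab} and \eqref{eq:V0}), so that on each half-line $\{y_1>0\}$ and $\{y_1<0\}$ the combined quadratic term is $(\,\sqrt{\cos^2\gamma + \sin^2\gamma}\;(\text{affine in }y_1) - (\text{affine in }\tau,\xi)\,)^2 = ((\text{slope})\,y_1 - (\text{shift}))^2$ with slope $1$ on $\{y_1>0\}$ and slope $|a|$ (absorbing the sign into the potential) on $\{y_1<0\}$, matching \eqref{eq:ha} exactly. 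Once the change of variables is set up correctly this is a direct computation, and the rest follows from the spectral decomposition of a direct integral.
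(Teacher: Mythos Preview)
Your proposal is correct and follows essentially the same route as the paper: rotate the plane so the discontinuity line becomes a coordinate axis, choose a gauge making the operator translation-invariant along that axis, Fourier-decompose in that direction, and then complete the square $(\mathsf s y_1\cos\gamma-\xi)^2+(\mathsf s y_1\sin\gamma-\tau)^2=(\mathsf s y_1-(\tau\sin\gamma+\xi\cos\gamma))^2+(\xi\sin\gamma-\tau\cos\gamma)^2$ to identify the fiber with $\mathfrak h_a[\tau\sin\gamma+\xi\cos\gamma]+(\xi\sin\gamma-\tau\cos\gamma)^2$. The only cosmetic difference is that the paper rotates so the step sits along $\{x_2=0\}$ (with the roles of the half-planes swapped and a corresponding sign in front of $\tau$), but this is immaterial to the argument.
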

\begin{proof} For simplicity, we denote   $H^{\mathrm{stp}}_{\alpha, \gamma, a}[\tau]$, $\mathbf{A}^{\mathrm{stp}}_{\alpha, \gamma, a}$ and $\mathsf{s}^{\mathrm{stp}}_{\alpha, \gamma, a}$ by  $H^{\mathrm{stp}}$, $\mathbf{A}^{\mathrm{stp}}$ and $\mathsf{s}^{\mathrm{stp}}$ respectively.
 To estimate the bottom of the spectrum of $H^{\mathrm{stp}}$, we perform a rotation of angle $\alpha$ and  get that\footnote{We refer to~\cite[Sec.1]{popoff2013schrodinger} for rotation invariance principles.} the operator $H^{\mathrm{stp}}$ is unitarily equivalent to the following operator
\[\tilde H^{\mathrm{stp}}:=-(\nabla-i\tilde \Ab^{\mathrm{stp}})^2+(x_2\sin\gamma\, \tilde{\mathsf s}^{\mathrm{stp}}+\tau)^2\]
defined on $\R^2$, with $\curl\tilde\Ab^{\mathrm{stp}}=\tilde{\mathsf s}^{\mathrm{stp}}\cos\gamma$ and $\tilde{\mathsf s}^{\mathrm{stp}} :=\mathbbm 1_{\{x_2<0\}}+a\mathbbm 1_{\{x_2>0\}}$.
{Thus, we get}
\begin{equation}\label{eq: inf spec rot}
	{\inf \mathrm{sp}(H^{\mathrm{stp}}) = \inf \mathrm{sp}(\tilde H^{\mathrm{stp}})}.
\end{equation}
Performing a suitable change of gauge, we choose $\tilde\Ab^{\mathrm{stp}}=-(x_2\cos\gamma\,\tilde{\mathsf s}^{\mathrm{stp}} ,0)$. Then, we write the expression of $\tilde H^{\mathrm{stp}}$ explicitly as
\begin{equation*}\label{eq:rot}
\tilde{H}^{\mathrm{stp}}=-(\partial_{x_1}+i x_2\cos\gamma\, \tilde{\mathsf{s}}^{\mathrm{stp}})^2-\partial^2_{x_2} +(x_2\sin\gamma\,\tilde{\mathsf s}^{\mathrm{stp}} +\tau)^2.
\end{equation*}
By a Fourier transform in the $x_1$ variable, we get
\begin{equation}\label{eq: fiber op}
	\tilde{H}^{\mathrm{stp}}=\int^\oplus_{\xi\in\R}\Big(-\partial^2_{x_2}+(\xi+x_2\cos\gamma\,\tilde{\mathsf s}^{\mathrm{stp}} )^2+(x_2\sin\gamma\, \tilde{\mathsf s}^{\mathrm{stp}} +\tau)^2\Big)\,d\xi,
\end{equation}
where  $-\partial^2_{x_2}+(\xi+x_2\cos\gamma\, \tilde{\mathsf s}^{\mathrm{stp}} )^2+(x_2\sin\gamma\,\tilde{\mathsf s}^{\mathrm{stp}} +\tau)^2$ is a self-adjoint fiber operator on $\R$. Hence 
\[
	\inf\mathrm{sp}\big(\tilde H^{\mathrm{stp}}\big)=\inf_\xi\big[\inf\mathrm{sp}\big(-\partial^2_{x_2}+(\xi+x_2\cos\gamma\,\tilde{\mathsf s}^{\mathrm{stp}} )^2+(x_2\sin\gamma\, \tilde{\mathsf s}^{\mathrm{stp}} +\tau)^2\big)\big].
\]
We can now rewrite
\[(\xi+\mathsf s^{\mathrm{rot}}_{\mathrm{s}}x_2\cos\gamma )^2+(\mathsf s^{\mathrm{rot}}_{\mathrm{s}} x_2\sin\gamma+\tau)^2=(\mathsf s^{\mathrm{rot}}_{\mathrm{s}}x_2+\tau\sin\gamma+\xi\cos\gamma)^2+(\xi\sin\gamma-\tau\cos\gamma)^2.\]
Then using that $\tilde{\mathsf{s}}^{\mathrm{stp}} =  \mathbbm 1_{\{x_2<0\}}+a\mathbbm 1_{\{x_2>0\}}$, the fiber operator in \eqref{eq: fiber op} is unitary equivalent to the operator given by
\[\mathfrak h_a[\tau\sin\gamma+\xi\cos\gamma]+(\xi\sin\gamma-\tau\cos\gamma)^2,\]
where $\mathfrak h_a[\cdot]$ is the operator defined in~\eqref{eq:ha}. Thus,
\begin{equation}\label{eq: spect rot to h xi}
	\inf\mathrm{sp}\big(\tilde H^{\mathrm{stp}}\big)= \inf\mathrm{sp}\big(\mathfrak h_a[\tau\sin\gamma+\xi\cos\gamma]+(\xi\sin\gamma-\tau\cos\gamma)^2\big).
\end{equation}
 Moreover, we have  
\begin{multline}\label{eq: inf sp ha}
\inf\mathrm{sp}\big(\mathfrak h_a[\tau\sin\gamma+\xi\cos\gamma]+(\xi\sin\gamma-\tau\cos\gamma)^2\big)
\\
= \inf\mathrm{sp}\big(\mathfrak h_a[\tau\sin\gamma+\xi\cos\gamma]\big)+(\xi\sin\gamma-\tau\cos\gamma)^2
\\
=\mu_a(\tau\sin\gamma+\xi\cos\gamma)+(\xi\sin\gamma-\tau\cos\gamma)^2,
\end{multline}
where $\mu_a(\cdot)$ is the bottom of the spectrum of $ h_a[\tau\sin\gamma+\xi\cos\gamma]$ (see ~\eqref{eq:mu_a}). 
Gathering {\eqref{eq: inf spec rot}, \eqref{eq: spect rot to h xi} and \eqref{eq: inf sp ha}} completes the proof.
\end{proof}
\subsubsection{Bottom of the essential spectrum of the 2D reduced operator}
In this section, we determine the infimum of the essential spectrum of the $2$D operators $\mathcal L_{\underline\Ab_{\alpha,\gamma,a}}+V_{\underline{\mathbf{ B}}_{\alpha,\gamma,a},\tau}$ introduced in Section \ref{sec: red 2D op}. For each $a\in [-1,1)\setminus \{0\}$, $\gamma\in (0,\pi/2]$, $\alpha\in (0,\pi)$ and $\tau\in \R$, let
\begin{equation}\label{eq: sigma ess 2d}
	\underline\sigma_{ess}(\alpha,\gamma,a,\tau) := \inf \mathrm{sp}_{ess}(\mathcal L_{\underline\Ab_{\alpha,\gamma,a}}+V_{\underline{\mathbf{B}}_{\alpha,\gamma,a},\tau} ).
\end{equation}
Knowing this infimum will be useful in determining values of $(\alpha,\gamma,a,\tau)$ where the bottom of the spectrum $\underline\sigma(\alpha,\gamma,a,\tau)$ of these operators is an eigenvalue. This will be used in establishing Theorem~\ref{thm:main} later. The next proposition is the main result of this section.
\begin{proposition}[Characterization of $\underline\sigma_{ess}(\alpha,\gamma,a,\tau)$]\label{prop:ess}
	 Let $a\in[-1,1)\setminus\{0\}$,  $\alpha\in(0,\pi)$, $\gamma\in(0,\pi/2]$ and $\tau\in \R$. Let $\underline\sigma_{ess}(\alpha,\gamma,a,\tau)$ be as in \eqref{eq: sigma ess 2d}, we have
	\begin{equation*}\label{eq:ess}
	 \underline\sigma_{ess}(\alpha,\gamma,a,\tau)=\inf_{\xi\in \R}\big(\mu_a(\tau\sin\gamma+\xi\cos\gamma)+(\xi\sin\gamma-\tau\cos\gamma)^2\big),\end{equation*}
	where $\mu_a(\cdot)$ is  the value defined in~\eqref{eq:mu_a}.
	\end{proposition}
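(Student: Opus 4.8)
The statement to prove is that the bottom of the essential spectrum of the reduced operator $\mathcal L_{\underline\Ab_{\alpha,\gamma,a}}+V_{\underline{\mathbf{B}}_{\alpha,\gamma,a},\tau}$ on $\R^2_+$ equals the quantity $\inf_{\xi}\big(\mu_a(\tau\sin\gamma+\xi\cos\gamma)+(\xi\sin\gamma-\tau\cos\gamma)^2\big)$, which by Lemma~\ref{lem: bott ess sp Hstp} is precisely $\inf\mathrm{sp}(H^{\mathrm{stp}}_{\alpha,\gamma,a}[\tau])$. The natural route is a two-sided bound obtained via a partition of unity that isolates the two mechanisms producing essential spectrum: escape to infinity \emph{along} the discontinuity line $l_\alpha$ (governed by the step operator $H^{\mathrm{stp}}$), and escape to infinity along the boundary $\partial\R^2_+$ away from $l_\alpha$ (governed by the boundary operator $H^{\mathrm{bnd}}$, whose bottom is $|a|\zeta_{\nu_0}$ by Lemma~\ref{lem: bott essHbnd}); one then checks that the step operator always gives the smaller value, so the boundary contribution does not affect the infimum.

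\textbf{Lower bound (essential spectrum cannot go below $\inf\mathrm{sp}(H^{\mathrm{stp}})$).} First I would introduce a quadratic partition of unity $\{\chi_j\}$ on $\R^2_+$ with $\sum\chi_j^2=1$: one cutoff supported in a neighborhood of the discontinuity line $l_\alpha$ but away from the boundary (after a rotation by $\alpha$ and the gauge change as in the proof of Lemma~\ref{lem: bott ess sp Hstp}, the relevant comparison operator there is exactly $\tilde H^{\mathrm{stp}}$), one supported near $\partial\R^2_+$ but away from $l_\alpha$ (locally this is a translate/scaling of $H^{\mathrm{bnd}}_{\alpha,\gamma,a}[\tau]$, using that away from $l_\alpha$ the field has constant strength $1$ or $a$), one supported near the corner point where $l_\alpha$ meets $\partial\R^2_+$ (a \emph{compact} region, contributing nothing to the essential spectrum by a standard argument — a bounded region with the Neumann form has compact resolvent once one controls the potential $V$, or one simply uses Persson's lemma to discard it), and one (or finitely many) supported in the interior away from both $l_\alpha$ and $\partial\R^2_+$, where the operator is a shifted constant-field Landau-type operator plus the nonnegative electric potential $V$, whose bottom is $\geq\cos\gamma\cdot\underline{\mathsf s}_{\alpha,a}^{2}$-type quantities that are easily seen to exceed $\inf\mathrm{sp}(H^{\mathrm{stp}})$. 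Applying the IMS localization formula $Q(u)=\sum_j Q(\chi_j u)-\sum_j\|\,|\nabla\chi_j|\,u\|^2$ with cutoffs scaled over a large ball of radius $R$ so that the localization error is $O(R^{-2})$, and invoking Persson's characterization $\inf\mathrm{sp}_{ess}=\lim_{R\to\infty}\inf\{Q(u)/\|u\|^2: \supp u\subset\R^2_+\setminus B_R\}$, one gets $\underline\sigma_{ess}(\alpha,\gamma,a,\tau)\geq\min\big(\inf\mathrm{sp}(H^{\mathrm{stp}}_{\alpha,\gamma,a}[\tau]),\,|a|\zeta_{\nu_0}\big)-O(R^{-2})$; letting $R\to\infty$ and then using the elementary inequality $\inf\mathrm{sp}(H^{\mathrm{stp}}_{\alpha,\gamma,a}[\tau])\leq|a|\zeta_{\nu_0}$ (see below) gives the lower bound.

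\textbf{Upper bound (construction of a singular Weyl sequence).} For this I would take the (generalized) ground state of the fiber operator realizing $\inf\mathrm{sp}(H^{\mathrm{stp}}_{\alpha,\gamma,a}[\tau])$, which after the rotation and Fourier reduction of Lemma~\ref{lem: bott ess sp Hstp} is a function of the transverse variable $x_2$ times a plane wave $e^{i\xi x_1}$ at the optimal $\xi$; truncate the plane wave to a window of length $L$ translated to distance $\sim n$ into the region far from $\partial\R^2_+$ (possible because $l_\alpha$ reaches into the interior of $\R^2_+$), so that the truncation error in the quadratic form is $O(L^{-1})$ while the supports march off to infinity, producing an orthonormal Weyl sequence at energy $\inf\mathrm{sp}(H^{\mathrm{stp}})$; this shows $\underline\sigma_{ess}(\alpha,\gamma,a,\tau)\leq\inf\mathrm{sp}(H^{\mathrm{stp}}_{\alpha,\gamma,a}[\tau])$. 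Combined with Lemma~\ref{lem: bott ess sp Hstp} this closes the argument. I also need the comparison $\inf\mathrm{sp}(H^{\mathrm{stp}}_{\alpha,\gamma,a}[\tau])\leq|a|\zeta_{\nu_0}$, which should follow by a bracketing/test-function argument relating the step operator to the boundary operator (e.g.\ restricting the step operator to a half-plane with a Neumann condition, or directly comparing the explicit infima via Lemma~\ref{lem: bott ess sp Hstp} and~\eqref{eq:Hbnd1}).

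\textbf{Main obstacle.} The delicate point is the bookkeeping in the partition of unity near the corner where $l_\alpha$ meets $\partial\R^2_+$: one must be sure the Neumann boundary condition there, together with the nontrivial geometry of the field near the corner, does not create essential spectrum below $\inf\mathrm{sp}(H^{\mathrm{stp}})$ — this is where the compactness of the corner region (so it is invisible to $\mathrm{sp}_{ess}$) and the fact that far from the corner the operator genuinely decouples into the step and boundary models must be argued carefully, likely following the scheme of \cite{Assaad3} for the $\gamma=0$ case and \cite{popoff2013schrodinger} for the constant-field case. A secondary technical nuisance is making the Weyl sequence construction compatible with the Neumann boundary condition and with the unboundedness of $V$ transverse to $l_\alpha$, which is handled by placing the truncated trial states at positive distance from $\partial\R^2_+$ that grows with $n$.
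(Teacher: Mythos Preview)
Your overall architecture (Persson + IMS + Weyl sequence along $l_\alpha$) matches the paper's, and your upper bound is essentially the paper's argument. The gap is in the lower bound, specifically in the step where you need
\[
\inf\mathrm{sp}\big(H^{\mathrm{stp}}_{\alpha,\gamma,a}[\tau]\big)\ \leq\ |a|\,\zeta_{\nu_0}\qquad\text{for every }\tau\in\R.
\]
This inequality is \emph{false} in general. For instance, take $a\in(0,1)$ and $\gamma=\pi/2$: then by Lemma~\ref{lem: bott ess sp Hstp} one has $\inf\mathrm{sp}(H^{\mathrm{stp}}[\tau])=\inf_\xi\big(\mu_a(\tau)+\xi^2\big)=\mu_a(\tau)$, and $\mu_a(\tau)\to 1$ as $\tau\to+\infty$, whereas $|a|\zeta_{\nu_0}=a\zeta_{\nu_0}<1$. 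So for large $\tau$ your lower bound would only produce $\min\big(\inf\mathrm{sp}(H^{\mathrm{stp}}),\,|a|\zeta_{\nu_0}\big)=|a|\zeta_{\nu_0}$, strictly below the claimed value, and the two-sided bound would not close.

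The conceptual point you are missing is that the boundary model $H^{\mathrm{bnd}}$ does have bottom $|a|\zeta_{\nu_0}$, but that bottom is realized by states concentrated near the zero set $\Upsilon^{(2)}_{\alpha,\gamma,a,\tau}$ of the electric potential, and that zero set is a line \emph{parallel to} $l_\alpha$ at fixed (signed) distance $\tau/(a\sin\gamma)$. Hence ``near $\partial\R^2_+$ and far from $l_\alpha$'' automatically means ``where $V_{\underline{\mathbf B}_{\alpha,\gamma,a},\tau}$ is large'', so the crude bound $\underline Q(\chi_j u)\geq |a|\zeta_{\nu_0}\|\chi_j u\|^2$ throws away exactly the information you need. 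The paper fixes this by choosing the partition of unity in polar coordinates (cutoffs depending only on the angle $\theta$, constant in $\rho$) and, in the two boundary sectors $\chi_1,\chi_3$, simply dropping the magnetic part and keeping the electric potential: one computes $\dist(\supp\chi_j u,\Upsilon^{(j)})\gtrsim R$, hence $V\gtrsim R^2$ there, and $\underline Q(\chi_j u)\geq cR^2\|\chi_j u\|^2\to\infty$. With this, the minimum over the three pieces is $\inf\mathrm{sp}(H^{\mathrm{stp}})$ for $R$ large, and no comparison with $|a|\zeta_{\nu_0}$ is ever needed. Replacing your boundary estimate by this potential-growth argument repairs the proof; the rest of your outline is sound.
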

For the proof of Proposition \ref{prop:ess} we need the following lemma.
\begin{lemma}\label{lem:persson}
	Let $a\in [-1,1)\setminus \{0\}$, $\alpha\in (0,\pi)$, $\gamma\in (0,\pi/2]$ and let $\tau \in \R$. Let $\underline\sigma_{ess}(\alpha,\gamma,a,\tau)$ be as in \eqref{eq: sigma ess 2d}. It holds
\[ 
	\underline\sigma_{ess}(\alpha,\gamma,a,\tau)=\lim_{R\rightarrow+\infty}\Sigma(\mathcal L_{\underline\Ab_{\alpha,\gamma,a}}+V_{\underline{\mathbf{ B}}_{\alpha,\gamma,a},\tau},R),
\]
with
\[
	\Sigma(\mathcal L_{\underline\Ab_{\alpha,\gamma,a}}+V_{\underline{\mathbf{ B}}_{\alpha,\gamma,a},\tau},R) := \inf_{\substack{u\in  C_0^\infty(\overline {\R^2_+}\cap\complement \mathcal B_R)\\ u \neq 0}}\frac {\underline Q^{\tau}_{\alpha,\gamma, a}(u)}{\|u\|^2_{L^2(\R_+^2)}},
\]
	where	$\mathcal B_R$  is a ball of radius $R$ centered at the origin, $\complement \mathcal B_R$ is its complement in $\R^2$, and $\underline Q^{\tau}_{\alpha,\gamma,a}$ is the quadratic form defined in~\eqref{eq:Qa}.
	\end{lemma}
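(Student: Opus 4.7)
The plan is to prove the identity by the standard two-sided Persson-type argument. First I would observe that $R\mapsto\Sigma(\mathcal L_{\underline\Ab_{\alpha,\gamma,a}}+V_{\underline{\mathbf{ B}}_{\alpha,\gamma,a},\tau},R)$ is non-decreasing, since enlarging $R$ restricts the class of admissible test functions in the Rayleigh quotient; consequently the limit as $R\to+\infty$ exists in $(-\infty,+\infty]$. I would then prove the two inequalities separately.

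For the inequality $\lim_{R\to\infty}\Sigma(\cdot,R)\leq\underline\sigma_{ess}(\alpha,\gamma,a,\tau)$, I would take any $\lambda\in\mathrm{sp}_{ess}(\mathcal L_{\underline\Ab_{\alpha,\gamma,a}}+V_{\underline{\mathbf{ B}}_{\alpha,\gamma,a},\tau})$ and extract a Weyl singular sequence $(u_n)\subset\mathcal D(\underline Q^\tau_{\alpha,\gamma,a})$ that is $L^2$-normalized, converges weakly to zero, and satisfies $\underline Q^\tau_{\alpha,\gamma,a}(u_n)\to\lambda$. Since the magnetic potential $\underline\Ab_{\alpha,\gamma,a}$ and the potential $V_{\underline{\mathbf B},\tau}$ are bounded on the Lipschitz set $\overline{\R_+^2}\cap\mathcal B_{2R}$, the natural magnetic form domain embeds compactly into $L^2(\overline{\R_+^2}\cap\mathcal B_{2R})$ by Rellich's theorem, so $\|u_n\|_{L^2(\mathcal B_{2R})}\to 0$. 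Multiplying by a smooth cutoff $\chi_R$ that vanishes on $\mathcal B_R$ and equals one outside $\mathcal B_{2R}$, the IMS-type identity gives $\underline Q^\tau_{\alpha,\gamma,a}(\chi_R u_n)=\underline Q^\tau_{\alpha,\gamma,a}(u_n)+o(1)$ while $\|\chi_R u_n\|_{L^2}\to 1$. A standard density argument replaces $\chi_R u_n$ by admissible functions in $C_0^\infty(\overline{\R_+^2}\cap\complement\mathcal B_R)$, which yields $\Sigma(\cdot,R)\leq\lambda$, and letting $R\to\infty$ concludes.

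For the converse inequality $\lim_{R\to\infty}\Sigma(\cdot,R)\geq\underline\sigma_{ess}(\alpha,\gamma,a,\tau)$, I would introduce an IMS partition of unity $\chi_{1,R}^2+\chi_{2,R}^2=1$ with $\supp\chi_{1,R}\subset\mathcal B_{2R}$, $\supp\chi_{2,R}\subset\complement\mathcal B_R$ and $|\nabla\chi_{j,R}|\leq C/R$. The IMS localization formula yields
\[
\underline Q^\tau_{\alpha,\gamma,a}(u)=\underline Q^\tau_{\alpha,\gamma,a}(\chi_{1,R}u)+\underline Q^\tau_{\alpha,\gamma,a}(\chi_{2,R}u)-\sum_{j=1,2}\bigl\||\nabla\chi_{j,R}|\,u\bigr\|_{L^2}^2.
\]
By definition of $\Sigma(\cdot,R)$, the outer term is bounded below by $\Sigma(\cdot,R)\|\chi_{2,R}u\|_{L^2}^2$, while the localized operator associated with the inner term on the bounded Lipschitz domain $\overline{\R_+^2}\cap\mathcal B_{2R}$ has compact resolvent (both $\underline\Ab_{\alpha,\gamma,a}$ and $V_{\underline{\mathbf B},\tau}$ are bounded there), so for every $\epsilon>0$ only finitely many of its eigenvalues lie below $\Sigma(\cdot,R)-\epsilon$. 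Combined with the $O(R^{-2})$ IMS error, the min-max principle shows that the spectral projection of $\mathcal L_{\underline\Ab_{\alpha,\gamma,a}}+V_{\underline{\mathbf{ B}}_{\alpha,\gamma,a},\tau}$ onto $(-\infty,\Sigma(\cdot,R)-2\epsilon)$ has finite rank, whence $\underline\sigma_{ess}\geq\Sigma(\cdot,R)-2\epsilon$. Sending $R\to\infty$ and $\epsilon\to 0$ finishes the proof.

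The main obstacle is justifying the local compactness and the cutoff computation in the presence of the Neumann condition on $\partial\R_+^2$, the discontinuity of $\underline\Ab_{\alpha,\gamma,a}$ across the line $l_\alpha$, and the unbounded confining potential $V_{\underline{\mathbf B},\tau}$ which grows quadratically. However, since $\underline\Ab_{\alpha,\gamma,a}\in H^1_{\mathrm{loc}}$ and both $\underline\Ab_{\alpha,\gamma,a}$ and $V_{\underline{\mathbf B},\tau}$ are bounded on any ball $\mathcal B_{2R}$, the magnetic Sobolev norm is locally equivalent to the ordinary $H^1$ norm and Rellich applies; moreover the IMS identity only involves the gradient term, so the quadratic growth of $V_{\underline{\mathbf B},\tau}$ does not enter the error estimates and in fact is favourable as it provides confinement in the interior region.
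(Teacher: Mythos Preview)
Your proposal is correct and follows the standard Persson argument; this is precisely what the paper invokes, since the paper does not give its own proof of this lemma but merely cites \cite{persson1960bounds,popoff2013schrodinger,agmon2014lectures} and \cite[Appendix~A]{Assaad3} for the details in similar settings. One small imprecision: in the first inequality you write $\underline Q^\tau_{\alpha,\gamma,a}(\chi_R u_n)=\underline Q^\tau_{\alpha,\gamma,a}(u_n)+o(1)$ as an identity, but the IMS formula actually gives $\underline Q^\tau(\chi_R u_n)=\underline Q^\tau(u_n)-\underline Q^\tau(\eta_R u_n)+o(1)$ with $\eta_R=(1-\chi_R^2)^{1/2}$; since $\underline Q^\tau(\eta_R u_n)\geq 0$ and you only need the upper bound $\underline Q^\tau(\chi_R u_n)\leq\underline Q^\tau(u_n)+o(1)$, the conclusion stands.
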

Lemma \ref{lem:persson} is a well-known Persson-type result, useful to characterize the bottom of essential spectra. We refer the reader to~\cite{persson1960bounds,popoff2013schrodinger,agmon2014lectures} for this type of results, and~\cite[Appendix~A]{Assaad3} for a detailed proof in similar situations.
Moreover, in the proof of Proposition \ref{prop:ess}, we shall see the importance of determining where  the electric potential $V_{\underline{\mathbf{ B}}_{\alpha,\gamma, a},\tau}$ attains its infimum and where it is big. To that end, we define the set 
\begin{equation}\label{eq:S}
\Upsilon_{\alpha, \gamma, a,\tau} = \left\{x\in\overline {\R^2_+}~:~V_{\underline{\mathbf{ B}}_{\alpha,\gamma,a},\tau}(x)=\inf_{y\in\R^2_+}V_{\underline{\mathbf{ B}}_{\alpha,\gamma,a},\tau}(y)\right\}.
\end{equation}
We note that $\Upsilon_{\alpha, \gamma, a,\tau}$  is not necessary $V_{\underline{\mathbf{ B}}_{\alpha, \gamma, a},\tau}^{-1}(\{0\})$; determining this set  depends on the values of $a\in[-1,1)\setminus\{0\}$ and $\tau\in\R$, as shown in what follows. We recall that 
$V_{\underline{\mathbf{ B}}_{\alpha,\gamma,a},\tau}$ is defined for $x=(x_1, x_2)\in \R^2_+$ as
	\begin{eqnarray*} V_{\underline{\mathbf{ B}}_{\alpha, \gamma, a},\tau}(x)&=&\big(x_1\underline b_2-x_2\underline b_1-\tau\big)^2,
		\\
		&=& \nonumber} 	{\begin{cases}  (x_1\sin\alpha\sin\gamma - x_2 \cos\alpha\sin\gamma - \tau)^2 &\mbox{for }\, \, (x_1, x_2)\in D^1_\alpha, \\ [a(x_1\sin\alpha\sin\gamma - x_2 \cos\alpha\sin\gamma )- \tau]^2 &\mbox{for }\,\, (x_1, x_2)\in D^2_\alpha,\end{cases}
	\end{eqnarray*}
where $D_\alpha^1$ and $D_\alpha^2$ are as in~\eqref{eq: def D1alpha R2} and~\eqref{eq: def D12alpha R2}.
We now define, for $x = (x_1, x_2)\in \R^2$,
\begin{align*}
V^{(1)}_{\underline{\mathbf{ B}}_{\alpha, \gamma},\tau}(x)&=(x_1\sin\alpha\sin\gamma-x_2\cos\alpha\sin\gamma-\tau)^2\\
V^{(2)}_{\underline{\mathbf{ B}}_{\alpha, \gamma, a},\tau}(x)&=[a(x_1\sin\alpha\sin\gamma-x_2\cos\alpha\sin\gamma)-\tau]^2
\end{align*}
and  the following subsets of $\R^2$
\begin{align}\label{eq:S12}
\Upsilon^{(1)}_{\alpha, \gamma,  \tau}&=\big(V^{(1)}_{\underline{\mathbf{ B}}_{\alpha, \gamma},\tau}\big)^{-1}(\{0\})=\left\{(x_1,x_2)\in\R^2~:~x_1\sin\alpha-x_2\cos\alpha=\frac \tau{\sin\gamma}\right\},\nonumber\\
\Upsilon^{(2)}_{\alpha, \gamma, a, \tau}& =\big(V^{(2)}_{\underline{\mathbf{ B}}_{\alpha, \gamma, a},\tau}\big)^{-1}(\{0\})=\left\{(x_1,x_2)\in\R^2~:~x_1\sin\alpha-x_2\cos\alpha=\frac \tau{a\sin\gamma}\right\}.
\end{align}
Note that $\Upsilon^{(1)}_{\alpha, \gamma, \tau}$ and $\Upsilon^{(2)}_{\alpha, \gamma, a, \tau}$ are two lines parallel to the discontinuity line $l_\alpha$ of equation $x_1\sin\alpha-x_2\cos\alpha=0$ . Moreover, for $x\in\R^2$
\begin{equation}\label{eq:V12}
V^{(1)}_{\underline{\mathbf{ B}}_{\alpha, \gamma},\tau}(x)=\sin^2\gamma\dist^2(x,\Upsilon^{(1)}_{\alpha, \gamma, \tau})\qquad V^{(2)}_{\underline{\mathbf{ B}}_{\alpha, \gamma, a},\tau}(x)=a^2\sin^2\gamma\dist^2(x,\Upsilon^{(2)}_{\alpha, \gamma, a,\tau}).
\end{equation}

We keep denoting by $\Upsilon^{(1)}_{\alpha, \gamma, \tau}$ (resp. $\Upsilon^{(2)}_{\alpha, \gamma, a,\tau}$) the intersection between $\overline{\R_+^2}$ and $\Upsilon^{(1)}_{\alpha, \gamma, \tau}$ (resp. $\Upsilon^{(2)}_{\alpha, \gamma, a,\tau}$). 
\begin{lem}[The set $\Upsilon_{\alpha, \gamma, a,\tau}$]\label{lem:upsilon}
	Let $a\in [-1, 1)\setminus \{0\}$, $\alpha\in (0,\pi)$, and $\gamma\in (0,\pi/2]$. Let $\Upsilon_{\alpha, \gamma, a,\tau}\subset\R^2$ be the set defined in \eqref{eq:S}. It holds 
\[
\Upsilon_{\alpha, \gamma, a,\tau} = \begin{cases}
	l_\alpha &\mbox{if}\,\,\, a\in [-1,0), \tau < 0 \vspace{0.1cm}
	\\
	\Upsilon^{(1)}_{\alpha, \gamma, \tau}\cup \Upsilon^{(2)}_{\alpha, \gamma, a,\tau} &\mbox{if}\,\,\, a\in [-1,0), \tau \geq 0 \vspace{0.1cm}
	\\
	\Upsilon^{(1)}_{\alpha, \gamma, \tau} &\mbox{if}\,\,\, a\in (0,1), \tau \geq 0 \vspace{0.1cm}
	\\
	\Upsilon^{(2)}_{\alpha, \gamma, a,\tau} &\mbox{if}\,\,\, a\in (0,1), \tau < 0.
\end{cases}
\]
\end{lem}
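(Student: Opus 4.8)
The key is to analyze the global infimum of the electric potential $V_{\underline{\mathbf{B}}_{\alpha,\gamma,a},\tau}$ over $\overline{\R^2_+}$ by comparing its two pieces $V^{(1)}$ on $D^1_\alpha$ and $V^{(2)}$ on $D^2_\alpha$. Using the representation \eqref{eq:V12}, each $V^{(j)}$ is (up to the positive factor $\sin^2\gamma$ or $a^2\sin^2\gamma$) the squared distance to a line parallel to $l_\alpha$, namely $\Upsilon^{(1)}_{\alpha,\gamma,\tau}$ at signed distance $\tau/\sin\gamma$ from $l_\alpha$ and $\Upsilon^{(2)}_{\alpha,\gamma,a,\tau}$ at signed distance $\tau/(a\sin\gamma)$. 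The crucial geometric fact is that $D^1_\alpha$ is exactly the component of $\overline{\R^2_+}$ lying on the side of $l_\alpha$ where $x_1\sin\alpha - x_2\cos\alpha>0$ (i.e. $D^1_\alpha = P^+_\alpha\cap\overline{\R^2_+}$), while $D^2_\alpha$ lies on the side where this quantity is negative. Hence within $D^1_\alpha$ the function $V^{(1)}$ reaches $0$ if and only if the line $\Upsilon^{(1)}$ meets $D^1_\alpha$, which happens precisely when $\tau/\sin\gamma \geq 0$; if $\tau<0$ the line $\Upsilon^{(1)}$ lies entirely in the open half-plane $P^-_\alpha$ and $V^{(1)}$ decreases toward $l_\alpha$ inside $D^1_\alpha$, attaining its $D^1_\alpha$-infimum $\tau^2$ on $l_\alpha$. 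Symmetrically, inside $D^2_\alpha$, $V^{(2)}$ reaches $0$ iff $\Upsilon^{(2)}$ meets $D^2_\alpha$, i.e. iff $\tau/(a\sin\gamma)\leq 0$; otherwise its $D^2_\alpha$-infimum is attained on $l_\alpha$ and equals $\tau^2$ as well.

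\textbf{Case analysis.} I would organize the argument by the sign of $a$ and of $\tau$, which is exactly the split in the statement. For $a\in[-1,0)$: if $\tau<0$ then $\tau/\sin\gamma<0$ so $\Upsilon^{(1)}$ is disjoint from $D^1_\alpha$ and $\inf_{D^1_\alpha}V^{(1)}=\tau^2$; also $\tau/(a\sin\gamma)>0$ so $\Upsilon^{(2)}$ is disjoint from $D^2_\alpha$ and $\inf_{D^2_\alpha}V^{(2)}=\tau^2$; both pieces attain their infimum on $l_\alpha$ (by continuity of the distance function and since $l_\alpha=\partial D^1_\alpha\cap\partial D^2_\alpha$ inside $\overline{\R^2_+}$), and $V$ restricted to $l_\alpha$ is identically $\tau^2>0$ since on $l_\alpha$ we have $x_1\sin\alpha-x_2\cos\alpha=0$, so $\Upsilon_{\alpha,\gamma,a,\tau}=l_\alpha$. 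If $\tau\geq 0$ (with $a<0$) then $\tau/\sin\gamma\geq0$ puts $\Upsilon^{(1)}$ inside (the closure of) $D^1_\alpha$, giving $\inf_{D^1_\alpha}V^{(1)}=0$, and $\tau/(a\sin\gamma)\leq0$ puts $\Upsilon^{(2)}$ inside $D^2_\alpha$, giving $\inf_{D^2_\alpha}V^{(2)}=0$; hence the global infimum is $0$ and is attained exactly on $\Upsilon^{(1)}_{\alpha,\gamma,\tau}\cup\Upsilon^{(2)}_{\alpha,\gamma,a,\tau}$. For $a\in(0,1)$: now $\tau/\sin\gamma$ and $\tau/(a\sin\gamma)$ have the \emph{same} sign as $\tau$. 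If $\tau\geq 0$, then $\Upsilon^{(1)}$ meets $D^1_\alpha$ (infimum $0$ there) while $\Upsilon^{(2)}$ sits on the $P^+_\alpha$ side, disjoint from $D^2_\alpha$, so $V^{(2)}$ does not vanish on $D^2_\alpha$ and its infimum there is the boundary value on $l_\alpha$, which is $\tau^2>0$ (for $\tau=0$ both pieces vanish on $l_\alpha=\Upsilon^{(1)}=\Upsilon^{(2)}$, consistently); hence $\inf V=0$ attained only on $\Upsilon^{(1)}_{\alpha,\gamma,\tau}$. If $\tau<0$, symmetrically $\Upsilon^{(2)}$ meets $D^2_\alpha$ (infimum $0$) but $\Upsilon^{(1)}$ is disjoint from $D^1_\alpha$, so the infimum $0$ is attained only on $\Upsilon^{(2)}_{\alpha,\gamma,a,\tau}$.

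\textbf{Technical points to be careful about.} The main (mild) obstacle is handling the boundary line $l_\alpha$ cleanly: one must check that, in the cases where a level set $\Upsilon^{(j)}$ does not enter $D^j_\alpha$, the restriction of $V^{(j)}$ to the closed region $\overline{D^j_\alpha}$ truly attains its infimum on $l_\alpha$ and not merely approaches it — this follows because $x\mapsto \dist(x,\Upsilon^{(j)})$ is affine along any ray crossing from $l_\alpha$ into $D^j_\alpha$ (the regions are cones/half-planes), so the minimum over the convex region $\overline{D^j_\alpha}$ of this affine-in-the-relevant-direction function is on the face $l_\alpha$. One must also verify that the two candidate infima ($0$ versus $\tau^2$) are correctly compared: whenever one piece attains $0$ the global infimum is $0$ and the minimizing set is the union of those $\Upsilon^{(j)}$ that actually lie in the corresponding $\overline{D^j_\alpha}$; whenever neither piece attains $0$ (which by the sign bookkeeping only happens for $a<0$, $\tau<0$), the common infimum $\tau^2$ is attained precisely on $l_\alpha$. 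A small point is the borderline $\tau=0$, where $\Upsilon^{(1)}=\Upsilon^{(2)}=l_\alpha$ and all four formulas in the statement collapse to $l_\alpha$, so the stated piecewise description is consistent on overlaps. Finally I would note that $\Upsilon_{\alpha,\gamma,a,\tau}$ as defined in \eqref{eq:S} is the intersection with $\overline{\R^2_+}$, matching the convention fixed just before the lemma for $\Upsilon^{(1)}$ and $\Upsilon^{(2)}$, so no extra truncation is needed.
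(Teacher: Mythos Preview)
Your proposal is correct and follows essentially the same approach as the paper: a case-by-case analysis according to the signs of $a$ and $\tau$, checking on which side of $l_\alpha$ each zero-level line $\Upsilon^{(j)}$ falls and hence whether $V^{(j)}$ can vanish on $\overline{D^j_\alpha}$. The paper's argument is actually terser than yours (it simply states the conclusions in each of the four cases, referring to the figures), so your additional care about the boundary $l_\alpha$ and the $\tau=0$ degeneration only makes the reasoning more explicit.
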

Indeed (see Figures~\eqref{fig:2} and~\eqref{fig:3}),
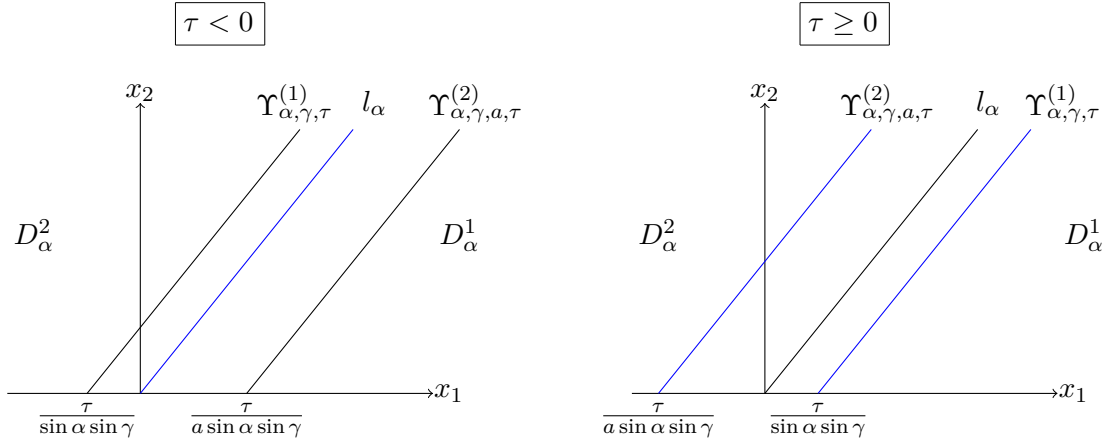
\begin{figure}
	\centering
	\begin{tikzpicture}[scale= 0.7]
				\draw[->] (0,0) to (0,5.5);
				\draw[->] (-2.5,0) to (5.5,0);
				\draw[blue] (0,0) to (4,5);
				\node at (4.4,5.5) {$l_\alpha$};
				\draw (-1,0) to (3,5);
				\node at (2.9, 5.5) {$\Upsilon^{(1)}_{\alpha, \gamma, \tau}$};
				\node at (-1, -0.5) {$\frac{\tau}{\sin\alpha\sin\gamma}$};
				\node at (6.3, 5.5) {$\Upsilon^{(2)}_{\alpha, \gamma, a,\tau}$};
				\draw (2,0) to (6,5);
				\node at (2,-0.5) {$\frac{\tau}{a\sin\alpha\sin\gamma}$};
				\node at (-2,3) {{$D^2_\alpha$}};
				\node at (6,3) {{$D^1_\alpha$}};
				\node at (0,5.7) {{$x_2$}};
				\node at (5.8, 0) {{$x_1$}};
				\node at (1.5,7) {$\boxed{\textcolor{white}{\sum}\hspace{-0.4cm}\tau <0\hspace{0.15cm}}$};
			\end{tikzpicture}
			\hspace{0.5cm}
		\begin{tikzpicture}[scale= 0.7]
			\draw[->] (0,0) to (0,5.5);
			\draw[->] (-2.5,0) to (5.5,0);
			\draw (0,0) to (4,5);
			\node at (4.2,5.5) {$l_\alpha$};
			\draw[blue] (1,0) to (5,5);
			\node at (5.6, 5.5) {$\Upsilon^{(1)}_{\alpha, \gamma, \tau}$};
			\node at (1, -0.5) {$\frac{\tau}{\sin\alpha\sin\gamma}$};
			\node at (2.3, 5.5) {$\Upsilon^{(2)}_{\alpha, \gamma, a,\tau}$};
			\draw[blue] (-2,0) to (2,5);
			\node at (-2,-0.5) {$\frac{\tau}{a\sin\alpha\sin\gamma}$};
			\node at (-2,3) {{$D^2_\alpha$}};
			\node at (6,3) {{$D^1_\alpha$}};
			\node at (0,5.7) {{$x_2$}};
			\node at (5.8, 0) {{$x_1$}};
			\node at (1.5,7) {$\boxed{\textcolor{white}{\sum}\hspace{-0.4cm}\tau \geq 0\hspace{0.15cm}}$};
		\end{tikzpicture}		
	\caption{ For $\alpha\in(0,\pi)$, $
		\gamma\in(0,\pi/2]$ and $a\in[-1,0)$, the set $\Upsilon_{\alpha, \gamma, a,\tau}$ is drawn in blue. For $\tau\geq0$ (at right), $\Upsilon_{\alpha, \gamma,a,\tau}=\Upsilon^{(1)}_{\alpha, \gamma, \tau}\cup\Upsilon^{(2)}_{\alpha, \gamma, a,\tau}$. For $\tau<0$ (at left), $\Upsilon_{\alpha, \gamma, a,\tau}=l_\alpha$.}
	\label{fig:2}
\end{figure}
\begin{figure}
	\centering\begin{tikzpicture}[scale= 0.7]
					\draw[->] (0,0) to (0,5.5);
					\draw[->] (-4.5,0) to (4.8,0);
					\draw (0,0) to (4,5);
					\node at (4.5,5.5) {$l_\alpha$};
					\draw[blue] (-3,0) to (1,5);
					\node at (1.2, 5.5) {$\Upsilon^{(2)}_{\alpha, \gamma, a,\tau}$};
					\node at (-3.2, -0.5) {$\frac{\tau}{a\sin\alpha\sin\gamma,\tau}$};
					\node at (3.1, 5.5) {$\Upsilon^{(1)}_{\alpha, \gamma, \tau}$};
					\draw (-1,0) to (3,5);
					\node at (-0.8,-0.5) {$\frac{\tau}{\sin\alpha\sin\gamma}$};
					\node at (-2,3) {{$D^2_\alpha$}};
					\node at (4.5,3) {{$D^1_\alpha$}};
					\node at (0,5.7) {{$x_2$}};
					\node at (5.1, 0) {{$x_1$}};
					\node at (0.5,7) {$\boxed{\textcolor{white}{\sum}\hspace{-0.4cm}\tau <0\hspace{0.15cm}}$};
				\end{tikzpicture}
				\hspace{0.3cm}
			\begin{tikzpicture}[scale= 0.7]
				\draw[->] (0,0) to (0,5.5);
				\draw[->] (-1,0) to (7.3,0);
				\draw  (0,0) to (4,5);
				\node at (4,5.5) {$l_\alpha$};
				\draw [blue](1,0) to (5,5);
				\node at (5.3, 5.5) {$\Upsilon^{(1)}_{\alpha, \gamma, \tau}$};
				\node at (0.8, -0.5) {$\frac{\tau}{\sin\alpha\sin\gamma}$};
				\node at (7.3, 5.5) {$\Upsilon^{(2)}_{\alpha, \gamma, a,\tau}$};
				\draw (3,0) to (7,5);
				\node at (3.2,-0.5) {$\frac{\tau}{a\sin\alpha\sin\gamma}$};
				\node at (-0.5,3) {{$D^2_\alpha$}};
				\node at (7,3) {{$D^1_\alpha$}};
				\node at (0,5.7) {{$x_2$}};
				\node at (7.6, 0) {{$x_1$}};
				\node at (3.7,7) {$\boxed{\textcolor{white}{\sum}\hspace{-0.4cm}\tau \geq 0\hspace{0.15cm}}$};
			\end{tikzpicture}		
	\caption{ For $\alpha\in(0,\pi)$, $
		\gamma\in(0,\pi/2]$ and $a\in(0,1)$, the set $\Upsilon_{\alpha, \gamma, a,\tau}$ is drawn in blue. For $\tau\geq0$ (at right), $\Upsilon_{\alpha, \gamma, a,\tau}=\Upsilon^{(1)}_{\alpha, \gamma, \tau}$. For $\tau<0$ (at left), $\Upsilon_{\alpha, \gamma, a,\tau}=\Upsilon^{(2)}_{\alpha, \gamma, a,\tau}$.}
	\label{fig:3}
\end{figure}
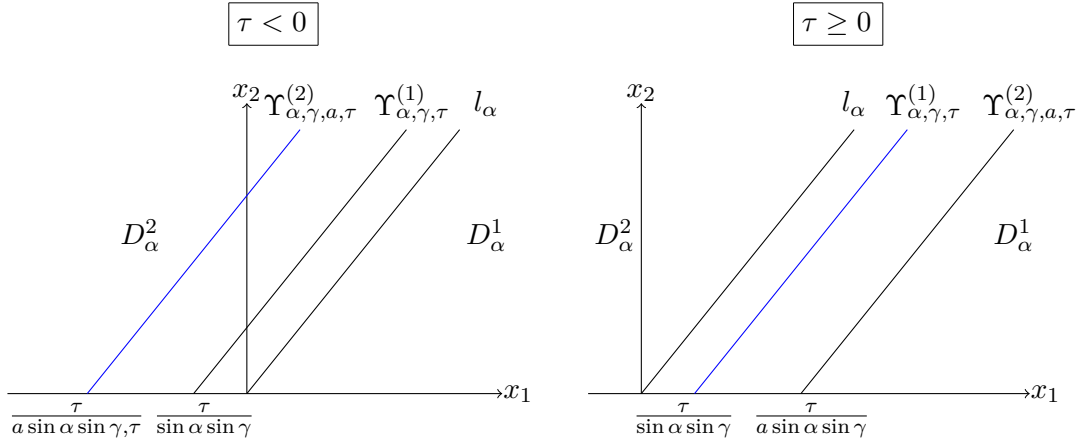
\begin{itemize}
	\item \emph{Case $a\in[-1,0)$ and $\tau < 0$.}
	One observes that $V_{\underline{\mathbf{ B}}_{{\alpha, \gamma, a}},\tau}^{-1}(\{0\})=\emptyset$. In this case, 
	\begin{equation}\label{eq:S16}
		\Upsilon_{\alpha, \gamma, a,\tau}=l_\alpha\quad \mathrm{and}\quad \inf V_{\underline{\mathbf{ B}}_{\alpha, \gamma, a},\tau}=\tau^2.
	\end{equation}
	\item \emph {Case $a\in[-1,0)$ and $\tau\geq0$}.
	Here, 
	\begin{equation*}\label{eq:S13}
		\Upsilon_{\alpha, \gamma, a,\tau}=V_{\underline{\mathbf{ B}}_{\alpha, \gamma, a},\tau}^{-1}(\{0\})=\Upsilon^{(1)}_{\alpha, \gamma, \tau}\cup\Upsilon^{(2)}_{\alpha, \gamma, a,\tau}.
	\end{equation*}
	Note that this set is $l_\alpha$ for $\tau=0$.
	\item \emph{Case $a\in(0,1)$ and $\tau < 0$}.
	In this case,  
	\begin{equation*}\label{eq:S15}
		\Upsilon_{\alpha, \gamma, a,\tau}=V_{\underline{\mathbf{ B}}_{\alpha, \gamma, a},\tau}^{-1}(\{0\})=\Upsilon^{(2)}_{\alpha, \gamma, a,\tau}.
	\end{equation*}
	\item \emph{Case $a\in(0,1)$ and $\tau \geq 0$.}
	We have
	\begin{equation*}\label{eq:S14}
		\Upsilon_{\alpha, \gamma, a,\tau}=V_{\underline{\mathbf{ B}}_{\alpha, \gamma, a},\tau}^{-1}(\{0\})=\Upsilon^{(1)}_{\alpha, \gamma, \tau}.
	\end{equation*}
	Again, this set is  $l_\alpha$ for $\tau=0$.
\end{itemize}
Now, we prove Proposition \ref{prop:ess}. 
\begin{proof}[Proof of Proposition \ref{prop:ess}]
	The idea of the proof is similar to that in~\cite[Proposition~3.2]{popoff2013schrodinger} (see also~\cite[Lemma~3.7]{Assaad3}). However, one has to take into consideration the particular properties of the electric potential discussed above, which are induced by the discontinuity of our magnetic field.
	
	The `step operator' $H^{\mathrm{stp}}_{\alpha,\gamma,a}[\tau]$ (in~\eqref{eq:Hedg}) and the quadratic form $\underline{Q}^\tau_{\alpha,\gamma,a}$ (in~\eqref{eq:Qa}) are used frequently in the proof. Throughout the proof, we simplify their notation and denote them respectively by  $H^{\mathrm{stp}}$ and $\underline{Q}$. 
	
In light of Lemma~\ref{lem:persson}, it suffices to prove that
	\begin{equation}\label{eq:ess1}
	\lim_{R\rightarrow+\infty}\Sigma(\mathcal L_{\underline\Ab_{\alpha,\gamma,a}}+V_{\underline{\mathbf{ B}}_{\alpha,\gamma,a},\tau},R)=\inf_{\xi\in \R}\big(\mu_a(\tau\sin\gamma+\xi\cos\gamma)+(\xi\sin\gamma-\tau\cos\gamma)^2\big).
	\end{equation}
	We establish separately an upper bound and a lower bound for the limit above.

	 \paragraph{\underline{\emph {Upper bound}}} Let $\epsilon>0$ and $R>0$. Considering the operator $H^{\mathrm{stp}}$, the min-max principle ensures the existence of a  normalized function $u_\epsilon \in C_0^\infty(\R^2)\setminus\{0\}$ such that 
	\begin{eqnarray}\label{eq:up1}
	 \langle H^{\mathrm{stp}}u_\epsilon,u_\epsilon\rangle &<& \inf\mathrm{sp}(H^{\mathrm{stp}})+\epsilon
	 \\
	 &=&\inf_{\xi\in \R}\big(\mu_a(\tau\sin\gamma+\xi\cos\gamma)+(\xi\sin\gamma-\tau\cos\gamma)^2\big)+\epsilon,\nonumber
	 \end{eqnarray}
	 where the last equality follows from Lemma \ref{lem: bott ess sp Hstp}.
	 Let the function  $u_{\epsilon,\mathbf{r}}$ be the translation of $u_\epsilon$ by a vector $\mathbf{r}$, i.e., $u_{\epsilon, \mathbf{r}}(x) = u_\epsilon (x-\mathbf{r})$ for $x\in\R^2$, where  $\mathbf{r}$ is an upward direction vector of the discontinuity line $l_\alpha:x_1\sin\alpha-x_2\cos\alpha=0$. We have $u_{\epsilon,\mathbf{r}}\in  C_0^\infty(\R^2){\setminus\{0\}}$. Moreover, 
	 there exists $r_0>0$ such that the function $u_{\epsilon,\mathbf{r}}$ is supported in $\R_+^2\cap\complement \mathcal B_R$, for $|\mathbf{r}|>r_0$. Using that $H^{\mathrm{stp}}$ is invariant by translation in the $l_\alpha$ direction (see~\eqref{eq:Hedg}), we get 
	 \begin{equation}\label{eq:up2}
\langle H^{\mathrm{stp}}u_{\epsilon,\mathbf{r}},u_{\epsilon,\mathbf{r}}\rangle=\langle H^{\mathrm{stp}}u_\epsilon,u_\epsilon\rangle.
	 \end{equation}
Combining \eqref{eq:up1} and \eqref{eq:up2} gives
 \[
 	\langle H^{\mathrm{stp}}u_{\epsilon,\mathbf{r}},u_{\epsilon,\mathbf{r}}\rangle<\inf_{\xi\in \R}\big(\mu_a(\tau\sin\gamma+\xi\cos\gamma)+(\xi\sin\gamma-\tau\cos\gamma)^2\big)+\epsilon.
 \]
 Now, using the support properties of $u_{\epsilon, \mathbf{r}}$, a direct calculation shows that $\langle H^{\mathrm{stp}}(u_{\epsilon,\mathbf{r}}),u_{\epsilon,\mathbf{r}}\rangle = \underline Q(u_{\epsilon,\mathbf{r}})$. Then,
 \[
  \underline Q(u_{\epsilon,\mathbf{r}})<\inf_{\xi\in \R}\big(\mu_a(\tau\sin\gamma+\xi\cos\gamma)+(\xi\sin\gamma-\tau\cos\gamma)^2\big)+\epsilon.
 \]
	Having $u_{\epsilon,\mathbf{r}}$ a non-zero normalized function in $C_0^\infty(\overline\R_+^2\cap\complement \mathcal B_R)$, we have
	\begin{eqnarray*}
		\Sigma(\mathcal L_{\underline\Ab_{\alpha,\gamma,a}}+V_{\underline{\mathbf{ B}}_{\alpha,\gamma,a},\tau},R)&=& {\inf_{\substack{u\in  C_0^\infty(\overline {\R^2_+}\cap\complement \mathcal B_R)\\ u \neq 0}}\frac {\underline Q(u)}{\|u\|^2_{L^2(\R_+^2)}}}
		\\
		&<&\inf_{\xi\in \R}\big(\mu_a(\tau\sin\gamma+\xi\cos\gamma)+(\xi\sin\gamma-\tau\cos\gamma)^2\big)+\epsilon.\nonumber
	\end{eqnarray*}
	Taking first $\epsilon\rightarrow 0$ and then $R\rightarrow+\infty$, we get the upper bound in~\eqref{eq:ess1}.

	 \paragraph{\underline{\emph {Lower bound}}.} 
	Let $(\rho,\theta)$ be the polar coordinates in $\R^2$. We consider a partition of unity $(\chi_j^{\mathrm{pol}})_{j=1,2,3}\subset C^\infty(\overline\R_+\times[0,\pi])$ such that: for $j\in\{1,2,3\}$, $0\leq \chi_j^{\mathrm{pol}}\leq1$  and $\forall (\rho,\theta) \in \R_+\times(0,\pi)$, {$\chi_j^{\mathrm{pol}}(\rho,\theta)=\chi_j^{\mathrm{pol}}(1,\theta)$ and} 
		\begin{eqnarray*}\label{eq:chi}
		  \chi^{\mathrm{pol}}_1 (\rho,\theta)&=&1\ \mathrm{for}\ \theta\in\left(0,\frac 18\alpha\right],
		  \\
		  \chi^{\mathrm{pol}}_2 (\rho,\theta)&=&1\ \mathrm{for}\ \theta\in\left[\frac 14 \alpha,\frac 14 \alpha+\frac {3\pi} 4\right],\\ 
		  \chi^{\mathrm{pol}}_3(\rho,\theta) &=&1\ \mathrm{for}\ \theta\in\left[\frac 18 \alpha+\frac {7\pi} 8,\pi\right).	
			\end{eqnarray*}
		{Moreover, $\sum_{j=1}^3 |\chi^{\mathrm{pol}}_j|^2=1$ and  $\sum_{j=1}^3 |(\chi^{\mathrm{pol}}_j)'|^2\leq C$, where $C$ is a constant dependent on $\alpha$ but independent of $a$.} 
	Let $(\chi_j)_{j=1,\cdots,3}$ be the associated functions in Cartesian coordinates
	\[\chi_j(x_1,x_2)=\chi_j^{\mathrm{pol}}(\rho,\theta),\qquad (x_1,x_2)\in\R^2_+.\]
	For $R>0$ and  $u\in  C_0^\infty(\overline {\R^2_+}\cap\complement \mathcal B_R)$, we use the IMS formula to write (see~\cite[Theorem~3.2]{cycon2009schrodinger})
	\begin{equation}\label{eq:ims}
	\underline Q(u) =\sum_{j=1}^3\underline Q(\chi_ju)-
	\sum_{j=1}^3\|u |\nabla \chi_j| \|^2_{L^2(\R^2_+)}.
	\end{equation}
	We start by bounding the error term $\sum_{j=1}^3\|u |\nabla \chi_j| \|^2_{L^2(\R^2_+)}$. For $x=(x_1,x_2)\in \R_+^2$, we have
	\[|\nabla_x\chi_j(x_1,x_2)|^2=|\partial_r\chi_j^{\mathrm{pol}}(\rho,\theta)|^2+\frac 1{r^2}|\partial_\theta\chi_j^{\mathrm{pol}}(\rho,\theta)|^2=\frac 1{r^2}|\partial_\theta\chi_j^{\mathrm{pol}}(\rho,\theta)|^2,\]
	where the last equality follows from the fact that $\chi_j^{\mathrm{pol}}$ is constant in the radial coordinate.
	Thus, using $\sum_{j=1}^3 |(\chi^{\mathrm{pol}}_j)'|^2\leq C$ and  that $u$ is supported outside $\mathcal{B}_R$, we get
	\begin{equation*}\label{eq:er1}
	\sum_{j=1}^3\|u |\nabla \chi_j| \|^2_{L^2(\R^2_+)}\leq\frac C{R^2}\|u\|^2_{L^2(\R^2_+)}. 
   \end{equation*}
   Next, we consider the main term, $\sum_{j=1}^3\underline Q(\chi_ju)$, in~\eqref{eq:ims}. We start by bounding $\underline{Q}(\chi_2u)$. Extending $\chi_2 u$ by zero over $\R^2$, we get that $\chi_2 u$ is in the domain of the operator $H^{\mathrm{stp}}$. By Lemma \ref{lem: bott ess sp Hstp}, we notice that
   \begin{equation}\label{eq:ess2}
   \underline Q(\chi_2u)=\langle H^{\mathrm{stp}}(\chi_2u),\chi_2u\rangle\geq \inf_{\xi \in \R}\big(\mu_a(\tau\sin\gamma+\xi\cos\gamma)+(\xi\sin\gamma-\tau\cos\gamma)^2\big)\|\chi_2u\|^2.
   \end{equation}
   Now, we bound $\underline Q(\chi_ju)$ for $j=1,3$. Here, we recall the sets $\Upsilon^{(1)}_{\alpha, \gamma, \tau}$ and $\Upsilon^{(2)}_{\alpha, \gamma, a,\tau}$ defined in~\eqref{eq:S12}. We choose a large $R_0>0$ and assume w.l.o.g that $\alpha\in(0,\pi/2)$, then  an elementary computation yields for $R>R_0$ (see Figure \ref{fig:4}):
   \[\dist\left(\supp \chi_1 u,\Upsilon^{(1)}_{\alpha, \gamma, \tau}\right)\geq\frac 1{|\sin\gamma|}\left|R\sin\left(\frac{3\alpha}{4}\right)\sin\gamma-\tau\right|\]
   and \[ \dist\left(\supp \chi_3 u,\Upsilon^{(2)}_{\alpha, \gamma, a,\tau}\right)\geq\frac 1{|a\sin\gamma|}|aR\sin\alpha\sin\gamma+\tau|.\]
  \begin{figure}
	\centering
			\begin{tikzpicture}[scale= 0.9]
				\draw[->] (-6.3,0) to (6.3,0);
				\node at (6.8, 0) {$x_1$};
				\node at (0, 5.7) {$x_2$};
				\draw[fill=gray, opacity=0.6] (6.3,0) -- (6.3,2.5) -- (0,0);
				\draw[fill=gray, opacity=0.6] (-6.3,0) -- (-6.3,1.2) -- (0,0);
				\draw[fill=white]([shift=(30:1cm)]3.15,-0.5) arc (0:180:4cm);
				\draw(0,0) to (4,5);
				\draw[->] (0,0) to (0,5.5);
				\node at (4.4,5.5) {$l_\alpha$};
			\draw (1,0) to (5,5);
			\node at (5.3, 5.4) {$\Upsilon^{(1)}_{\alpha,\gamma,\tau}$};
			\node at (2.3, 5.4) {$\Upsilon^{(2)}_{\alpha,\gamma,a,\tau}$};
			\draw (-4,-5/2) to (2,5);
			\node at (-4,0) {$\bullet$};
			\draw (0,0) to (-6.3,1.2);
			\node at (-6.5, 1.7) {$\theta  = \frac{1}{8}\alpha + \frac{7}{8}\pi$};
			\draw (0,0) to (6.3,2.5);
			\node at (6.5, 2.7) {$\theta  = \frac{\alpha}{4}$};
			\node at (5.3, 1) {$\mathrm{supp}\chi_1 u$};
			\node at (-5.3, 0.5) {$\mathrm{supp}\chi_3 u$};
			\node at (3.75, 1.47) {$\bullet$};
			\draw[dotted] (3.7, 1.5) to (2.83, 2.34);
			\node at (3, 1.7) {$d_1$};
			\draw[dotted] (-4,0) to (-2.75, -1);
			\node at (-3, -0.4) {$d_2$};
			\draw[white] (-6.3,0) -- (-6.3, 1.2);
			\draw[white] (6.3,0) -- (6.3,2.5);
			\end{tikzpicture}
	\caption{{$\mathrm{dist}(\mathrm{supp}\chi_1 u,\Upsilon^{(1)}_{\alpha,\gamma,\tau})\geq d_1$ and $\mathrm{dist}(\mathrm{supp}\chi_3 u,\Upsilon^{(2)}_{\alpha,\gamma,a,\tau})\geq d_2$. }}
	\label{fig:4}
\end{figure}
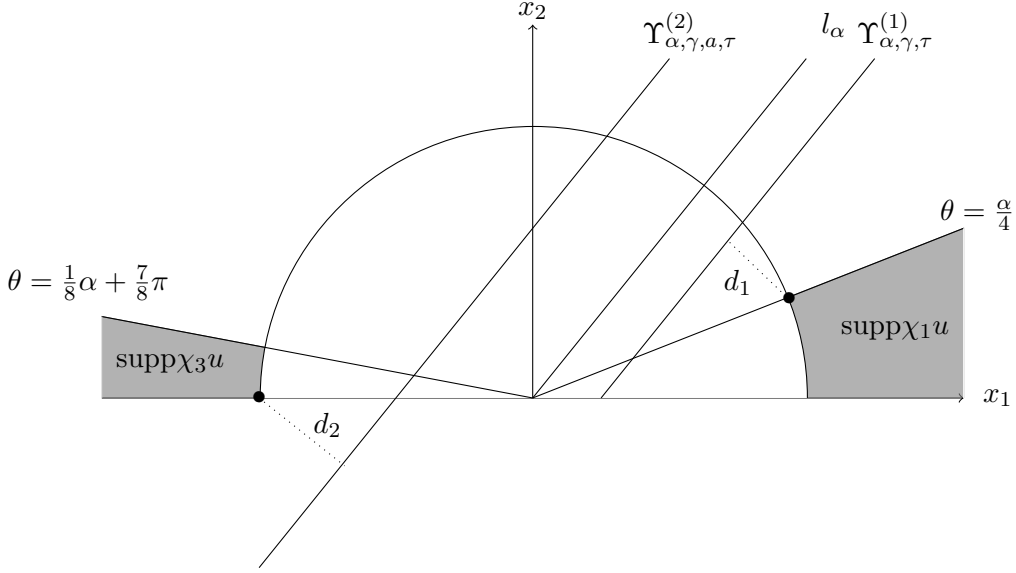
Hence, using the support properties of $\chi_1$, the definition of $V_{\underline{\mathbf{B}}_{\alpha,\gamma,a},\tau}$ in \eqref{eq:Vb}, and~\eqref{eq:V12} we get  for all $x\in\supp \chi_1 u$ 
 \[V_{\underline{\mathbf{B}}_{\alpha,\gamma,a},\tau}(x)=V^{(1)}_{\underline{\mathbf{ B}}_{\alpha, \gamma},\tau}(x)=\sin^2\gamma\dist^2(x,\Upsilon^{(1)}_{\alpha, \gamma, \tau})\geq \left|R\sin\left(\frac{3\alpha}{4}\right)\sin\gamma-\tau\right|^2.
 \]
  Similarly, using~\eqref{eq:V12}, we have for all $x\in\supp \chi_3 u$
  \[V_{\underline{\mathbf{B}}_{\alpha,\gamma,a},\tau}(x)= V^{(2)}_{\underline{\mathbf{ B}}_{\alpha, \gamma, a},\tau}(x)=a^2\sin^2\gamma\dist^2(x,\Upsilon^{(2)}_{\alpha, \gamma, a,\tau} \geq |aR\sin\alpha\sin\gamma+\tau|^2.\]
Thus, we can write
  \begin{equation}\label{eq:ess3}
  \underline Q(\chi_1u)\geq  \left|R\sin\left(\frac{3\alpha}{4}\right)\sin\gamma-\tau\right|^2 \|\chi_1u\|^2 \qquad  \underline Q(\chi_3u)\geq |aR\sin\alpha\sin\gamma+\tau|^2\|\chi_3u\|^2.
  \end{equation}
  Consequently for all $R>R_0$,~\eqref{eq:ims},~\eqref{eq:ess2} and~\eqref{eq:ess3} imply 
 \[\Sigma(\mathcal L_{\underline\Ab_{\alpha,\gamma,a}}+V_{\underline{\mathbf{ B}}_{\alpha,\gamma,a},\tau},R)\geq\inf_{\xi\in \R}\big(\mu_a(\tau\sin\gamma+\xi\cos\gamma)+(\xi\sin\gamma-\tau\cos\gamma)^2\big)-\frac C{R^2}.\]
 Taking the limit $R\rightarrow +\infty$, we establish the lower bound in~\eqref{eq:ess1}.
\end{proof}
Now, we state an immediate consequence of Proposition~\ref{prop:ess}.
\begin{corollary}\label{cor:comp}
 For $a\in[-1,1)\setminus\{0\}$,  $\alpha\in(0,\pi)$, and $\gamma\in(0,\pi/2]$. Let $\underline\sigma_{ess}(\alpha,\gamma,a,\tau)$ be as in \eqref{eq: sigma ess 2d}, we have
\[
 \inf_{\tau\in \R}\underline\sigma_{ess}(\alpha,\gamma,a,\tau)\geq \beta_a,\]
 where  $\beta_a$ is the value defined in~\eqref{eq:beta}.
\end{corollary}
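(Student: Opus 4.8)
The plan is to start directly from the characterization of $\underline\sigma_{ess}(\alpha,\gamma,a,\tau)$ given by Proposition~\ref{prop:ess} and then take the infimum over $\tau\in\R$. Concretely, one writes
\[
\inf_{\tau\in\R}\underline\sigma_{ess}(\alpha,\gamma,a,\tau)=\inf_{\tau\in\R}\inf_{\xi\in\R}\big(\mu_a(\tau\sin\gamma+\xi\cos\gamma)+(\xi\sin\gamma-\tau\cos\gamma)^2\big)=\inf_{(\tau,\xi)\in\R^2}\big(\mu_a(\tau\sin\gamma+\xi\cos\gamma)+(\xi\sin\gamma-\tau\cos\gamma)^2\big).
\]
The key observation is that the linear change of variables $(\tau,\xi)\mapsto(\eta,\zeta)$ defined by $\eta=\tau\sin\gamma+\xi\cos\gamma$ and $\zeta=\xi\sin\gamma-\tau\cos\gamma$ has matrix $\left(\begin{smallmatrix}\sin\gamma&\cos\gamma\\ -\cos\gamma&\sin\gamma\end{smallmatrix}\right)$, which is orthogonal (determinant $\sin^2\gamma+\cos^2\gamma=1$) for every $\gamma\in(0,\pi/2]$, hence a bijection of $\R^2$ onto itself.

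Since this map is a bijection, the infimum over $(\tau,\xi)\in\R^2$ equals the infimum over $(\eta,\zeta)\in\R^2$ of $\mu_a(\eta)+\zeta^2$. Using $\zeta^2\geq0$ with equality at $\zeta=0$, this reduces to $\inf_{\eta\in\R}\mu_a(\eta)$, which by the definition in~\eqref{eq:beta} is exactly $\beta_a$. One concludes that
\[
\inf_{\tau\in\R}\underline\sigma_{ess}(\alpha,\gamma,a,\tau)=\beta_a,
\]
and in particular the claimed inequality $\inf_{\tau\in\R}\underline\sigma_{ess}(\alpha,\gamma,a,\tau)\geq\beta_a$ holds (indeed with equality, although only the lower bound will be used later).

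There is essentially no obstacle here: the statement is an immediate corollary of Proposition~\ref{prop:ess}, and the only point worth checking is that the rotation in the $(\tau,\xi)$ variables never degenerates, which is automatic from $\sin^2\gamma+\cos^2\gamma=1$ regardless of the value of $\gamma$. I would therefore present the argument in just a few lines, emphasizing the rotation invariance of $\R^2$ under $(\tau,\xi)\mapsto(\tau\sin\gamma+\xi\cos\gamma,\ \xi\sin\gamma-\tau\cos\gamma)$ and invoking~\eqref{eq:beta}.
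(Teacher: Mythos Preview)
Your proof is correct and rests on the same ingredients as the paper's---Proposition~\ref{prop:ess} and the definition~\eqref{eq:beta} of $\beta_a$---but the execution is slightly different. The paper bounds $\underline\sigma_{ess}(\alpha,\gamma,a,\tau)$ from below for each fixed $\tau$ via $\inf_\xi(f+g)\geq\inf_\xi f+\inf_\xi g$, obtaining $\underline\sigma_{ess}(\alpha,\gamma,a,\tau)\geq\beta_a$ pointwise in $\tau$; you instead pass directly to the joint infimum over $(\tau,\xi)\in\R^2$ and exploit the orthogonality of the map $(\tau,\xi)\mapsto(\tau\sin\gamma+\xi\cos\gamma,\ \xi\sin\gamma-\tau\cos\gamma)$ to decouple the variables. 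Your route yields the sharper conclusion $\inf_\tau\underline\sigma_{ess}(\alpha,\gamma,a,\tau)=\beta_a$ (equality, not just the lower bound), which the paper's argument does not give since for a fixed $\tau$ the inequality $\underline\sigma_{ess}(\alpha,\gamma,a,\tau)\geq\beta_a$ is generally strict. Both arguments are equally short; yours is a clean upgrade.
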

\begin{proof}
By the definition of $\beta_a$, we have
\begin{multline*}\underline\sigma_{ess}(\alpha,\gamma,a,\tau)=\inf_\xi\big(\mu_a(\tau\sin\gamma+\xi\cos\gamma)+(\xi\sin\gamma-\tau\cos\gamma)^2\big)\\
 \geq \inf_\xi\mu_a(\tau\sin\gamma+\xi\cos\gamma)+\inf_\xi(\xi\sin\gamma-\tau\cos\gamma)^2\geq\beta_a.\end{multline*}
\end{proof}
%
\subsubsection{Bottom of the spectrum of the 2D reduced operator at infinity}\label{sec:bound}
Now, we consider the bottom of the spectrum, $\underline\sigma(\alpha,\gamma,a,\tau)$,  of the operator $\mathcal L_{\underline\Ab_{\alpha,\gamma,a}}+V_{\underline{\mathbf{ B}}_{\alpha,\gamma,a},\tau}$  as a function of $\tau$. In Proposition~\ref{prop:sp-lim} below, we study the behavior of $\underline\sigma(\alpha,\gamma,a,\tau)$ as $|\tau|$ goes to infinity. We will use this proposition to provide a condition on  $(\alpha,\gamma,a)$ such that $\inf_\tau\underline\sigma(\alpha,\gamma,a,\tau)$--that is $\lambda_{\alpha, \gamma,a}$ (see~\eqref{eq:l3})--is attained by some $\tau\in\R$. This, together with the upper bound of the essential spectrum in Corollary~\ref{cor:comp}, will be used to get the result in  Theorem~\ref{thm:main}, when the strict inequality in~\eqref{eq:lb} is satisfied. 

\begin{proposition}\label{prop:sp-lim} 
Let $\alpha\in(0,\pi)$ and $\gamma\in (0,\pi/2]$. For $a\in[-1,0)$, we have
\[\lim_{\tau\rightarrow-\infty} \underline\sigma(\alpha,\gamma,a,\tau)=+\infty,\quad \lim_{\tau\rightarrow+\infty} \underline\sigma(\alpha,\gamma,a,\tau)=|a|\zeta_{\nu_0}.\]
For $a\in(0,1)$, we have
\[\lim_{\tau\rightarrow-\infty} \underline\sigma(\alpha,\gamma,a,\tau)=a\zeta_{\nu_0},\quad \lim_{\tau\rightarrow+\infty} \underline\sigma(\alpha,\gamma,a,\tau)=\zeta_{\nu_0}.\]
Here, $\zeta_{\nu_0}$ is defined in~\eqref{eq:zeta} for $\nu_0=\arcsin(\sin\alpha\sin\gamma)$.
\end{proposition}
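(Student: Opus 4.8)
The plan is to prove, for each of the four regimes, matching upper and lower bounds, adapting the constant-field argument of \cite[Lemma~2.3, Proposition~3.2]{popoff2013schrodinger} to the presence of the discontinuity line $l_\alpha$ and of the $\tau$-dependent electric well. Write $\ell(x_1,x_2)=x_1\sin\alpha-x_2\cos\alpha$ for the signed distance to $l_\alpha$, so that $D^1_\alpha=\{\ell>0\}\cap\R^2_+$, $D^2_\alpha=\{\ell<0\}\cap\R^2_+$, and $V_{\underline{\mathbf B}_{\alpha,\gamma,a},\tau}=(\ell\sin\gamma-\tau)^2$ on $D^1_\alpha$, $=(a\ell\sin\gamma-\tau)^2$ on $D^2_\alpha$. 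The guiding observation is that for $|\tau|$ large, $V_{\underline{\mathbf B}_{\alpha,\gamma,a},\tau}$ is of size $\tau^2$ both on any strip $\{|\ell|\le R\}$ around $l_\alpha$ and on whichever of $D^1_\alpha,D^2_\alpha$ does not contain the zero set of $V_{\underline{\mathbf B}_{\alpha,\gamma,a},\tau}$, while on the remaining sector $D_\star$ (the one carrying a well along a line parallel to $l_\alpha$ that meets $\partial\R^2_+$) the operator is, up to a gauge transformation, the boundary operator $H^{\mathrm{bnd}}_{\alpha,\gamma,\sigma_\star}[\tau]$ of the corresponding strength $\sigma_\star\in\{1,a\}$, whose spectral bottom is $|\sigma_\star|\zeta_{\nu_0}$ by Lemma~\ref{lem: bott essHbnd} and \eqref{eq:Hbnd1}; for $a\in[-1,0)$ and $\tau\to+\infty$ both sectors carry a well, with values $\zeta_{\nu_0}$ on $D^1_\alpha$ and $|a|\zeta_{\nu_0}$ on $D^2_\alpha$, so the target is $|a|\zeta_{\nu_0}=\min$. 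The divergent case is immediate: since $-(\nabla-i\underline\Ab_{\alpha,\gamma,a})^2\ge0$ we have $\underline\sigma(\alpha,\gamma,a,\tau)\ge\inf_{\R^2_+}V_{\underline{\mathbf B}_{\alpha,\gamma,a},\tau}$, which for $a\in[-1,0)$, $\tau<0$ equals $\tau^2$ (cf.\ \eqref{eq:S16}), giving $\lim_{\tau\to-\infty}\underline\sigma(\alpha,\gamma,a,\tau)=+\infty$.

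For the upper bounds I would fix $\epsilon>0$ and, since $C_0^\infty(\overline{\R^2_+})$ is a form core, choose a compactly supported $w$ with $\langle H^{\mathrm{bnd}}_{\alpha,\gamma,\sigma_\star}[0]w,w\rangle<(|\sigma_\star|\zeta_{\nu_0}+\epsilon)\|w\|^2$. Translate $w$ in the $x_1$-variable by $s(\tau)=\tau/(\sigma_\star\sin\alpha\sin\gamma)$: this is precisely the shift for which the translated electric potential restricted to $D_\star$ becomes that of $H^{\mathrm{bnd}}_{\alpha,\gamma,\sigma_\star}[0]$, and since $\gamma\ne0$ one has $|s(\tau)|\to\infty$, so for $|\tau|$ large the translate $\widetilde w$ is supported inside $D_\star$ and at distance $\to\infty$ from $l_\alpha$. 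On that region $\underline\Ab_{\alpha,\gamma,a}$ has the same constant curl $\sigma_\star\cos\gamma$ as $\sigma_\star\Ab^{\mathrm{bnd}}_\gamma$, hence differs from it by a gradient on that simply connected set, which we absorb into a phase; thus $\langle(\mathcal L_{\underline\Ab_{\alpha,\gamma,a}}+V_{\underline{\mathbf B}_{\alpha,\gamma,a},\tau})\widetilde w,\widetilde w\rangle=\langle H^{\mathrm{bnd}}_{\alpha,\gamma,\sigma_\star}[0]w,w\rangle$, whence $\limsup_{|\tau|\to\infty}\underline\sigma(\alpha,\gamma,a,\tau)\le|\sigma_\star|\zeta_{\nu_0}+\epsilon$, and then $\epsilon\to0$.

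For the lower bounds I would fix $R_0>0$ and a smooth partition of unity $\chi_+^2+\chi_-^2+\chi_0^2=1$ depending only on $\ell$, with $\supp\chi_+\subset\{\ell\ge R_0\}$, $\supp\chi_-\subset\{\ell\le-R_0\}$, $\supp\chi_0\subset\{|\ell|\le2R_0\}$ and $|\nabla\chi_j|\le C/R_0$. The IMS formula (as in \eqref{eq:ims}) gives, for normalized $u$ in the form domain,
\[\underline Q^{\tau}_{\alpha,\gamma,a}(u)\ge\underline Q^{\tau}_{\alpha,\gamma,a}(\chi_+u)+\underline Q^{\tau}_{\alpha,\gamma,a}(\chi_-u)+\underline Q^{\tau}_{\alpha,\gamma,a}(\chi_0u)-\frac{C}{R_0^2}.\]
On $\supp\chi_0$ one has $V_{\underline{\mathbf B}_{\alpha,\gamma,a},\tau}\ge(|\tau|-2R_0\sin\gamma)^2$, so $\underline Q^{\tau}_{\alpha,\gamma,a}(\chi_0u)\ge(|\tau|-2R_0\sin\gamma)^2\|\chi_0u\|^2$. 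Each $\chi_\pm u$ is supported inside its sector, so after the same gauge transformation it defines an element of the form domain of $H^{\mathrm{bnd}}_{\alpha,\gamma,\sigma}[\tau]$ ($\sigma=1$ on $D^1_\alpha$, $\sigma=a$ on $D^2_\alpha$) with the same quadratic-form value, whence $\underline Q^{\tau}_{\alpha,\gamma,a}(\chi_\pm u)\ge|\sigma|\zeta_{\nu_0}\|\chi_\pm u\|^2$; and on a sector not carrying a well one has the stronger pointwise bound $V_{\underline{\mathbf B}_{\alpha,\gamma,a},\tau}\ge\tau^2$, hence $\underline Q^{\tau}_{\alpha,\gamma,a}(\chi_\pm u)\ge\tau^2\|\chi_\pm u\|^2$ there. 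Summing, using $|\sigma_\star|\zeta_{\nu_0}\le\zeta_{\nu_0}$ and that for $|\tau|$ large both $(|\tau|-2R_0\sin\gamma)^2$ and $\tau^2$ exceed $\zeta_{\nu_0}$, one gets $\underline Q^{\tau}_{\alpha,\gamma,a}(u)\ge|\sigma_\star|\zeta_{\nu_0}-C/R_0^2$, so $\liminf_{|\tau|\to\infty}\underline\sigma(\alpha,\gamma,a,\tau)\ge|\sigma_\star|\zeta_{\nu_0}-C/R_0^2$; letting $R_0\to\infty$ concludes.

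The main obstacle will be the gauge bookkeeping: one must identify, on each simply connected truncated sector $\overline{D^j_\alpha}\cap\{|\ell|\ge R_0\}$, the restriction of $\mathcal L_{\underline\Ab_{\alpha,\gamma,a}}+V_{\underline{\mathbf B}_{\alpha,\gamma,a},\tau}$ with the boundary operator $H^{\mathrm{bnd}}_{\alpha,\gamma,\sigma}[\tau]$ (so as to invoke Lemma~\ref{lem: bott essHbnd}/\eqref{eq:Hbnd1}), and one must check that the translated trial state in the upper-bound step genuinely lies in the correct sector and recedes from $l_\alpha$—which is exactly where the hypothesis $\gamma\ne0$ is used (for $\gamma=0$ the electric potential is constant and the statement is of a different nature, cf.\ Section~\ref{sec:tangent}). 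The IMS localization errors and the contribution of the strip around $l_\alpha$ are harmless precisely because $V_{\underline{\mathbf B}_{\alpha,\gamma,a},\tau}$ is of order $\tau^2$ there.
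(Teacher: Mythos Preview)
Your proposal is correct and follows essentially the same strategy as the paper: the divergent limit via $\inf V_{\underline{\mathbf B}_{\alpha,\gamma,a},\tau}=\tau^2$, the upper bounds via a translated-and-gauged trial state coming from the boundary operator, and the lower bounds via an IMS localization in the $\ell$-direction combined with Lemma~\ref{lem: bott essHbnd}. The only noteworthy difference is in the localization scale for the lower bound: you use a fixed partition $\chi_+,\chi_0,\chi_-$ at distance $R_0$ from $l_\alpha$ (with error $C/R_0^2$) and send $R_0\to\infty$ after $\tau\to\infty$, whereas the paper takes a $\tau$-dependent partition with cutoffs at $\ell\sim\tau/\sin\gamma$, so that the IMS error is $C/\tau^2$ and vanishes in the single limit $\tau\to\infty$. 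Both variants work; yours has the modest advantage that the cut-offs are $\tau$-independent, while the paper's avoids the extra $R_0\to\infty$ step.
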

\begin{proof} In this proof, we simplify the notation and write  $H^{\mathrm{bnd}}$ for  the `boundary operator' $H^{\mathrm{bnd}}_{\alpha,\gamma,a}[\tau]$ in \eqref{eq:Hbnd1} and $\underline{Q}$ for the quadratic form $\underline{Q}^\tau_{\alpha,\gamma,a}$ in \eqref{eq:Qa} associated to the operator $\mathcal L_{\underline\Ab_{\alpha,\gamma,a}}+V_{\underline{\mathbf{B}}_{\alpha,\gamma,a},\tau}$.
\\
\noindent{\underline{\emph{Case $a\in[-1,0)$}}.} Establishing the limit when $\tau\rightarrow -\infty$ is straightforward. Indeed, considering the electric potential in~\eqref{eq:V0}, by~\eqref{eq:S16} we have $\inf  V_{\underline{B}_{\alpha,\gamma,a}, \tau}=\tau^2$, for any $\tau <0$. Then, $\lim_{\tau\rightarrow-\infty}\inf  V_{\underline{B}_{\alpha,\gamma,a}, \tau}=+\infty$. Hence,
\[
	\lim_{\tau\rightarrow-\infty}\underline\sigma(\alpha,\gamma,a,\tau)=+\infty.
\]
Now, we treat the case $\tau \rightarrow +\infty$. Here,  the forgoing operator $H^{\mathrm{bnd}}$ in~\eqref{eq:Hbnd} will be involved. By the min-max principle and Lemma~\ref{lem: bott essHbnd}, for any $\epsilon >0$, there exists a normalized function $u_\epsilon\in C_0^\infty(\overline {\R^2_+}){\setminus\{0\}}$ such that
\begin{equation}\label{eq:s2}
\langle H^{\mathrm{bnd}} u_\epsilon,u_\epsilon\rangle<|a|\zeta_{\nu_0}+\epsilon.
\end{equation}
We define the function $u_{\epsilon,\tau}$ as follows
\[u_{\epsilon,\tau}(x) = u_\epsilon\left(x_1-\frac{\tau}{a\sin\alpha\sin\gamma},x_2\right),\quad\mathrm{for}\ x=(x_1,x_2)\in\R^2_+.\]
For a sufficiently large $\tau$, we have $\supp u_{\epsilon,\tau}\in D^2_\alpha$, where $D^2_\alpha$ is the set in~\eqref{eq: def D12alpha R2}.  Performing a suitable change of gauge, in which we associate the function $\tilde u_{\epsilon,\tau}$ to the function $u_{\epsilon,\tau}$, we get 
\begin{eqnarray*}\label{eq:s3}
\underline Q(\tilde u_{\epsilon,\tau})&=&\langle(\mathcal L_{\underline\Ab_{\alpha,\gamma,a}}+V_{\underline{\mathbf{B}}_{\alpha,\gamma,a},\tau})\tilde u_{\epsilon,\tau},\tilde u_{\epsilon,\tau}\rangle
\\
&=&\langle H^{\mathrm{bnd}}  u_{\epsilon,\tau}, u_{\epsilon,\tau}\rangle
\\
&=&\langle H^{\mathrm{bnd}}   u_{\epsilon}, u_{\epsilon}\rangle\\
&<&|a|\zeta_{\nu_0}+\epsilon,\nonumber
\end{eqnarray*}
where in the last inequality we used \eqref{eq:s2}.  Taking $\tau$ to $+\infty$, we get
\begin{equation*}\label{eq:s4}
\lim\sup_{\tau\rightarrow+\infty} \underline\sigma(\alpha,\gamma,a,\tau)\leq |a|\zeta_{\nu_0}.
\end{equation*}
Next, we establish the lower bound for $\lim_{\tau\rightarrow+\infty} \underline\sigma(\alpha,\gamma,a,\tau)$. We consider a partition of unity $(\tilde\chi_j)_{j\in\{1,2,3\}}\subset C^\infty(\R)$ satisfying
\begin{equation*}\supp \tilde\chi_1\subset \left(\frac {1}{4\sin\gamma},+\infty\right),\ \supp \tilde\chi_2\subset \left(\frac {1}{2a\sin\gamma},\frac {1}{2\sin\gamma}\right),\ \supp \tilde\chi_3\subset \left(-\infty,\frac {1}{4a\sin\gamma}\right)  
\end{equation*}
\begin{equation*}
\sum_j|\tilde\chi_j|^2=1,\qquad \sum_j|\tilde\chi_j'|^{2}\leq C,
\end{equation*}
for a certain $C>0$ independent of $\tau$. Let $(\chi_j)_{j\in\{1,2,3\}}\subset C^\infty(\R^2)$ be the partition of unity of $\R^2$ induced from $(\tilde\chi_j)_{j\in\{1,2,3\}}$ as follows
	\[\chi_j(x_1,x_2)=\tilde \chi_j\Big(\frac{x_1\sin\alpha-x_2\cos\alpha}{\tau}\Big).\]
	Consequently, we have for $j\in\{1,2,3\}$
	\begin{equation*}
\supp \chi_j\subset  R_j,\quad \sum_j|\chi_j|^2=1,\quad \mathrm{and}\ \sum_j|\nabla\chi_j|^{2}\leq \frac {C}{\tau^2},
	\end{equation*}
	where 
	\begin{align*}
	R_1&:=\left\{(x_1,x_2)\in \R^2~:~x_1\sin\alpha-x_2\cos\alpha>\frac{\tau}{4\sin\gamma}\right\}\\
		R_2&:=\left\{(x_1,x_2)\in \R^2~:~\frac{\tau}{2a\sin\gamma}<x_1\sin\alpha-x_2\cos\alpha<\frac{\tau}{2\sin\gamma}\right\}\\
			R_3&:=\left\{(x_1,x_2)\in \R^2~:~x_1\sin\alpha-x_2\cos\alpha<\frac{\tau}{4a\sin\gamma}\right\}.
	\end{align*}
	Thus, for any  $u\in\mathcal{D}(\underline Q)$ (see \eqref{eq:domQ}), the IMS formula gives
	\begin{equation}\label{eq:ims2}
	\underline{Q}(u) =\sum_{j=1}^3\underline Q(\chi_j u)-\sum_{j=1}^3\|u |\nabla \chi_j| \|^2_{L^2(\R^2_+)}
	\geq \sum_{j=1}^3\underline Q(\chi_ju)-
	\frac {C}{\tau^2}.
	\end{equation}
We perform a suitable change of gauge and use~\eqref{eq:Hbnd1}, together with the support properties of $\chi_1u$, to get 
	\begin{eqnarray}\label{eq:s5}
	\underline Q(\chi_1u) &=& \int_{\mathbb{R}_+^2}\left( |(\nabla - i \mathbf{\underline{A}}_{\alpha,\gamma,a})(\chi_1u)|^2 + V_{\underline{B}_{\alpha,\gamma,a}, \tau}|\chi_1 u|^2\right)\,dx_1dx_2
	\\
	&\geq&\zeta_{\nu_0}  \|\chi_1 u\|^2\nonumber.
	\end{eqnarray}
	Similarly, considering the support of $\chi_3u$, doing a change of gauge and using Lemma \ref{lem: bott essHbnd}, we find
	\begin{equation}\label{eq:s6}
	\underline Q(\chi_3u)\geq |a|\zeta_{\nu_0}\|\chi_3u\|^2.
	\end{equation}
	Finally, considering  the support of $\chi_2u$, a simple computation using the definition of the electric potential in~\eqref{eq:V0} and Lemma~\ref{lem:upsilon} (see also~Figure~\ref{fig:2}) gives
	\begin{equation}\label{eq:Vb}
 V_{\underline{B}_{\alpha,\gamma,a}, \tau}\geq\frac{\tau^2}{4},\quad \mathrm{for}\ x\in\supp\chi_2u.
	\end{equation} 
	Hence, there exists $\tau_0>0$ and $M>|a|\zeta_{\nu_0}$ such that for $\tau>\tau_0$
	\begin{equation}\label{eq:s7}
	\underline Q(\chi_2u)\geq M\|\chi_2u\|^2.
	\end{equation}
	Implementing ~\eqref{eq:s5}, \eqref{eq:s6} and \eqref{eq:s7} in~\eqref{eq:ims2}, we get for $a\in[-1,0)$
	\[\lim\inf_{\tau\rightarrow+\infty}\underline\sigma(\alpha,\gamma,a,\tau)\geq|a|\zeta_{\nu_0}.\]
	\\
	\noindent{\emph{Case $a\in(0,1)$.}} Adopting a similar approach as above, using Lemma~\ref{lem:upsilon} for positive values of $a$, one can establish the results of the proposition in this case. We omit further computation details.
\end{proof}
\section{Proof of Theorem \ref{thm:main} and Proposition \ref{prop:exn}}\label{sec:proof}
	\begin{proof}[Proof of Theorem~\ref{thm:main}]
		The proof in the case $\gamma=0$, is a direct consequence of the results in Section~\ref{sec:tangent}. Indeed, from \eqref{eq: lambda mu} and \eqref{eq: inf ess sp gamma0} it follows that 
		\[
			\lambda_{\alpha, 0, a} \leq |a| \Theta_0.
		\] 
		Now, from \eqref{eq: prop zeta}, we know that $\zeta_0 = \Theta_0$ and having $\beta_a \geq |a| \Theta_0$ (see Section \ref{sec:eff2}), we get that 
		\begin{equation*}
\lambda_{\alpha, 0, a} \leq \min (\beta_a, |a| \Theta_0 ) = \min (\beta_a, |a|\zeta_0).
		\end{equation*}
		Moreover, it follows from Section \ref{sec:tangent} that if $\lambda_{\alpha, 0, a} < \min (\beta_a, |a|\zeta_0 )$, then $\lambda_{\alpha, 0, a}$ is an eigenvalue of the operator $\mathcal L_{\underline\Ab_{\alpha,\gamma,a}}+V_{\underline{\mathbf{ B}}_{\alpha,\gamma,a},\tau^*}$, with the particular choice $\tau^* = 0$. 
		 
	Next, we treat the case $\gamma\neq 0$. We first establish the upper bound of $\lambda_{\alpha, \gamma,a}$ in~\eqref{eq:lb1}. The result is a consequence of Proposition~\ref{prop:ess} and Proposition~\ref{prop:sp-lim},  as it is shown below. We have (see \eqref{eq:l3}) 
		\begin{equation}\label{eq: lambda}
			\lambda_{\alpha, \gamma,a} = \inf_\tau\underline\sigma(\alpha,\gamma,a,\tau),
		\end{equation} 
		where $\underline{\sigma}(\alpha,\gamma,a,\tau)$ is as in \eqref{eq:unders}. We consider the following two cases.\\
\underline{\emph{Case $a\in[-1,0)$}}. From Proposition~\ref{prop:ess}, we have 
		\begin{equation*}
			{\underline{\sigma}_{ess}(\alpha,\gamma,a,\tau) = \inf_{\xi\in \R} (\mu_a(\tau\sin\gamma + \xi\cos\gamma) + (\xi\sin\gamma - \tau\cos\gamma)^2),}	
		\end{equation*}
		where $\mu_a(\cdot)$ is introduced in \eqref{eq:mu_a}. Let $\xi_a$ be the unique  minimum of $\mu_a(\cdot)$ (see Section \ref{sec:eff2}). For $\widetilde\tau :=\xi_a\sin\gamma$, one can see that $\underline\sigma_{ess}(\alpha,\gamma,a,\widetilde\tau)$ is attained by $\xi=\xi_a\cos\gamma$ and satisfies
		\[\underline\sigma_{ess}(\alpha,\gamma,a,\widetilde\tau)=\mu_a(\xi_a)=\beta_a.\] 
		This implies that
		\begin{equation}\label{eq: fin 1}
			\underline{\sigma}(\alpha, \gamma, a, \widetilde\tau) \leq \beta_a.
		\end{equation}
		 Moreover, by Proposition \ref{prop:sp-lim}, we have 
		\begin{equation}\label{eq: fin 2}
			\underline{\sigma}(\alpha, \gamma, a,\widetilde\tau) \leq |a| \zeta_{\nu_0}.
		\end{equation}
		Combining \eqref{eq: lambda}--\eqref{eq: fin 2} yields \eqref{eq:lb1}. \\
		\underline{\emph{Case $a\in(0,1)$}}. By Proposition~\ref{prop:sp-lim}, we have
		\begin{equation*}
			\underline{\sigma}(\alpha,\gamma,a, \tau) \leq a\zeta_0. 
		\end{equation*}
		Moreover, $\beta_a=a$ for $a\in (0,1)$ (see Section 2.1),  and $\zeta_{\nu_0}\leq1$ (see Section \ref{sec:nu+}). This yields
		\[
			\lambda_{\alpha, \gamma,a}\leq  a\zeta_{\nu_0} = \min (\beta_a, a\zeta_{\nu_0}).
		\]	

		Now, we consider the case when the strict inequality in~\eqref{eq:lb} is satisfied. From Proposition~\ref{prop:sp-lim}, we have
		\[\inf_\tau\underline\sigma(\alpha,\gamma,a,\tau)=\lambda_{\alpha, \gamma,a}<|a|\zeta_{\nu_0}=\min\left(\lim_{\tau\rightarrow-\infty} \underline\sigma(\alpha,\gamma,a,\tau),\lim_{\tau\rightarrow+\infty} \underline\sigma(\alpha,\gamma,a,\tau)\right).\]
		 Hence, $\inf_\tau\underline\sigma(\alpha,\gamma,a,\tau)$ is attained by some $\tau_*\in\R$. Moreover, by Corollary~\ref{cor:comp} we know that 
		 \[
		 	\lambda_{\alpha, \gamma,a}=\underline\sigma(\alpha,\gamma,a,\tau_*)<\beta_a\leq \underline\sigma_{ess}(\alpha,\gamma,a,\tau_*).
		 \]
		 We then deduce that $\lambda_{\alpha, \gamma,a}$ is an eigenvalue of $\mathcal L_{\underline\Ab_{\alpha,\gamma,a}}+V_{\underline{\mathbf{ B}}_{\alpha,\gamma,a},\tau_*}$.
	\end{proof}
\begin{proof}[Proof of Proposition~\ref{prop:exn}] The proof is inspired by the construction done in~\cite[Proof of Proposition~3.9]{Assaad3} while studying 2D smooth domains under  discontinuous magnetic fields, and by~\cite[Proof of Theorem~1.1]{exner2018bound} while studying 2D corner domains under constant magnetic fields. 
	
	We fix  $a\in[-1,1)\setminus\{0\}$, $\alpha\in(0,\pi)$, and  $\gamma\in[0,\pi/2]$. 
\emph{Let $\tau=0$}. We define the function $\varphi_{\alpha,\gamma,a}\in H^1_{\mathrm{loc}}(\R_+^2)$ by
\begin{equation*}
\varphi_{\alpha,\gamma,a}(x_1,x_2)=
\left\{
\begin{array}{ll}
\Big(\frac 12x_1x_2+\frac {a-1}{2}  x_2^2\cot \alpha \Big)\cos\gamma& \mathrm{if}\ (x_1,x_2)\in D_\alpha^1,\\
\frac a2  x_1x_2\cos\gamma& \mathrm{if}\ (x_1,x_2)\in D_\alpha^2.
\end{array}
\right.
\end{equation*}
This function satisfies $\underline \Ab_{\alpha,\gamma,a}=\breve\Ab_{\alpha,\gamma,a}+\nabla\varphi_{\alpha,\gamma,a}$, where $\underline \Ab_{\alpha,\gamma,a}$ is the potential in~\eqref{eq:Abar}, and $\breve\Ab_{\alpha,\gamma,a}=1/2(-x_2,x_1) \underline{\mathsf s}_{\alpha,a}\cos\gamma$, for $\underline{\mathsf s}_{\alpha,a}=\mathbbm 1_{D_\alpha^1}+a\mathbbm 1_{D_\alpha^2}$  being the step function in~\eqref{eq:step} (see~\cite[Lemma 1.1]{leinfelder1983gauge} for the existence of such gauge functions in more general situations). We define the quadratic form $\breve Q_{\alpha,\gamma,a}$ as follows
\begin{equation*}
\breve Q_{\alpha,\gamma,a}(v)=\int_{\R^2_+}\Big(\big|(\nabla-i\breve \Ab_{\alpha,\gamma,a})v\big|^2+V_{\underline{\mathbf{ B}}_{\alpha,\gamma,a},0}|v|^2\Big)\,dx_1\,dx_2
\end{equation*}
in the domain
\[\mathcal{D}(\breve Q_{\alpha,\gamma,a})=\left\{v \in L^2(\R_+^2)~:~(\nabla-i\breve \Ab_{\alpha,\gamma,a})v \in L^2(\R_+^2), |x_1\sin\alpha-x_2\cos\alpha|v\in L^2(\R_+^2) \right\},\]
where $V_{\underline{\mathbf{ B}}_{\alpha,\gamma,a},0}=\underline{\mathsf s}_{\alpha,a}^2(x_1\sin\gamma\sin\alpha-x_2\sin\gamma\cos\alpha)^2$ is the electric potential defined in~\eqref{eq:V0} for $\tau=0$.
We explicitly express $\breve Q_{\alpha,\gamma,a}(v)$ by
\begin{multline*}\int_{\R^2_+}\Big(\big|(\partial_{x_1}+\frac 12i\underline{\mathsf s}_{\alpha,a} x_2\cos\gamma)v\big|^2+\big|(\partial_{x_2}-\frac 12i\underline{\mathsf s}_{\alpha,a} x_1\cos\gamma)v\big|^2\\+\underline{\mathsf s}_{\alpha,a}^2\sin^2\gamma(x_1\sin\alpha- x_2\cos\alpha)^2|v|^2\Big)\,dx_1dx_2.\end{multline*}
For any $v\in\mathcal{D}(\breve Q_{\alpha,\gamma,a})$, we have 
\begin{equation*}\label{eq:q0}
\breve Q_{\alpha,\gamma,a}(v)=\underline Q^{\tau=0}_{\alpha,\gamma,a}(e^{i\varphi_{\alpha,\gamma,a}}v),
\end{equation*}
where $\underline Q^{\tau}_{\alpha,\gamma,a}$ is the quadratic form  in~\eqref{eq:Qa}.
In the rest of the proof, we write $\breve Q$ for $\breve Q_{\alpha,\gamma,a}$ and $\underline{\mathsf{s}}$ for $\underline{\mathsf{s}}_{\alpha,a}$.
We now express $\breve Q$ in the  polar coordinates $(\rho,\theta)\in (0,+\infty)\times(0,\pi)=:\breve D_{\mathrm{pol}}$ as follows
\[\breve Q_{\mathrm{pol}}(v)=\int_0^\pi\int_0^{+\infty}\Big(|\partial_{\rho}v|^2+\frac 1{\rho^2}\big|(\partial_{\theta}-i\underline{\mathsf s}_{\mathrm{pol}}\frac {\rho^2}2\cos\gamma)v\big|^2+\underline{\mathsf s}_{\mathrm{pol}}^2\rho^2\sin^2\gamma\sin^2(\alpha-\theta)|v|^2\Big)\,\rho\, d\rho\, d\theta,\]
where $\underline{\mathsf s}_{\mathrm{pol}}(\rho,\theta)=\underline{\mathsf s}( x_1, x_2)$ and
	\begin{multline*}\mathcal{D}(\breve Q_{\mathrm{pol}})=\Big\{v \in L^2_\rho(\breve D_{\mathrm{pol}})~:~\partial_\rho v \in L^2_\rho(\breve D_{\mathrm{pol}}),\,\\ \frac 1 \rho\Big(\partial_{\theta}-i\underline{\mathsf s}_{\mathrm{pol}}\frac {\rho^2}2\cos\gamma\Big)v \in L^2_\rho(\breve D_{\mathrm{pol}}),\, \rho v \in L^2_\rho(\breve D_{\mathrm{pol}})\Big\}.
	\end{multline*}
	For any $D\subset\R^2$, we denote by $L^2_\rho(D)$ the weighted space of weight $\rho$.  Consider further the quadratic form $\tilde{Q}_{\mathrm{pol}}$, defined on $\tilde{D}_{\mathrm{pol}}:=(0,+\infty)\times(-\pi+\alpha,\alpha)$ by
	\begin{equation*}
	\tilde{Q}_{\mathrm{pol}}(u)=\int_{-\pi+\alpha}^\alpha\int_0^{+\infty}\Big(|\partial_\rho u|^2+\frac 1 {\rho^2}\Big|\Big(\partial_{\theta}+i\tilde{\mathsf s}_{\mathrm{pol}}\frac {\rho^2}2\cos\gamma \Big)u\Big|^2+\tilde{\mathsf s}^2_{\mathrm{pol}}\rho^2\sin^2\gamma\sin^2\theta|u|^2\Big)\rho\,d\rho\,d\theta,
	\end{equation*}
	where
\begin{multline*}\mathcal{D}(\tilde{Q}_{\mathrm{pol}})=\Big\{u \in L^2_\rho(\tilde{D}_{\mathrm{pol}})~:~\partial_\rho u \in L^2_\rho(\tilde{D}_{\mathrm{pol}}),\\ \frac 1 \rho\Big(\partial_{\theta}+i\tilde{\mathsf s}_{\mathrm{pol}}\frac {\rho^2}2\cos\gamma\Big)u \in L^2_\rho(\tilde{D}_{\mathrm{pol}}),\, \rho u \in L^2_\rho(\tilde D_{\mathrm{pol}})  \Big\},
\end{multline*}
	and \begin{equation*}
	\tilde{\mathsf s}_{\mathrm{pol}}(\rho,\theta)=
	\left\{
	\begin{array}{ll}
	a& \mathrm{if}\ (\rho,\theta)\in (0,+\infty)\times(-\pi+\alpha,0),\\
	1& \mathrm{if}\ (\rho,\theta)\in (0,+\infty)\times(0,\alpha).
	\end{array}
	\right.
	\end{equation*} 
	For any $u\in \mathcal{D}(\tilde{Q}_{\mathrm{pol}})$, we have $\tilde{Q}_{\mathrm{pol}}(u)=\breve{Q}_{\mathrm{pol}}(v)$, where $v(\rho,\theta)=u(\rho,-\theta+\alpha)$.
	
	In light of the computation above and from Theorem \ref{thm:main},  a sufficient condition for $\inf_{\tau} \underline\sigma(\alpha,\gamma,a,\tau)$ to be attained by some $\tau_*\in\R$ and to be an eigenvalue of the operator $\mathcal L_{\underline\Ab_{\alpha,\gamma,a}}+V_{\underline{\mathbf{B}}_{\alpha,\gamma,a},\tau_*}$ is to find a trial function $u_0\in \mathcal{D}(\tilde{Q}_{\mathrm{pol}})$ satisfying
	\begin{equation}\label{eq:exn1}
	\tilde{Q}_{\mathrm{pol}}(u_0)<\Lambda\|u_0\|^2_{L^2_\rho(\R^2_+)},
	\end{equation}
	where $\Lambda=\Lambda[\alpha,\gamma,a]$ is the minimum between $\beta_a$ and $|a|\zeta_{\nu_0}$.
	Towards this, we consider the function 
	\[u_0(\rho,\theta)=e^{-\omega \frac {\rho^2}2}e^{-i\rho g(\theta)},\]
	where  $g\colon(-\pi+\alpha,\alpha) \rightarrow \R$ is a piecewise-differentiable function and  $\omega>0$. In what follows, we will suitably choose $g$ and $\omega$. We define the functional $\mathcal J$ on $\dom \tilde{Q}_{\mathrm{pol}}$ by
	\[u\mapsto \mathcal J[u]=\tilde{Q}_{\mathrm{pol}}(u)-\Lambda\|u\|^2_{L^2_\rho(\tilde{D}_{\mathrm{pol}})}.\]
	The condition in~\eqref{eq:exn1}  is now equivalent to 
	\begin{equation}\label{eq:exn2}
	\mathcal J[u_0]<0.
	\end{equation} 
	We compute $\mathcal J[u_0]$ and get
	\begin{eqnarray*}
	\mathcal J[u_0] &=& \int_0^{+\infty} \rho e^{-\omega \rho^2}\,d\rho \int_{-\pi+\alpha}^{\alpha} \Big(g^2(\theta)+g'^{\,2}(\theta)-\Lambda\Big)\,d\theta
	\nonumber\\
	&&- \int_0^{+\infty} \rho^2 e^{-\omega \rho^2}\,d\rho \int_{-\pi+\alpha}^{\alpha} \tilde{\mathsf s}_{\mathrm{pol}} g'(\theta)\cos\gamma\,d\theta\nonumber\\
	&&+\int_0^{+\infty} \rho^3 e^{-\omega \rho^2}\,d\rho \int_{-\pi+\alpha}^\alpha \Big( \omega^2+\tilde{\mathsf s}^2_{\mathrm{pol}}\sin^2\gamma\sin^2\theta+\frac 14 \tilde{\mathsf s}^2_{\mathrm{pol}}\cos^2\gamma\Big)\,d\theta.\label{eq:exn3}
	\end{eqnarray*}
We use the following properties of  $\mathcal E_n=\int_0^{+\infty} \rho^n e^{-\omega \rho^2}\,d\rho$, for $n\geq0$: $\mathcal E_1=1/(2\omega)$, $\mathcal E_2=\sqrt \pi/(4 \omega^{3/2})$, and $\mathcal E_3=1/(2\omega^2)$ (see~\cite[Equations 3.461]{gradshteyn2015table}). Hence,~\eqref{eq:exn3} becomes 
	\begin{multline}
	\mathcal J[u_0]=\frac 1{2\omega} \int_{-\pi+\alpha}^{\alpha} \Big(g^2(\theta)+g'^{\,2}(\theta)-\Lambda\Big)\,d\theta- \frac{\sqrt \pi}{4 \omega^{3/2}}\int_{-\pi+\alpha}^{\alpha} \tilde{\mathsf s}_{\mathrm{pol}}g'(\theta)\cos\gamma \,d\theta\\
	+\frac 1{2\omega^2} \int_{-\pi+\alpha}^\alpha \Big( \omega^2+\tilde{\mathsf s}^2_{\mathrm{pol}}\sin^2\gamma\sin^2\theta+\frac 14 \tilde{\mathsf s}^2_{\mathrm{pol}}\cos^2\gamma\Big)\,d\theta.\label{eq:exn4}
	\end{multline}
	Now, we choose 
	\begin{equation*}
	g(\theta)=
	\left\{
	\begin{array}{ll}
	c_1 e^\theta+c_2 e^{-\theta}& \mathrm{if}\ -\pi+\alpha<\theta\leq 0,\\
	c_3 e^\theta+c_4 e^{-\theta}& \mathrm{if}\ 0<\theta<\alpha,
	\end{array}
	\right.
	\end{equation*} 
	where $c_i$, $i=1,\cdots,4$,  are real coefficients satisfying the  condition $c_1+c_2=c_3+c_4$ which makes the function $g$ continuous on $(-\pi+\alpha,\alpha)$. Implementing this choice in~\eqref{eq:exn4} yields
	\begin{multline*}
	\mathcal J[u_0]=\frac{(2-e^{-2\alpha}-e^{-2\pi+2\alpha})}{2\omega}c_1^2+\frac{(-e^{-2\alpha}+e^{2\pi-2\alpha})}{2\omega}c_2^2
	+\frac{(-e^{-2\alpha}+e^{2\alpha})}{2\omega}c_3^2+\\
	\frac{(1-e^{-2\alpha})}{\omega}c_1c_2+\frac{(-1+e^{-2\alpha})}{\omega}c_1c_3+\frac{(-1+e^{-2\alpha})}{\omega}c_2c_3+\\
	\frac {(1-a-e^{-\alpha}+ae^{-\pi+\alpha})\sqrt{\pi}\cos\gamma}{4\omega^{\frac 32}}c_1
	+\frac{(1-a-e^{-\alpha}+ae^{\pi-\alpha})\sqrt{\pi}\cos\gamma}{4\omega^{\frac 32}}c_2+\\
	\frac{(e^{-\alpha}-e^{\alpha})\sqrt{\pi}\cos\gamma}{4\omega^{\frac 32}}c_3+\frac{4\pi\omega^2-4\pi\omega \Lambda+ (a^2(\pi-\alpha)+\alpha)\cos^2\gamma}{8\omega^2}\\
	+\frac {2\big(a^2(\pi-\alpha)+\alpha+(a^2-1)\cos\alpha\sin\alpha\big)\sin^2\gamma}{8\omega^2}
	\end{multline*}
	Notice that $\mathcal J[u_0]$ is quadratic  in $c_1$, $c_2$ and $c_3$. Minimizing $\mathcal J[u_0]$ with respect to these coefficients gives a unique solution $(c_1,c_2,c_3)$, which is
	\begin{align*}
	c_1=& \frac{e^{\pi-2\alpha}\big((-1+a)e^{\pi}+(-1+a)e^{\pi+2\alpha}+2e^\alpha(-a+e^\pi)\big)\sqrt{\pi}\cos\gamma\big(-1+\coth\pi\big)}{16\sqrt{\omega}} \\
	c_2=&\frac{\big(-1+a+(-1+a)e^{2\alpha}-2(-1+ae^\pi) e^\alpha\big)\sqrt{\pi}\cos\gamma\big(-1+\coth\pi\big)}{16\sqrt{\omega}}\\
	c_3=&\frac{e^{-\alpha}\big(-a+e^\pi+(-1+a)\cosh(\pi-\alpha)\big)\sqrt{\pi}\cos\gamma\csch\pi}{8\sqrt{\omega}}.
	\end{align*}
	We compute $\mathcal J[u_0]$ corresponding  to the coefficients above, and get $\mathcal J[u_0]=P[\alpha,\gamma,a](x)$ with $x=\frac1\omega>0$, where $P[\alpha,\gamma,a]
	+$ is as in \eqref{eq: P}. This, together with the condition in~\eqref{eq:exn2}, complete the proof.
\end{proof}
We consider  cases of $(\alpha,\gamma,a,\tau)$ where the infimum of the spectrum of $\mathcal L_{\underline\Ab_{\alpha,\gamma,a}}+V_{\underline B_{\alpha,\gamma,a},\tau}$ is an eigenvalue below  the essential spectrum.
The following theorem reveals an exponential decay of the corresponding eigenfunction, for large values of $|x|$. This is a standard Agmon-estimate result. For the proof details,  we refer the reader to similar results in~\cite[Theorem~9.1]{bonnaillie2003analyse} and~\cite{bonnaillie2012discrete}.
\begin{theorem}\label{thm:v0-decay}
	 Let $a\in[-1,1)\setminus\{0\}$,  $\alpha\in(0,\pi)$, $\gamma\in(0,\pi/2]$ and $\tau\in \R$. Consider the case where $\underline\sigma(\alpha,\gamma,a,\tau)<\underline\sigma_{ess}(\alpha,\gamma,a,\tau)$. Let $v_{\alpha,\gamma,a,\tau}$ be the  normalized eigenfunction corresponding to $\underline\sigma(\alpha,\gamma,a,\tau)$. For all $\eta\in { \big(0,\sqrt{\underline\sigma_{ess}(\alpha,\gamma,a,\tau)-\underline\sigma(\alpha,\gamma,a,\tau)}\big)}$, there exists a constant $C$ which depends on $\alpha$ and $\eta$ such that
	\[\underline Q_{\alpha,\gamma,a}^\tau(e^{\eta\phi}v_{\alpha,\gamma,a,\tau})\leq C,\]
	where  $\phi(x)=|x|$, for $x\in\R^2_+$.
\end{theorem}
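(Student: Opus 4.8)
The plan is to follow the standard Agmon-estimate argument adapted to the two-dimensional reduced operator $\mathcal L_{\underline\Ab_{\alpha,\gamma,a}}+V_{\underline{\mathbf{B}}_{\alpha,\gamma,a},\tau}$, exploiting that its spectrum below $\underline\sigma_{ess}(\alpha,\gamma,a,\tau)$ is discrete and that the potential term $V_{\underline{\mathbf{B}}_{\alpha,\gamma,a},\tau}$ is confining away from the lines $\Upsilon^{(1)}_{\alpha,\gamma,\tau}$ and $\Upsilon^{(2)}_{\alpha,\gamma,a,\tau}$. First I would fix a smooth radial weight $\phi_\chi(x)=\chi(|x|)$, where $\chi$ is a bounded Lipschitz truncation of $|x|$ that agrees with $|x|$ on $B_R$ and is constant outside $B_{2R}$; this makes $e^{\eta\phi_\chi}v_{\alpha,\gamma,a,\tau}$ a legitimate element of $\mathcal D(\underline Q^\tau_{\alpha,\gamma,a})$, so all manipulations are justified, and one removes the truncation at the end by monotone convergence once a uniform bound in $R$ is in hand.

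The core computation is the standard IMS/Agmon identity: for $w:=e^{\eta\phi_\chi}v_{\alpha,\gamma,a,\tau}$ one has
\[
\underline Q^\tau_{\alpha,\gamma,a}(w) - \eta^2\|\,|\nabla\phi_\chi|\,w\|_{L^2(\R^2_+)}^2 \;=\; \underline\sigma(\alpha,\gamma,a,\tau)\,\|w\|_{L^2(\R^2_+)}^2,
\]
obtained by testing the eigenvalue equation against $e^{2\eta\phi_\chi}v_{\alpha,\gamma,a,\tau}$ and integrating by parts (the Neumann boundary condition kills the boundary term). Since $|\nabla\phi_\chi|\le 1$, this gives
\[
\underline Q^\tau_{\alpha,\gamma,a}(w) \;\le\; \big(\underline\sigma(\alpha,\gamma,a,\tau)+\eta^2\big)\,\|w\|_{L^2(\R^2_+)}^2 .
\]
Next I would split the integral $\|w\|^2$ over a large ball $B_{R_0}$ and its complement. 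On $\complement B_{R_0}$, Lemma~\ref{lem:persson} (Persson's characterization) guarantees that for any $\delta>0$ there is $R_0$ with $\underline Q^\tau_{\alpha,\gamma,a}(u)\ge(\underline\sigma_{ess}(\alpha,\gamma,a,\tau)-\delta)\|u\|^2$ for all $u$ supported outside $B_{R_0}$; applying this to $\psi w$ with a suitable cutoff $\psi$ supported in $\complement B_{R_0}$ (and absorbing the commutator $|\nabla\psi|^2$, which is supported in the compact annulus where $w$ is already controlled), one arrives at
\[
\big(\underline\sigma_{ess}(\alpha,\gamma,a,\tau)-\delta-\underline\sigma(\alpha,\gamma,a,\tau)-\eta^2\big)\int_{\complement B_{R_0}}|w|^2 \;\le\; C\int_{B_{R_0}}|w|^2 .
\]
Choosing $\eta<\sqrt{\underline\sigma_{ess}(\alpha,\gamma,a,\tau)-\underline\sigma(\alpha,\gamma,a,\tau)}$ and then $\delta$ small makes the bracket on the left strictly positive. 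Since $w=e^{\eta\phi_\chi}v_{\alpha,\gamma,a,\tau}$ with $\phi_\chi\le 2R$ is bounded on $B_{R_0}$ by a constant independent of $R$ (using $\|v_{\alpha,\gamma,a,\tau}\|_{L^2}=1$), the right-hand side is bounded uniformly in $R$, hence so is $\int_{\R^2_+}|w|^2$. Feeding this back into the quadratic-form inequality above bounds $\underline Q^\tau_{\alpha,\gamma,a}(w)$ uniformly in $R$; letting $R\to\infty$ and using Fatou yields $\underline Q^\tau_{\alpha,\gamma,a}(e^{\eta|x|}v_{\alpha,\gamma,a,\tau})\le C$ with $C$ depending only on $\alpha$ and $\eta$ (the $\alpha$-dependence enters through the geometry of the partition cutoffs and the constants in Persson's lemma).

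The main obstacle is purely technical rather than conceptual: one must verify that the IMS identity and the Persson-type localization go through cleanly despite the \emph{unbounded} electric potential $V_{\underline{\mathbf{B}}_{\alpha,\gamma,a},\tau}$ and the \emph{non-smooth} magnetic potential $\underline\Ab_{\alpha,\gamma,a}$ across the discontinuity line $l_\alpha$. The potential being unbounded is actually helpful (it only adds a nonnegative term to $\underline Q^\tau$ and in fact already forces decay transverse to the $\Upsilon$-lines), but one has to make sure the form domain $\mathcal D(\underline Q^\tau_{\alpha,\gamma,a})$ is stable under multiplication by $e^{\eta\phi_\chi}$ and by the radial cutoffs, which is where the truncated weight $\phi_\chi$ earns its keep. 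Since all of this is carried out in essentially identical settings in \cite[Theorem~9.1]{bonnaillie2003analyse} and \cite{bonnaillie2012discrete}, I would present the argument in the compressed form above and refer to those works for the routine verifications.
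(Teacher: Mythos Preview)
Your proposal is correct and follows exactly the standard Agmon-estimate route that the paper itself invokes: the paper does not give a proof but simply refers to \cite[Theorem~9.1]{bonnaillie2003analyse} and \cite{bonnaillie2012discrete}, and your sketch (truncated exponential weight, IMS identity for the eigenfunction, Persson's lemma outside a large ball, then monotone convergence) is precisely that argument. Nothing further is needed.
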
 

\section{Proof of Theorem \ref{thm: localization}}\label{sec: proof localization}
We consider the open and bounded set $\Omega\subset\mathbb{R}^3$ defined in the settings of Section \ref{sec: application}. Let $a\in [-1,1) \setminus \{0\}$ and $\mathfrak b>0$, we recall the piecewise-constant magnetic field $\mathbf B$ in \eqref{eq: magn field applic}
\begin{equation}\label{eq:B1}
	\mathbf{B}(x) = \mathbf{s}(x)(0,0,1), \qquad \mathbf{s} = \mathbbm{1}_{\Omega_1} + a \mathbbm{1}_{\Omega_2},
\end{equation}
and the linear operator $\mathcal{P}_{\mathfrak{b}, \mathbf{F}}$ introduced in \eqref{eq: def semicl op}. 
Recall also the discontinuity surface $S$,  at which the strength $|\mathbf{B}|$ exhibits a  discontinuity jump, and  the discontinuity edge $\Gamma$, which is the boundary of $S$. 

We denote by $Q_{\mathfrak{b}, \mathbf{F}}$ the quadratic form associated to $\mathcal{P}_{\mathfrak{b}, \mathbf{F}}$, defined by
\begin{equation}\label{eq: dom Q b}
	Q_{\mathfrak{b}, \mathbf{F}}(u) = \int_\Omega \big|(\nabla - i \mathfrak{b}\mathbf{F})u\big|^2\,dx, \qquad \mathcal{D}(Q_{\mathfrak{b}, \mathbf{F}}) = H^1(\Omega).
\end{equation}
The bottom of the spectrum $\lambda(\mathfrak{b})$ is  equal to 
\begin{equation}\label{eq:minmax}
	\lambda(\mathfrak{b}) = \inf_{\substack {u\in \mathcal{D} (Q_{\mathfrak{b}, \mathbf{F}}) \\ u\neq 0}}\frac{Q_{\mathfrak{b}, \mathbf{F}}(u)}{\|u\|_{L^2(\Omega)}^2}.
\end{equation}
We consider large values of $\mathfrak b$. The main goal of this section is to prove Theorem \ref{thm: localization}, that is to establish the localization of the eigenfunction corresponding to the eigenvalue $\lambda(\mathfrak{b}) $ near the set $D$ of points of the discontinuity edge $\Gamma$, given by 
\begin{equation}
	D= \big\{ \overline{x}\in \Gamma\, \vert\, \lambda_{\alpha_{\overline{x}}, \gamma_{\overline{x}}, a} < |a| \Theta_0\big\}.
\end{equation}
But first, as seen below, the discussion leading to the proof of Theorem \ref{thm: localization} establishes as a by-product the following  rough asymptotics of $\lambda(\mathfrak{b})$, as $\mathfrak b\rightarrow +\infty$.
\begin{theorem}[Asymptotics for $\lambda(\mathfrak{b})$]\label{thm: asympt semicl} Under Assumption \ref{asu}, it holds
	\[
	\lambda(\mathfrak{b}) = \mathfrak{b}\inf_{x\in{\Gamma}}  \lambda_{\alpha_x,\gamma_x,a}+ o(\mathfrak{b}), \qquad \mbox{as}\ \mathfrak b\rightarrow+\infty.
	\]
	\end{theorem}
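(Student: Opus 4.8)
The plan is to establish matching upper and lower bounds for $\lambda(\mathfrak b)$ of order $\mathfrak b$ with leading coefficient $\inf_{x\in\Gamma}\lambda_{\alpha_x,\gamma_x,a}$. Since the paper already tells us this will be a by-product of the analysis leading to Theorem \ref{thm: localization}, the proof will rely on the same local-model machinery used for the localization estimate, together with the identification $\lambda_{\alpha,\gamma,a} = \inf_\tau \underline\sigma(\alpha,\gamma,a,\tau)$ from \eqref{eq:l3} and the fact (Theorem \ref{thm:main}, invoked via Assumption \ref{asu} as in Remark \ref{rem:asu}) that at the minimizing points of $\Gamma$ the energy $\lambda_{\alpha_{\overline x},\gamma_{\overline x},a}$ is a genuine eigenvalue with an $L^2$ eigenfunction that decays exponentially (Theorem \ref{thm:v0-decay}).

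First I would prove the \textbf{upper bound} $\lambda(\mathfrak b)\le \mathfrak b\inf_{x\in\Gamma}\lambda_{\alpha_x,\gamma_x,a}+o(\mathfrak b)$. Pick $\overline x\in\Gamma$ achieving (or nearly achieving) the infimum; by Assumption \ref{asu} and Remark \ref{rem:asu} the condition \eqref{eq:lb} of Theorem \ref{thm:main} holds at $\overline x$, so $\lambda_{\alpha_{\overline x},\gamma_{\overline x},a}=\underline\sigma(\alpha_{\overline x},\gamma_{\overline x},a,\tau_*)$ is an eigenvalue of $\mathcal L_{\underline\Ab}+V_{\underline{\mathbf B},\tau_*}$ with normalized eigenfunction $v_{\alpha_{\overline x},\gamma_{\overline x},a,\tau_*}$. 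One then lifts this 2D eigenfunction (times the corresponding Fourier mode $e^{i\tau_* x_3}$, suitably cut off) through boundary coordinates near $\overline x$ that flatten $\partial\Omega$ and straighten the discontinuity surface $S$ to the model configuration $(\mathcal D^1_{\alpha_{\overline x}},\mathcal D^2_{\alpha_{\overline x}})$ with field $\mathbf B_{\alpha_{\overline x},\gamma_{\overline x},a}$. After rescaling $x\mapsto \sqrt{\mathfrak b}\,x$, gauge-transforming the vector potential $\mathfrak b\mathbf F$ to $\mathbf A_{\alpha_{\overline x},\gamma_{\overline x},a}$ up to lower-order terms, and using the exponential decay of $v$ (Theorem \ref{thm:v0-decay}) to control the errors coming from the cutoff and from the curvature of $\partial\Omega$ and $S$, the Rayleigh quotient of the trial state is $\mathfrak b\,\lambda_{\alpha_{\overline x},\gamma_{\overline x},a}+o(\mathfrak b)$; then optimize over $\overline x$. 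This is essentially the content of Proposition \ref{pro: upper bound lambda b} referenced in the text.

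Next the \textbf{lower bound} $\lambda(\mathfrak b)\ge \mathfrak b\inf_{x\in\Gamma}\lambda_{\alpha_x,\gamma_x,a}-o(\mathfrak b)$. Here I would use a partition of unity at scale $\mathfrak b^{-\rho}$ (with $0<\rho<1/2$), splitting $\Omega$ into: interior balls, balls meeting $\partial\Omega$ away from $\Gamma$, balls meeting $S$ away from $\Gamma$, and balls centered near points $x\in\Gamma$. By the IMS localization formula $Q_{\mathfrak b,\mathbf F}(u)=\sum_j Q_{\mathfrak b,\mathbf F}(\chi_j u)-\sum_j\||\nabla\chi_j|u\|^2$ with the error $\mathcal O(\mathfrak b^{2\rho})=o(\mathfrak b)$. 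On each piece one compares with the relevant constant-field effective model: interior pieces give $\ge \mathfrak b\,|\mathbf s|=\mathfrak b$ or $\mathfrak b|a|$; boundary pieces away from $\Gamma$ give $\ge \mathfrak b\zeta_{\nu}\ge\mathfrak b|a|\Theta_0$ (for the $a$-region, after scaling) and similarly $\ge \mathfrak b\Theta_0$ for the unit region; pieces on $S$ away from $\partial\Omega$ give $\ge \mathfrak b\beta_a\ge\mathfrak b|a|\Theta_0$; and pieces near a point $x\in\Gamma$ give $\ge \mathfrak b\lambda_{\alpha_x,\gamma_x,a}-o(\mathfrak b)$ after freezing the geometry and field in the ball and invoking \eqref{eq:l3}. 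Since all the "away from $\Gamma$" lower bounds are $\ge \mathfrak b|a|\Theta_0$, and by Remark \ref{rem:asu}(ii) we have $\inf_{x\in\Gamma}\lambda_{\alpha_x,\gamma_x,a}<|a|\Theta_0\le\min(\beta_a,|a|\zeta_{\nu})$, the global minimum over all pieces is $\ge \mathfrak b\inf_{x\in\Gamma}\lambda_{\alpha_x,\gamma_x,a}-o(\mathfrak b)$. This is Proposition \ref{pro: lower bound quadratic form} in the text. Combining the two bounds finishes the proof.

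The \textbf{main obstacle} is the lower bound near $\Gamma$: one must replace the true operator $\mathcal P_{\mathfrak b,\mathbf F}$, in a small ball around $x\in\Gamma$, by the frozen model $\mathcal L_{\alpha_x,\gamma_x,a}$ after rescaling, while controlling (i) the approximation of $\partial\Omega$ by its tangent plane and of $S$ by the tangent wedge configuration, (ii) the gauge transformation needed to pass from $\mathfrak b\mathbf F$ to $\mathbf A_{\alpha_x,\gamma_x,a}$ up to a polynomially small error, and (iii) the fact that the model operator $\mathcal L_{\alpha_x,\gamma_x,a}$ lives on the \emph{unbounded} half-space, so the localized quadratic form on the ball only bounds it from below after extending by zero and checking the boundary/Neumann conditions are compatible. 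A further subtlety is the uniformity in $x\in\Gamma$ of all these error estimates, which one obtains from the smoothness and compactness of $\Gamma$ and the continuity of $x\mapsto(\alpha_x,\gamma_x)$, together with semicontinuity of $x\mapsto\lambda_{\alpha_x,\gamma_x,a}$ to make sense of the infimum.
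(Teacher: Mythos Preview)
Your proposal is correct and follows essentially the same route as the paper: the upper bound is obtained exactly as in Proposition~\ref{pro: upper bound lambda b} by transplanting the 2D eigenfunction $v_{\alpha_{\overline x},\gamma_{\overline x},a,\tau_*}$ (times $e^{i\sqrt{\mathfrak b}\tau_* y_3}$) through the boundary coordinates of Section~\ref{sec: coordinates discontinuity line} and controlling the errors via Theorem~\ref{thm:v0-decay}, and the lower bound is the IMS/partition-of-unity argument of Proposition~\ref{pro: lower bound quadratic form} (with the specific parameter choice recorded in Proposition~\ref{pro: lower bound bottom}), using Assumption~\ref{asu} to ensure that the $\Gamma$-pieces dominate since all other local energies are bounded below by $\mathfrak b|a|\Theta_0$. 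The only cosmetic difference is that the paper states the upper bound for each $\overline x\in D$ and then passes to the infimum via $\limsup$, rather than fixing a near-minimizer from the start.
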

The proof of the theorem above is split in two parts, in Proposition \ref{pro: lower bound bottom} we prove the lower bound and in Proposition \ref{pro: upper bound lambda b} we establish the corresponding upper bound.  Proposition \ref{pro: lower bound bottom} is a particular result induced from Proposition~\ref{pro: lower bound quadratic form}. The latter proposition is essential in establishing the Agmon estimates in Theorem~\ref{thm: localization}.

\subsection{Change of variables}
We will localise the study of the energy in different regions of $\overline\Omega$, which we classify into four categories: regions away from the discontinuity surface $S$ and the boundary $\partial\Omega$, regions meeting $S$ and away from $\partial\Omega$, regions meeting $\partial\Omega$ and away from $S$ and its boundary $\Gamma$, and regions meeting $\Gamma$. A rigorous definition is given later (see Section~\ref{sec:low}). In each of these regions, we will use suitable local coordinates. When working away from the discontinuity surface $S$ and its boundary $\Gamma$, the situation is well-known and already analysed in previous papers (see e.g. \cite[Chapter 9]{fournais2010spectral}). We focus then on new situations when the foregoing regions meet $\overline S$. Below, we  describe the appropriate local coordinates to use in two cases: the first one is when the regions meet $\Gamma$ and the second one is when these regions meet  $S$ away from $\Gamma$.

\subsubsection{Boundary coordinates near the discontinuity  edge $\Gamma$}\label{sec: coordinates discontinuity line}
In this section, we will define a local change of coordinates near the discontinuity edge $\Gamma$.

 Let $x_0\in\Gamma$. After performing a translation, we may assume that the Cartesian coordinates of the point $x_0$ are all $0$ ($x_0 = 0$). In what follows, we work near the point $x_0$. 

As seen below, our problem will have $\mathcal L_{\alpha_0,\gamma_0,a}$ as a leading operator, where  $\mathcal L_{\alpha_0,\gamma_0,a}$ is defined in~\eqref{eq:La+}, for $\alpha_0$ being the angle between the tangent plane of $\partial\Omega$  and the discontinuity surface $S$ at $x_0$, and $\gamma_0$ being the angle between the magnetic field $\mathbf{B}$ and the discontinuity edge $\Gamma$ at this point. To show this link with the leading operator, we define a coordinates-transformation $\Phi=\Phi_{x_0}:(x_1,x_2,x_3)\mapsto (y_1,y_2,y_3)$, in a neighborhood $\mathcal N_{x_0}$ of $x_0$, s.t. $\Phi(x_0)=(0,0,0)$ and there exists a neighborhood $\mathcal U_0$ of $(0,0,0)$ where
\begin{equation}\label{c1}
	\Phi(\mathcal N_{x_0}\cap \Gamma)=\mathcal U_0\cap(y_3\mbox{-axis})\,
	\end{equation}
\begin{equation}\label{c2}
	\Phi(\mathcal N_{x_0}\cap S)=\mathcal U_0\cap P_{\alpha_0}\,
\end{equation}
\begin{equation}\label{c3}
	\Phi(\mathcal N_{x_0}\cap (\partial\Om_1\setminus\Gamma))=\mathcal U_0\cap\{(y_1,0,y_3):y_1>0\}\,
\end{equation}
\begin{equation}\label{c4}
	\Phi(\mathcal N_{x_0}\cap (\partial\Om_2\setminus\Gamma))=\mathcal U_0\cap\{(y_1,0,y_3):y_1<0\}\,
\end{equation}
\begin{equation}\label{c5}
	\Phi(\mathcal N_{x_0}\cap \Om_1)=\mathcal U_0\cap\mathcal D_{\alpha_0}^1\ \mbox{and}\ 	\Phi(\mathcal N_{x_0}\cap \Om_2)=\mathcal U_0\cap\mathcal D_{\alpha_0}^2.
\end{equation}
Here, $\mathcal D_{\alpha_0}^1$ and $\mathcal D_{\alpha_0}^2$ are the sets in~\eqref{eq: def D1alpha} and~\eqref{eq: def D12alpha}, and $P_{\alpha_0}=\mathcal D_{\alpha_0}^1\cap\mathcal D_{\alpha_0}^2$ is a semi plane making an angle of $\alpha_0$ with $(y_1y_3)$ (see Figure~\ref{fig:coordinates} for illustration):
\begin{equation*}
	P_{\alpha_0}:\begin{cases}
	y_2&=y_1\tan\alpha_0,\ y_1>0,\ \mbox{if}\ \alpha_0\in(0,\frac {\pi}{2})\\
	y_2&=y_1\tan\alpha_0,\ y_1<0,\ \mbox{if}\ \alpha_0\in(\frac {\pi}{2},\pi)\\
y_2&=0,\ \mbox{if}\ \alpha_0=\frac\pi{2}.
	\end{cases}
\end{equation*}

To that end, we use the same 'magnetic normal coordinates'  transformation in~\cite{PopRay}, which we denote by $\Phi$ in our paper. In~\cite{PopRay}, $\Phi$ is the composition of two local diffeomorphisms $\Phi_1$ and $\Phi_2$ respectively defined near $x_0$ and $\Phi_1(x_0)$ (see the precise definitions below in this section). 

Roughly speaking, $\Phi_1:(x_1,x_2,x_3)\mapsto(r,t,s)$ is a standard tubular coordinates-transformation that straightens  $\Gamma$, and sends the boundary $\partial\Om$ and the surface $S$ (near $x_0$) respectively to surfaces $\breve{(\partial\Om)}$ and $\breve S$ which make  at the point $\Phi_1(x)$ the same (opening) angle $\alpha_x$ made between $\partial \Om$  and $S$ at the point $x$, for $x\in\Gamma$. However, these surfaces are not necessarily  planar surfaces (see~Figure~\ref{fig:coordinates}). 

To straighten $\breve{(\partial\Om)}$ and $\breve S$, and to transform the variable opening angle $\alpha_x$ to the forgoing constant angle $\alpha_0$, we perform a second transformation, $\Phi_2$, near $\Phi_1(x_0)=(0,0,0)$. In other words, the local diffeomorphism $\Phi_2:(r,t,s)\mapsto (y_1, y_2, y_3)$ is defined such that $\breve{(\partial\Om)}$ is sent to a patch of the $(y_1y_3)$-plane  and $\breve S$ to a patch of the aforementioned semi plane $P_{\alpha_0}$ (again see~Figure~\ref{fig:coordinates}).

In what follows, we will make more precise the rough discussion above. We will borrow from~\cite{PopRay} the following terminology: we refer to $\Phi_1$ (resp.~$\Phi_2$) as the first (resp.~second) normalization transformation.

\paragraph{\emph{The first normalization}} We first consider the tubular coordinates-transformation $\Phi$ defined in a neighborhood of $(0,0,0)$ by (see~\cite{PopRay}, also~e.g.~\cite[Section 3]{Pan1} or~\cite{FKP})
\[\Phi^{-1}_1(r,t,s)={rV(s)+}\xi(s)+tn(s),\]
where $s\mapsto\xi(s)$  is a parametrization by arc length of the edge $\Gamma$, $n(s)$ is the inward unit normal vector at the point $\xi(s)$,  and $V(s)$ is the unit vector normal to $\Gamma$ at $\xi(s)$ in the tangent plane of $\partial \Om$, pointing toward $\Om_1$. The orientation of the parametrization of $\Gamma$ is fixed such that $\det(V(s),n(s),\xi'(s))>0$.

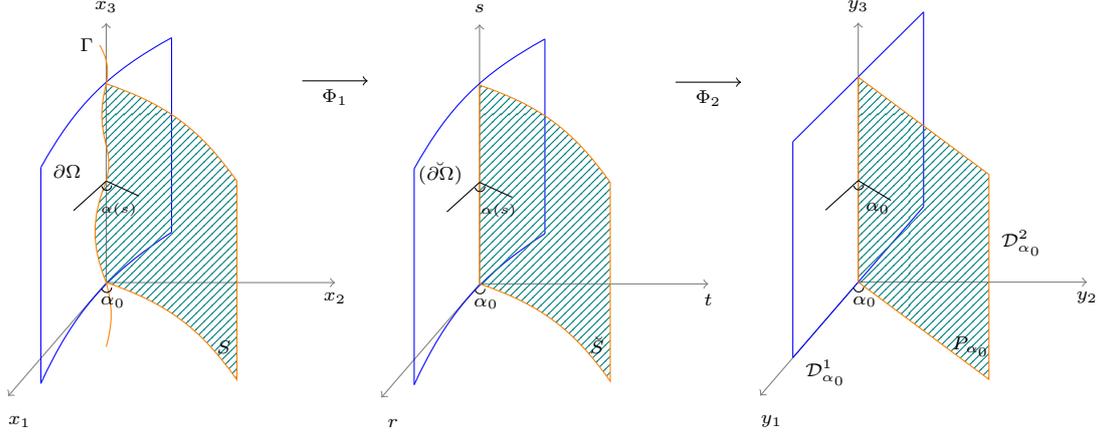
\begin{figure}
	\centering\begin{tikzpicture}[scale= 0.43]
		\draw[gray, ->] (0,0.78) to (0,8.78);
		\draw[gray, ->] (0,0.78) to (7,0.78);
		\draw[gray, ->] (0,0.78) to (-3,-2.72);
		\draw[blue] (-2, -2.33) to[bend left = 16] (2,2.33);
		\draw[blue] (-2,4.33) to[bend left = 16] (2, 8.33);
		\node at (-0.6,8.1) {\tiny{$\Gamma$}};
		\node at (-1.2, 4.2) {\tiny{$\partial\Omega$}};
		\draw[blue] (-2,-2.33) to (-2, 4.33);
		\draw[blue] (2,8.33) to (2,2.33);
		\draw[orange, pattern=north east lines,  pattern color=teal] (0,0.78) to[bend left=22] (0,3.91)to[bend right=14] (0,5)to[bend left=14] (0,6.91)to [bend left = 18] (4,3.91) to (4,-2.22)  to[bend right = 18](0,0.78);
		\draw[orange] (0,6.91) to [bend right = 20] (-0.2, 8.1);
		\draw[orange] (0,0.78) to [bend left = 15] (0, -1.2);
		\draw (0,3.91) to (-1,3);
		\draw (0,3.91) to (1,3.45);
		\draw[->] (6,7) to (8,7);
		\draw ([shift=(30:1cm)]-1,0.1) arc (190:361:0.15cm);
		\draw ([shift=(30:1cm)]-1,3.25) arc (190:361:0.15cm);
		\node at (0.2, 0.18) {\tiny{$\alpha_0$}};
		\node at (0.4, 3.05) {\tiny{${}_{\alpha(s)}$}};
		\node at (0,9.28) {\tiny{$x_3$}};
		\node at (7, 0.3) {\tiny{$x_2$}};
		\node at (-2.65, -3.5) {\tiny{$x_1$}};
		\node at (7, 6.5) {\tiny{$\Phi_1$}};
		\node at (3.6, -1.2) {\tiny{$S$}};
	\end{tikzpicture}
	\begin{tikzpicture}[scale= 0.43]
		\draw[gray,->] (0,0.78) to (0,8.78);
		\draw[gray,->] (0,0.78) to (7,0.78);
		\draw[gray,->] (0,0.78) to (-3,-2.72);
		\draw[blue] (-2, -2.33) to[bend left = 16] (2,2.33);
		\draw[blue] (-2,4.33) to[bend left = 16] (2, 8.33);
		\node at (-1.2, 4.2) {\tiny{$(\breve{\partial\Omega})$}};
		\draw[blue] (-2,-2.33) to (-2, 4.33);
		\draw[blue] (2,8.33) to (2,2.33);
		\draw[orange, pattern=north east lines,  pattern color=teal] (0,0.78) to (0,6.91) to [bend left = 18] (4,3.91) to (4,-2.22) to [bend right = 18] (0,0.78);
		\draw (0,3.91) to (-1,3);
		\draw (0,3.91) to (1,3.45);
		\draw[->] (6,7) to (8,7);
		\draw([shift=(30:1cm)]-1,0.1) arc (190:361:0.15cm);
		\draw ([shift=(30:1cm)]-1,3.25) arc (190:361:0.15cm);
		\node at (0.2, 0.18) {\tiny{$\alpha_0$}};
		\node at (0.6, 3.05) {\tiny{${}_{\alpha(s)}$}};
		\node at (0,9.28) {\tiny{$s$}};
		\node at (7, 0.3) {\tiny{$t$}};
		\node at (-2.65, -3.5) {\tiny{$r$}};
		\node at (7, 6.5) {\tiny{$\Phi_2$}};
		\node at (3.6, -1.1) {\tiny{$\breve{S}$}};
	\end{tikzpicture}
	\begin{tikzpicture}[scale= 0.43]
		\draw[gray,->] (0,0.78) to (0,8.78);
		\draw[gray,->] (0,0.78) to (7,0.78);
		\draw[gray,->] (0,0.78) to (-3,-2.72);
		\draw[blue] (-2, -1.55) to (2, 3.11);
		\draw[blue] (-2,5.11) to (2, 9.11);
		\draw[blue] (-2,-1.55) to (-2, 5.11);
		\draw[blue] (2,9.11) to (2,3.11);
		\draw[orange, pattern=north east lines,  pattern color=teal] (0,0.78) to (0,7.1) to (4,4.1) to (4,-2.22) to (0,0.78);
		\draw (0,3.91) to (-1,3);
		\draw (0,3.91) to (1,3.3);
		\draw([shift=(30:1cm)]-1,0.1) arc (190:361:0.15cm);
		\draw ([shift=(30:1cm)]-1,3.25) arc (190:361:0.15cm);
		\node at (0.2, 0.18) {\tiny{$\alpha_0$}};
		\node at (0.6, 3.05) {\tiny{$\alpha_0$}};
		\node at (0,9.28) {\tiny{$y_3$}};
		\node at (7, 0.3) {\tiny{$y_2$}};
		\node at (-1, -2) {\tiny{$\mathcal{D}^1_{\alpha_0}$}};
		\node at (5, 2) {\tiny{$\mathcal{D}^2_{\alpha_0}$}};
		\node at (-2.65, -3.5) {\tiny{$y_1$}};
		\node at (3.45, -1.2) {\tiny{$P_{\alpha_{0}}$}};
	\end{tikzpicture}
	\caption{An illustration of the coordinates-transformation $\Phi = \Phi_2\circ \Phi_1$ as a composition of the local diffeomorphisms $\Phi_1$ and $\Phi_2$, as defined in Section \ref{sec: coordinates discontinuity line}. The shaded regions respectively represent (from the left to the right) the surfaces $S$, $\breve{S}$ and $P_{\alpha_0}$ near $(0,0,0)$.}
	\label{fig:coordinates}
\end{figure}

Let now $g_0$ be the Riemann metric on $\mathbb{R}^3$. Under $\Phi_1$, the matrix $G_1$ of the metric $g_0$ satisfies 
\begin{equation*}\label{eq: approx matrix1}
	G_1^{-1} = Id + {  \mathcal{O}(|\mathbf{r}|)}.
\end{equation*}
The Jacobian $J_{{\Phi_{1}}}$ satisfies 
\begin{equation*}\label{eq: approx determinant1}
|J_{\Phi_{1}}|=	1 + \mathcal{O}(|\mathbf{r}|).
\end{equation*}
Note that $\Phi_1$  transforms the discontinuity surface $S$ near $x_0$ to a surface-neighborhood $\breve S$ of $(0,0,0)$  making an angle $\alpha(s):=\alpha_{\Phi_1^{-1}(0,0,s)}$ with the boundary $(\breve{\partial\Om})$ at the point $(0,0,s)$ (see Figure~\ref{fig:coordinates}). Clearly, $s\mapsto \alpha(s)$ is a smooth function.
\paragraph{\emph{The second normalization}} We now introduce the second change of coordinates, $\Phi_2$, in a neighborhood of $(0,0,0) = \Phi_1(x_0)$. First we underline that we can take the neighborhood of $(0,0,0)$ such that $(\breve{\partial\Omega})\cup\breve{S}$ is included in 
\[
	\{(r,t,s) \, \vert\, \phi_s(r)=t \,\,\,\mbox{or}\,\,\, r= \psi_s(t)\},
\]
where $\phi_s$ and $\psi_s$ are smooth functions depending smoothly on the parameter $s$ and satisfying 
\[
\phi_s(0)=0, \quad \phi'_s(0)=0,\quad	\psi_s(0) = 0, \quad \psi^\prime_s(0) = \mathrm{cot}(\alpha(s)).
\]
Following~\cite[Section 2.2.1]{PopRay}, we first introduce the change of variables 
\[
	(u, v) = C_s(r,t),
\]
where $C_s$ is a local diffeomorphism near $(0,0)$ defined by 
\begin{equation*}
	\begin{cases} u = r - \psi_s(t) + \cot\alpha(s)(t-\phi_s(r)), \\
	 v=t-\phi_s(r). \end{cases}
\end{equation*}
We then define a local diffeomorphism near $(0,0,0)$  by 
\[
	\breve{\Phi}(r,t,s) = (u,v,s) := (C_s(r,t),s),
\]
The matrix $\breve{G}$ associated to $\breve{\Phi}$ satisfies (see~\cite{PopRay})
\[
	\breve{G}^{-1} = Id + {  \mathcal{O}}(|\mathbf u|),
\]
where $|\mathbf{u}|$ denotes the norm of $(u,v,s)$.
The jacobian associated to $\breve{\Phi}$ satisfies
\[
	|J_{\breve{\Phi}}| = 1 + \mathcal{O}(|\mathbf{u}|).
\]
  Consequently, {  in $(\breve{\partial\Omega})\cup\breve{S}$} there exists a neighborhood of $(0,0,0)$  which is sent by $\breve{\Phi}$ to the following region
\begin{equation}\label{eq: region after breve phi}
	\{(u,v,s)\,\vert\, v= 0 \,\,\,\mbox{or}\,\,\, v= \tan\alpha(s) u\}.
\end{equation}
Finally, we want to replace the variable angle $\alpha(s)$  in~\eqref{eq: region after breve phi} with the constant opening angle $\alpha_0=\alpha(0)$.  To do that, we first perform a rotation $R_{-\alpha(s)/2}$ of angle $-\alpha(s)/2$ to get 
\[
	(\breve{u}, \breve{v}) = R_{-\alpha(s)/2} (u, v).
\] 
Let $\tau(s) = \tan (\alpha(s)/2)$ (notice that $\tau(0) = \tan(\alpha_0/2)$. We do the following rescaling 
\[
	{ \tilde{u} }= \breve{u}, \quad { \tilde{v}} = \tau(s)^{-1}\tau(0)\breve{v}, \quad s=s.
\]
We then perform an inverse rotation $R_{\alpha(s)/2}$ and define
\[
	(y_1,y_2):= R_{\alpha(s)/2}({ \tilde{u}, \tilde{v}}), \quad y_3=s.
\]
Now, we introduce the diffeomorphism $\Phi_2$ near $(0,0,0)$ by setting 
\[
	\Phi_2(r,t,s) = (y_1,y_2,y_3).
\]
By $\Phi_2$, one can map $(\breve{\partial\Omega})\cup\breve{S}$ near $(0,0,0)$ into a subset of
\[
	\{(y_1,y_2,y_3)\, \vert\, y_2 = 0\,\,\,\mbox{or}\,\,\,y_2 = y_1\tan\alpha_0 \}.
\]
\paragraph{\emph{The composition of the two normalization}} We define $\Phi$ in a neighborhood of $\mathcal N_{x_0}$ as the composition of $\Phi_2$ and $\Phi_1$ 
\begin{equation}\label{eq:Phi}
	\Phi=\Phi_2\circ\Phi_1:(x_1,x_2,x_3)\mapsto y=(y_1,y_2,y_3).\end{equation}
 One can see now that the properties of $\Phi$ in~\eqref{c1}--~\eqref{c5} hold true for a suitable $\mathcal N_{x_0}$ (see Figure \ref{fig:coordinates}). Moreover,
let $G$ be the matrix of the metric $g_0$ corresponding to $\Phi$, and $G^{-1}$ be its inverse. By the above discussion, we have
\begin{equation}\label{eq: approx matrix}
 G^{-1}= (g^{\ell m})=Id +{  \mathcal{O}}(|y|).
\end{equation}
The Jacobian $J_{{\Phi}}$ satisfies
\begin{equation}\label{eq: approx determinant}
	|J_{{\Phi}}|=	1 + \mathcal{O}(|y|).
\end{equation}
\paragraph{\emph{The quadratic form in the boundary coordinates}} We consider the diffeomorphism $\Phi$ defined above near $x_0$.
Under $\Phi$, the Lebesgue measure $dx$ transforms into $dx = { |J_{{\Phi}}|}dy$.

We denote any vector potential  $\mathbf{F}\in H^{1}(\Omega, \mathbb{R}^3)$  satisfying $\mathrm{curl}\mathbf{F} = \mathbf{B}$ (as in~\eqref{eq:B1}) by $\widetilde{\mathbf{F}}=(\widetilde{F}_1, \widetilde{F}_2, \widetilde{F}_3)$ in the new coordinates near $x_0$. We have 
\[
F_1 dx_1 + F_2dx_2 + F_3dx_3 = \widetilde{F}_1 dy_1 + \widetilde{F}_2 dy_2 + \widetilde{F}_3 dy_3.
\]
Denoting  the magnetic field $\mathbf{B}$ by $\widetilde{\mathbf{B}} $ in the local coordinates, we have (see~\cite[Section 5]{helffer2004magnetic})
\begin{equation}\label{eq: curl tilde F}
\mathrm{curl}\widetilde{\mathbf{F}} =|J_{\Phi}|\,\widetilde{\mathbf{B}}.
\end{equation}
In addition, for any $u\in H^1(\Om)$ supported in 	$\mathcal N_{x_0}$, the quadratic form $Q_{\mathfrak{b}, \mathbf{F}}(u)$ becomes
\begin{eqnarray}\label{eq: diff quad form}
	Q_{\mathfrak{b}, \mathbf{F}}(u) &=&\int_{\mathcal N_{x_0}\cap\Om} |(\nabla - i \mathfrak{b}\mathbf{F})u|^2 dx
	\\
	&=& \int_{\widetilde{\mathcal N}_{x_0}\cap\R_+^3} \, |J_{\Phi}| \,\left[\sum_{1\leq \ell,m \leq 3}{g^{\ell m}} (\partial_{y_\ell} - i\mathfrak{b}\widetilde{\mathbf{F}}_\ell)\widetilde{u}\overline{(\partial_{y_m} - i \mathfrak{b} \widetilde{\mathbf{F}}_m)\widetilde{u}}\right]\, dy\nonumber
\end{eqnarray}
where $\widetilde{\mathcal N}_{x_0}=\Phi(\widetilde{\mathcal N}_{x_0})$ and $\tilde{u}(\cdot) = u( \Phi^{-1}(\cdot))$.

 Finally, the following lemma presents a gauge transformation of the potential $\widetilde{\mathbf{F}}$ that will be useful in comparing the operator $(\nabla -i\mathfrak{b} {\mathbf{F}})^2$ (in~\eqref{eq: def semicl op}) with the leading operator $\mathcal{L}_{\alpha,\gamma,a}$ (in~\eqref{eq:La+}) near the discontinuity edge $\Gamma$. 
\begin{lemma}\label{lem: change gauge D4} Let $a\in [-1,1)\setminus \{0\}$ and $B(0,\ell)$ be a ball of radius $\ell$ such that $B(0,\ell)\cap\mathbb{R}^3_+\subset  \widetilde{\mathcal N}_{x_0}$. Consider the vector potential $\mathbf{F} \in H^1(\Omega)$ such that $\curl\mathbf{F} =  \mathbf{B}$,  where $\mathbf{B}$ is as in~\eqref{eq:B1}. There exists $\omega\in H^{2}(B(0, {\ell}) \cap \mathbb{R}^3_+)$ such that the vector potential $\widetilde{\mathbf{F}}$ can be written as
	\begin{equation*}
	\widetilde{\mathbf{F}}(y)= \mathbf{A}_{\alpha_{0},\gamma_{0},a} (y)+\nabla \omega(y) + {  \mathcal{O}(|y|^2)},\qquad \forall y\in B(0, \ell),
	\end{equation*}
	where $\mathbf{A}_{\alpha_{0},\gamma_{0},a}$ is the vector potential introduced in \eqref{eq:Ag}. Here $\alpha_0$ is the angle between the discontinuity surface $S$ and the tangent plane to $\partial\Omega$ at $x_0$, and $\gamma_0$ is the angle between the magnetic field $\mathbf{B}$ and the discontinuity edge $\Gamma$ at $x_0$.
\end{lemma}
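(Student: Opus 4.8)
The statement only constrains $\widetilde{\mathbf F}$ up to a gradient, so by gauge invariance I plan to (a) identify $\curl\widetilde{\mathbf F}$ with $\curl\Ab_{\alpha_0,\gamma_0,a}=\mathbf B_{\alpha_0,\gamma_0,a}$ up to an $\mathcal O(|y|)$ error, and (b) realise that error as the curl of an $H^1$ field which is itself $\mathcal O(|y|^2)$. Concretely, I would first compute $\curl\widetilde{\mathbf F}$ via \eqref{eq: curl tilde F}, and then run a Poincar\'e/gauge construction on the half-ball $B(0,\ell)\cap\R^3_+$, which is convex, hence simply connected and star-shaped with respect to the origin.

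For (a): by \eqref{eq: curl tilde F}, $\curl\widetilde{\mathbf F}=|J_\Phi|\,\widetilde{\mathbf B}$, and $|J_\Phi|=1+\mathcal O(|y|)$ by \eqref{eq: approx determinant}. Here $\widetilde{\mathbf B}$ is the representative in the $y$-coordinates of $\mathbf B=\mathbf s\,(0,0,1)$ from \eqref{eq:B1}. Since $\Phi$ maps $\mathcal N_{x_0}\cap\Om_j$ onto $\mathcal U_0\cap\mathcal D^j_{\alpha_0}$ by \eqref{c5}, the strength $\mathbf s$ is carried \emph{exactly} onto the step function $\mathsf s_{\alpha_0,a}=\mathbbm{1}_{\mathcal D^1_{\alpha_0}}+a\,\mathbbm{1}_{\mathcal D^2_{\alpha_0}}$; in particular $\curl\widetilde{\mathbf F}$ and $\mathbf B_{\alpha_0,\gamma_0,a}$ will share the same discontinuity locus $P_{\alpha_0}$. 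Moreover $\Phi=\Phi_2\circ\Phi_1$ is a smooth diffeomorphism whose differential at $x_0$ is the orthogonal matrix sending the orthonormal frame $\big(V(0),n(0),\xi'(0)\big)$ (the unit vector normal to $\Gamma$ inside $\partial\Om$ pointing into $\Om_1$, the inward unit normal of $\partial\Om$, and the unit tangent of $\Gamma$) onto the canonical frame, because $\Phi$ straightens $\Gamma$ to the $y_3$-axis \eqref{c1}, $\partial\Om$ to $\{y_2=0\}$ with $\Om$ on the side $\{y_2>0\}$ (see \eqref{c3}--\eqref{c4}), and $S$ to $P_{\alpha_0}$ \eqref{c2}. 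Using the definition of $\alpha_0$ as the angle between $S$ and the tangent plane of $\partial\Om$ at $x_0$, of $\gamma_0$ as the angle between $\mathbf B$ and $\Gamma$ at $x_0$, together with the tangency of $\mathbf B$ to $S$, I would check that the value of $\widetilde{\mathbf B}$ at the origin on the $\mathcal D^1_{\alpha_0}$-side equals the unit vector $(\cos\alpha_0\sin\gamma_0,\sin\alpha_0\sin\gamma_0,\cos\gamma_0)$ appearing in \eqref{eq:Bs}. Since the pushforward of a constant field is smooth, $\widetilde{\mathbf B}(y)=(\cos\alpha_0\sin\gamma_0,\sin\alpha_0\sin\gamma_0,\cos\gamma_0)\,\mathsf s_{\alpha_0,a}(y)+\mathcal O(|y|)$, and combining with $|J_\Phi|=1+\mathcal O(|y|)$ yields
\[
\curl\widetilde{\mathbf F}(y)=\mathbf B_{\alpha_0,\gamma_0,a}(y)+\mathcal O(|y|)\qquad\text{on }B(0,\ell)\cap\R^3_+ .
\]

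For (b): I set $\mathbf G:=\widetilde{\mathbf F}-\Ab_{\alpha_0,\gamma_0,a}$, which lies in $H^1\big(B(0,\ell)\cap\R^3_+\big)$ (pullback of the $H^1$ field $\mathbf F$ by a smooth diffeomorphism, minus the Lipschitz field $\Ab_{\alpha_0,\gamma_0,a}$ of \eqref{eq:Ag}), and whose curl $\mathbf R_0:=\curl\widetilde{\mathbf F}-\mathbf B_{\alpha_0,\gamma_0,a}$ is a divergence-free, piecewise smooth, $\mathcal O(|y|)$ field whose jump across $P_{\alpha_0}$ is tangent to $P_{\alpha_0}$ (because $\widetilde{\mathbf B}$ restricted to $P_{\alpha_0}$ is tangent to it, $\mathbf B$ being tangent to $S$). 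Applying the Poincar\'e homotopy formula to $\mathbf R_0$ on each wedge $\mathcal D^j_{\alpha_0}$ separately gives potentials vanishing to second order at the origin, and the tangential nature of the jump of $\mathbf R_0$ is precisely the compatibility condition letting me correct these two potentials by gradients of $\mathcal O(|y|^3)$ functions so that they glue into a single Lipschitz field $\mathbf E$ on $B(0,\ell)\cap\R^3_+$ with $\curl\mathbf E=\mathbf R_0$ and $\mathbf E=\mathcal O(|y|^2)$ — this is the kind of construction of gauge functions used in \cite[Lemma 1.1]{leinfelder1983gauge}. Then $\mathbf G-\mathbf E$ is curl-free and $H^1$ on the simply connected $B(0,\ell)\cap\R^3_+$, hence $\mathbf G-\mathbf E=\nabla\omega$ with $\omega\in H^2\big(B(0,\ell)\cap\R^3_+\big)$, i.e.
\[
\widetilde{\mathbf F}(y)=\Ab_{\alpha_0,\gamma_0,a}(y)+\nabla\omega(y)+\mathbf E(y),\qquad \mathbf E(y)=\mathcal O(|y|^2),
\]
as claimed. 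The hard part will be the bookkeeping in (a): unwinding the two-step normalisation $\Phi=\Phi_2\circ\Phi_1$ of \cite{PopRay} to see that $D\Phi(x_0)$ is exactly the orthogonal frame map above, and then tracing the geometric angles $\alpha_0,\gamma_0$ and the tangency of $\mathbf B$ to $S$ through to the coordinate expression of $\widetilde{\mathbf B}(0)$. A secondary subtlety is verifying that the error field $\mathbf E$ is genuinely in $H^1$ across $P_{\alpha_0}$, which is where the tangency of $\mathbf B$ to the discontinuity surface $S$ is really needed.
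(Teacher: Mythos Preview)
Your proposal is essentially the same strategy as the paper's proof: establish $\curl\widetilde{\mathbf F}=\mathbf B_{\alpha_0,\gamma_0,a}+\mathcal O(|y|)$ via \eqref{eq: curl tilde F}, \eqref{eq: approx determinant} and a Taylor expansion of $\widetilde{\mathbf B}$, then produce an $\mathcal O(|y|^2)$ potential for the difference by the Poincar\'e homotopy formula, and conclude by simple connectedness that what remains is a gradient of an $H^2$ function.

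The one place where you work harder than necessary is step (b). You propose applying the Poincar\'e formula separately on each wedge $\mathcal D^j_{\alpha_0}$ and then gluing across $P_{\alpha_0}$ using the tangency of the jump. The paper instead applies the single global formula
\[
\mathbf A_{\mathrm{diff}}(y)=\int_0^1\big(\curl\widetilde{\mathbf F}(\lambda y)-\mathbf B_{\alpha_0,\gamma_0,a}(\lambda y)\big)\wedge(\lambda y)\,d\lambda
\]
on all of $B(0,\ell)\cap\R^3_+$. This works directly because the wedges $\mathcal D^1_{\alpha_0}$ and $\mathcal D^2_{\alpha_0}$ are cones with vertex at the origin: for any $y$ the entire segment $\{\lambda y:\lambda\in[0,1]\}$ lies in the closure of a single wedge, so the integrand never crosses the discontinuity surface $P_{\alpha_0}$. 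Hence the global formula already gives a well-defined field that is $\mathcal O(|y|^2)$, with no gluing argument required. Your extra care about the tangential jump is not wrong, but it is not needed here.
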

\begin{proof} We consider the magnetic field $\mathbf{B}_{\alpha_{0},\gamma_{0},a}=\mathrm{curl}\mathbf{A}_{\alpha_0,\gamma_0,a}$ introduced in \eqref{eq:Bs}. Let now $\widetilde{\mathbf{B}}$ (resp. $\widetilde{\mathbf{F}}$) be the magnetic field $\mathbf{B}$ (resp. the vector potential $\mathbf{F}$) expressed in the local coordinates.
	
	 Notice that the magnetic field $\mathbf B$ is constant in each of $\Om_1$ and $\Om_2$. Indeed, it is equal to $\mathbf b$ in $\Om_1$ and $a\mathbf b$ in $\Om_2$, where $\mathbf b$ is a unit magnetic field making an angle $\alpha_0$ with the boundary and $\gamma_0$ with $\Gamma$ at $x_0$. Also, notice that $\Om_1$ is sent to $\mathcal D_{\alpha_0}^1$ and $\Om_2$ to $\mathcal D_{\alpha_0}^2$ near $x_0$ (see~\eqref{c5}). Consequently, using the properties of the local coordinates transformation and a Taylor expansion of the unit field yields (see~e.g.~\cite [Section~5]{lu2000surface} or~\cite{helffer2004magnetic})
	\begin{equation}\label{eq: Taylor B}
		\widetilde{\mathbf{B}} = \mathbf{B}_{\alpha_{0},\gamma_{0},a} +  {  \mathcal{O}}(|y|).
	\end{equation}
	 In addition, by~\eqref{eq: curl tilde F} and~\eqref{eq: approx determinant}, we have
	\begin{equation}\label{eq: curl tilde F1}
		\mathrm{curl}\widetilde{\mathbf{F}} = \widetilde{\mathbf{B}}\Big(1 + {  \mathcal{O}}(|y|)\Big).
	\end{equation}  
Then, \eqref{eq: Taylor B} and ~\eqref{eq: curl tilde F1} give
\begin{equation}\label{eq: curl of the difference}
	\mathrm{curl}\widetilde{\mathbf{F}} =\mathbf{B}_{\alpha_{0},\gamma_{0},a} + {  \mathcal{O}}(|y|).
\end{equation}
Using the inverse curl formula, we define a vector potential $\mathbf{A}_\mathrm{diff}$ such that 
\[
\mathrm{curl}\mathbf{A}_{\mathrm{diff}} = \mathrm{curl}(\widetilde{\mathbf{F}} - \mathbf{A}_{\alpha_0, \gamma_0,a}) = \mathrm{curl}\widetilde{\mathbf{F}} - \mathbf{B}_{\alpha_{0},\gamma_{0},a}.
\]
We take
\begin{equation}
	\mathbf{A}_{\mathrm{diff}}(y) = \int_0^1 {  \big(\mathrm{curl}\widetilde{\mathbf{F}}(\lambda y) - \mathbf{B}_{\alpha_{0},\gamma_{0},a}(\lambda y)\big)} \wedge (\lambda y) \, d\lambda.
\end{equation}
It follows from \eqref{eq: curl of the difference} that $\mathbf{A}_{\mathrm{diff}}(y) = {  \mathcal{O}}(|y|^2)$, for each $y\in B(0, \ell)\cap\R_+^3$. Moreover, since the two vector potentials $\widetilde{\mathbf{F}} - \mathbf{A}_{\alpha_0, \gamma_0, a}$ and $\mathbf{A}_{\mathrm{diff}}$ are generating the same magnetic field in $B(0, \ell)\cap\R_+^3$, there exists a function $\omega \in H^2(B(0, \ell) \cap \mathbb{R}^3_+)$ such that 
\begin{equation}
	\widetilde{\mathbf{F}} - \mathbf{A}_{\alpha_0, \gamma_0, a} = \mathbf{A}_{\mathrm{diff}} + \nabla \omega.
\end{equation}
Using that $\mathbf{A}_{\mathrm{diff}}(y) = { \mathcal{O}}(|y|^2)$ completes the proof.
\end{proof}	
\subsubsection{Local coordinates near the discontinuity surface $S$ (away from $\Gamma$)}\label{sec: local coordinates S}
Let $S$ be the discontinuity surface introduced earlier. Under our hypothesis, this is a smooth simply connected set with boundary $\Gamma$.  We will consider standard tubular coordinates  near $S$ away from the discontinuity edge $\Gamma$. 

Let $x_0\in S$ (away from $\Gamma$), we consider a small neighborhood $V_{x_0}$ of $x_0$ in $S$, in which a local coordinate system is well defined. We denote such coordinates by $(r,s)$ . Let $\phi$ be the local diffeomorphism corresponding to these coordinates
 \[ \phi: V_{x_0} \rightarrow U\subset\mathbb{R}^2,\ \mbox{such that}\ \phi(x) = (r,s).\]
 Denoting by $\mathbf{n}$ the unit normal to $S$ at the point $\phi^{-1}{ (r,s)}$, we  define the coordinate transformation $\Phi$ in a small neighborhood of $x_0$ such that
\begin{equation}
	(x_1, x_2, x_3) = \Phi^{-1}(r,s,t) = \phi^{-1}(r,s) + t\mathbf{n},
\end{equation}
where $t= \dist(x, S)$ if $x\in\Omega_1$ and $t= -\dist(x, S)$ if $x\in\Omega_2$. 

We denote by $y=(y_1,y_2,y_3)  =  (r,t,s)$. The coordinates $(y_1,y_2,y_3)$  have similar properties to those of the tubular coordinates defined in the previous section (see also e.g.~\cite{fournais2010spectral}).   Similar estimates to those in \eqref{eq: approx matrix}--\eqref{eq: diff quad form} are valid.

\subsection{Lower bound}\label{sec:low}

\begin{pro}[Lower bound of local energies]\label{pro: lower bound quadratic form} { Under Assumption~\ref{asu}}, there exist $C>0$ and $\mathfrak{b}_0 >0 $ such that for $\mathfrak{b}\geq \mathfrak{b}_0$, for all $R_0 >1$ and $u\in { \mathcal{D}(Q_{\mathfrak{b}, \mathbf{F}})}$, it holds
\begin{equation}\label{eq: est quadratic form lb}
	Q_{\mathfrak{b}, \mathbf{F}}(u) \geq \int_\Omega (U_{\mathfrak{b}}(x) - CR_0^{-2}\mathfrak{b}) |u(x)|^2\,dx,
\end{equation}
where
\begin{equation}
	U_{\mathfrak{b}}(x) = \begin{cases}   
		|a| \mathfrak{b} &\mbox{if}\,\,\,  \mathrm{dist} (x, \partial\Omega \cup S) {\geq} R_0\mathfrak{b}^{-\frac{1}{2}},  \\
		|a|\Theta_0 \mathfrak{b} - CR_0^4 \mathfrak{b}^{\frac{1}{2}} &\mbox{if}\,\,\,  \mathrm{dist}(x,\partial\Omega){ <} R_0\mathfrak{b}^{-\frac{1}{2}} {\leq} \mathrm{dist}(x, S), \\
		\beta_a \mathfrak{b} - CR_0^4 \mathfrak{b}^{\frac{1}{2}}&\mbox{if}\,\,\, \mathrm{dist}(x,S){ <} R_0\mathfrak{b}^{-\frac{1}{2}} { \leq } \mathrm{dist}(x, { \partial\Omega}),  \\
		\lambda_{\alpha_j,\gamma_j,a} \mathfrak{b} - CR_0^4 \mathfrak{b}^{\frac{1}{2}}&\mbox{if}\,\,\,  \mathrm{dist}(x, x_j){ <} R_0\mathfrak{b}^{-\frac{1}{2}},\,\,\, x_j\in\Gamma.
\end{cases}
\end{equation}
where $\Theta_0$, $\beta_a$, and $\lambda_{\alpha_j,\gamma_j,a}$ are respectively the values in ~\eqref{eq:teta0}, \eqref{eq:beta} and \eqref{eq: def lambda}, with $\alpha_j$ being the angle between the tangent plane of $\partial\Omega$  and the discontinuity surface $S$ at $x_j$ (taken towards $\Omega_1$) and $\gamma_j$ being the angle between the magnetic field $\mathbf{B}$ and the discontinuity edge $\Gamma$ at $x_j$.
\end{pro}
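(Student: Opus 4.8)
The plan is to establish the lower bound \eqref{eq: est quadratic form lb} by a standard partition-of-unity (IMS) localization argument, splitting $\overline\Omega$ into small cells of size $R_0\mathfrak b^{-1/2}$ and comparing the local energy in each cell with the relevant effective model operator. First I would fix $\mathfrak b$ large and introduce a partition of unity $(\chi_k)_k$ of $\overline\Omega$ subordinate to a cover by balls $B_k$ of radius $\sim R_0\mathfrak b^{-1/2}$, with $\sum_k|\chi_k|^2=1$ and $\sum_k|\nabla\chi_k|^2\le C R_0^{-2}\mathfrak b$; the IMS formula then gives
\[
Q_{\mathfrak b,\mathbf F}(u)=\sum_k Q_{\mathfrak b,\mathbf F}(\chi_k u)-\sum_k\big\||\nabla\chi_k|\,u\big\|^2_{L^2(\Omega)}\ge \sum_k Q_{\mathfrak b,\mathbf F}(\chi_k u)-CR_0^{-2}\mathfrak b\,\|u\|^2_{L^2(\Omega)}.
\]
It then suffices to bound $Q_{\mathfrak b,\mathbf F}(\chi_k u)$ from below, for each $k$, by the value $U_{\mathfrak b}(x)$ appropriate to the location of $B_k$, up to the error $CR_0^4\mathfrak b^{1/2}\|\chi_k u\|^2$.

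Next I would treat the four types of cells separately. For cells away from $\partial\Omega\cup S$, the magnetic field is constant (equal to $\mathbf b$ or $a\mathbf b$) and $Q_{\mathfrak b,\mathbf F}(\chi_k u)\ge |a|\mathfrak b\|\chi_k u\|^2$ after a gauge change, using $\inf\mathrm{sp}(-(\nabla-i\mathbf A)^2)=|\mathbf B|$ on $\R^3$. For cells meeting $\partial\Omega$ but away from $S$, one uses the tubular boundary coordinates from the literature (\cite[Chapter 9]{fournais2010spectral}), a Taylor expansion of the field, and the lower bound $|a|\Theta_0\mathfrak b$ from the half-space model $\mathrm H_\nu$ together with $\zeta_\nu\ge\Theta_0$; the $\mathcal O(|y|)$ errors in the metric and the field, once rescaled by $\mathfrak b^{1/2}$ on a cell of radius $R_0\mathfrak b^{-1/2}$, produce exactly the $CR_0^4\mathfrak b^{1/2}$ loss. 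For cells meeting $S$ but away from $\Gamma$, I would use the tubular coordinates of Section~\ref{sec: local coordinates S}, reduce to the model $\mathsf L_a$ on $\R^3$ with a magnetic step (Remark~\ref{rem: beta a R3}), and get the lower bound $\beta_a\mathfrak b$ with the same type of error. For cells meeting $\Gamma$, centered at some $x_j\in\Gamma$, I would invoke the magnetic normal coordinates $\Phi=\Phi_2\circ\Phi_1$ of Section~\ref{sec: coordinates discontinuity line}, the gauge lemma (Lemma~\ref{lem: change gauge D4}) writing $\widetilde{\mathbf F}=\mathbf A_{\alpha_j,\gamma_j,a}+\nabla\omega+\mathcal O(|y|^2)$, and the quadratic-form expression \eqref{eq: diff quad form}; after a gauge transformation removing $\nabla\omega$, the leading term is the quadratic form of $\mathcal L_{\alpha_j,\gamma_j,a}$, whose bottom of spectrum is $\lambda_{\alpha_j,\gamma_j,a}$, giving the lower bound $\lambda_{\alpha_j,\gamma_j,a}\mathfrak b$, again up to $CR_0^4\mathfrak b^{1/2}$ from the metric perturbation $(g^{\ell m})=Id+\mathcal O(|y|)$, the Jacobian $|J_\Phi|=1+\mathcal O(|y|)$, and the $\mathcal O(|y|^2)$ gauge remainder. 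The quadratic (rather than linear) error here is cruder but still within the $R_0^4\mathfrak b^{1/2}$ budget. Finally, I would account for cells meeting more than one of these regions (e.g.\ $\partial\Omega$ and $S$ simultaneously, which near $\Gamma$ is the $\Gamma$-case) by assigning each such cell the smallest of the competing model energies, which is consistent with the definition of $U_{\mathfrak b}$; summing over $k$ and using $\sum_k\|\chi_k u\|^2=\|u\|^2$ yields \eqref{eq: est quadratic form lb}.

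The main obstacle I anticipate is the cell near the discontinuity edge $\Gamma$: here one must simultaneously control the non-flatness of both $\partial\Omega$ and $S$, the variation of the opening angle $\alpha_x$ along $\Gamma$, and the curvature of $\Gamma$ itself, all of which are absorbed into the composite diffeomorphism $\Phi$ and its metric expansion $(g^{\ell m})=Id+\mathcal O(|y|)$. One has to be careful that after rescaling $y=\mathfrak b^{-1/2}z$ the perturbed quadratic form on the cell $\{|z|\le R_0\}$ is controlled by $(1-CR_0\mathfrak b^{-1/2})$ times the model form of $\mathcal L_{\alpha_j,\gamma_j,a}$ on $\R^3_+$ (using the Neumann condition, which is preserved since $\Phi$ maps $\partial\Omega$ to $\{y_2=0\}$), plus lower-order terms bounded using a Cauchy--Schwarz / Cauchy inequality argument; the loss from combining the metric perturbation with the magnetic term is what gives the $R_0^4$ power. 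A secondary technical point is that $\lambda_{\alpha_j,\gamma_j,a}$ is only known to be below the essential spectrum of the reduced 2D operator (not below that of $\mathcal L_{\alpha_j,\gamma_j,a}$, whose spectrum is purely absolutely continuous), but for a lower bound this is irrelevant — only the infimum of the spectrum enters. All the routine estimates (IMS errors, Taylor expansions, rescaling) I would treat as in \cite[Chapter 9]{fournais2010spectral} and \cite{popoff2013schrodinger}.
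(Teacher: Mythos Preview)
Your proposal is correct and follows essentially the same approach as the paper: an IMS localization with balls of radius $R_0\mathfrak b^{-1/2}$, the same four-region case split, the same model operators (free space, $\mathrm H_\nu$, $\mathsf L_a$, $\mathcal L_{\alpha_j,\gamma_j,a}$), the same local coordinates (including the composite $\Phi=\Phi_2\circ\Phi_1$ near $\Gamma$) and the gauge lemma, with errors handled via Cauchy's inequality. The only cosmetic difference is that the paper arranges the partition so that each ball is centered on exactly one of the strata (interior, $\partial\Omega\setminus\overline S$, $S\setminus\Gamma$, $\Gamma$), avoiding your final ``assign the minimum'' step, and it introduces an auxiliary parameter $\delta$ in the Cauchy inequality which is then optimized to $\delta=\rho=1/2$; otherwise the argument is the same.
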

\begin{proof} We will work locally in $\Omega$, proceeding similarly as in \cite[Theorem 9.1.1]{fournais2010spectral}. Let $0 < \rho < 1$, we can define a partition of unity, $(\chi_{j})_{j\in \mathcal{I}}$ of $\R^3$, and assume that it satisfies the following. 
	
	For  $(\chi_j)$ with a support intersecting $\overline\Om$, we suppose that they are supported in  balls of centers $x_j\in\overline\Om$ and radius $R_0\mathfrak{b}^{-\rho}$, and
\begin{equation}\label{eq: prop partition unity}
		\chi_j \in C^\infty_0 (\mathbb{R}^3; \mathbb{R}), \qquad 
		\sum_j \chi_j^2 = 1, \qquad \sum_j |\nabla \chi_j|^2 \leq CR_0^{-2}\mathfrak{b}^{2\rho}.
	\end{equation}
	Moreover, we assume that each ball is either entirely contained in $\Omega$ or it is centered at  the boundary $\partial\Omega$. In the first case,   we further assume that the ball is either disjoint with $S$  or centered at $S$. In the second case, we assume that the ball is either disjoint with $\overline S=S\cup\Gamma$ or centered at $\Gamma$. We then define the following sets of centers $x_j$, which we assume to constitute a partition of the set of the entire centers: 
\begin{enumerate}
		\item $D_{1} := \big\{x_j\in \Omega \vert\, \supp(\chi_j) \cap (\partial\Omega \cup S) = \emptyset\big\}$, 
		\item $D_{2} := \big\{x_j\in\partial\Omega\,\vert\, \supp(\chi_j)\cap (S\cup \Gamma) = \emptyset\big\}$,
		\item $D_{3}:= \big\{x_j \in S\, \vert\,\supp(\chi_j)\cap \partial\Omega = \emptyset\big\}$,
		\item $D_{4} := \big\{x_j \in \Gamma\big\}$.
	\end{enumerate}
We also denote by $\mathcal{I}_\ell$, for $\ell=1,2,3,4$, the indices subsets of $\mathcal{I}$ such that $\mathcal{I}_\ell := \{j\in \mathcal{I}:x_j\in D_\ell\}$.
We now use the IMS formula to write
\begin{equation}\label{eq: IMS}
	Q_{\mathfrak{b}, \mathbf{F}}(u)=\sum_{\ell=1}^4\sum_{j\in\mathcal{I}_\ell} Q_{\mathfrak{b}, \mathbf{F}}(\chi_ju) - \sum_{j}\| |\nabla\chi_j| u\|_{L^2(\Omega)}^2 .
\end{equation}
First,  using the properties in \eqref{eq: prop partition unity}, we can estimate the (error) term in the RHS of the equality above as follows
\begin{equation}\label{eq: est IMS gradient}
	\sum_j\| |\nabla\chi_j| u \|_{L^2(\Omega)}^2 \leq CR_0^{-2}\mathfrak{b}^{2\rho} \|u\|_{L^2(\Omega)}^2.
\end{equation} 
Next, we estimate the (main) term in the foregoing RHS, i.e. $\sum_{\ell=1}^4\sum_{j\in\mathcal{I}_\ell} Q_{\mathfrak{b}, \mathbf{F}}(\chi_ju) $.\\

\noindent\textit{Estimates in  $D_1$.}  For $j\in\mathcal{I}_1$, we work as in Theorem \cite[Theorem 9.1.1]{fournais2010spectral}, and get
\begin{equation}
	 Q_{\mathfrak{b}, \mathbf{F}}(\chi_j u) \geq \min(1,|a|) \mathfrak{b}\sum_{i\in\mathcal{I}_1} \int_\Omega dx\, |\chi_j u|^2 \geq |a| \mathfrak{b}\sum_{i\in\mathcal{I}_1} \int_\Omega dx\, |\chi_j u|^2, 
\end{equation}
having $|a|\leq 1$.\\

\noindent\textit{Estimates in  $D_2$.} Let $j\in\mathcal{I}_2$. Mainly (\cite[Theorem 9.1.1]{fournais2010spectral}), local tubular coordinates can be defined near $\partial\Om$ away from the discontinuity zone $\overline S$, which permits to   compare the quadratic form with the one associated to the model operator introduced in Section \ref{sec:nu+}. Using the monotonicity properties of the bottom of the spectrum of the aforementioned model operator (see  Section \ref{sec:nu+}), one can write 
\begin{eqnarray}
	 Q_{\mathfrak{b}, \mathbf{F}}(\chi_j u) &\geq & |a| \Theta_0 {  \mathfrak b}\int_\Omega dx\,  |\chi_j u|^2 - Cc_1(R_0,\mathfrak{b})\|\chi_j u\|^2,
\end{eqnarray}
where 
\begin{equation}\label{eq: def c1}
	c_1(R_0, \mathfrak{b}) = R_0\mathfrak{b}^{1-\rho} + R_0^4\mathfrak{b}^{2-4\rho} + R_0^2\mathfrak{b}^{\frac{3}{2} -2\rho}.
\end{equation}
\noindent\textit{Estimates in  $D_3$.} Now, we consider the balls centered at the discontinuity surface $S$ and do not meet the boundary $\partial\Om$. In this situation, the local coordinates in Section \ref{sec: local coordinates S} are involved. For any $j\in \mathcal I_3$, we take $\mathfrak{b}$ sufficiently large so that the ball $B(x_j, R_0\mathfrak{b}^{-\rho})$ is included in the domain of the local diffeomorphism $\Phi=\Phi_{x_j}$, introduced in Section~\ref{sec: local coordinates S} and where $\Phi^{-1}(x_j) = (0,0,0)$.  Using the properties of $\Phi$, one gets that $\Phi(B(x_j, R_0\mathfrak{b}^{-\rho}))\subset B(0, cR_0\mathfrak{b}^{-\rho})$ for some positive constant $c>0$. 
Moreover,  for any $ v\in \mathcal D(Q_{\mathfrak{b}, \mathbf{F}})$ s.t. $ \supp v\subset B(x_j, R_0\mathfrak{b}^{-\rho})$, we have
\begin{multline}\label{eq3}
	(1- CR_0\mathfrak{b}^{-\rho})\int_{ \Phi\big(B(x_j, R_0\mathfrak{b}^{-\rho})\big)} dy\,   |(\nabla -i\mathfrak{b} \widetilde{\mathbf{F}}){ \tilde v}|^2 \leq Q_{\mathfrak{b}, \mathbf{F}}({ v}) 
	\\
	\leq (1+ CR_0\mathfrak{b}^{-\rho})\int_{ \Phi\big(B(x_j, R_0\mathfrak{b}^{-\rho})\big)} dy\,  |(\nabla -i\mathfrak{b} \widetilde{\mathbf{F}})\tilde v|^2,
\end{multline}
where $\widetilde{\mathbf{F}}$ and $ \tilde v$ are respectively the transforms of the vector potential $\mathbf{F}$ and the function $v$ by $\Phi$.

Similarly as in \cite{FKP}, we can perform a change of gauge that replaces the vector potential $\widetilde{\mathbf{F}}$ by a new one  $\hat{\mathbf{F}}=(\hat{F}_1,\hat{F}_2,\hat{F}_3)$ ( $\widetilde{\mathbf{F}} = \hat{\mathbf{F}} + \nabla \varphi$ for a certain $H^2$-function $\varphi$) such that
	\begin{equation*}\label{eq: gauge D3}
	\hat{F}_1 = \begin{cases}{ -} y_2   + \mathcal{O}(|y|^2), &\mbox{if}\,\,\,y_2 >0 ,\\ 
		{ -}a y_2+ \mathcal{O}(|y|^2), &\mbox{if}\,\,\,y_2 <0. \end{cases}, \quad \hat{F}_2 = \mathcal{O}(|y|^2),\ \hat{F}_3 = \mathcal{O}(|y|^2).
\end{equation*}
Above, we used that the magnetic field is  tangent to the surface $S$.  
We consider the function  $\hat{v} = e^{i\mathfrak{b}\varphi}\widetilde{v}$. By gauge invariance properties of the energy, we have 
\begin{equation}\label{eq2}
	\int_{\Phi{\big( B(x_j, R_0\mathfrak{b}^{-\rho})\big)}} dy\,   |(\nabla -i\mathfrak{b} \widetilde{\mathbf{F}})\tilde v|^2 =\int_{\Phi{\big( B(x_j, R_0\mathfrak{b}^{-\rho})\big)}} dy\,   |(\nabla -i\mathfrak{b} \hat{\mathbf{F}})\hat{v}|^2 .\end{equation}
Now we use Cauchy's inequality in the RHS of the equation above to further replace the vector potential $\hat{\mathbf{F}}$ by  the vector potential $ \mathsf{A}_a$ defined (in Section~\ref{sec:eff2}) by $\mathsf{A}_a(y)= ( -\delta_a(y)y_2,0,0)$ with $\delta_a = \mathbbm{1}_{y_2 >0} + a \mathbbm{1}_{y_2 <0}$. We then get 
\begin{multline}
\int_{\Phi{\big( B(x_j, R_0\mathfrak{b}^{-\rho})\big)}} dy\,  |(\nabla -i\mathfrak{b} \hat{\mathbf{F}})\hat{v}|^2 
\\
\geq  (1- \mathfrak{b}^{-\delta})\int_{\mathbb{R}^3}dy\,|(\nabla - i \mathfrak{b}\mathsf{A}_a)\hat{v}|^2 - C\mathfrak{b}^2(R_0^2 \mathfrak{b}^{-2\rho})^2 \mathfrak{b}^{\delta}\int_{\mathbb{R}^3}dy\, |\hat{v}|^2,
\end{multline}
for $\delta\in(0,1)$.
Using Remark \ref{rem: beta a R3}, we have
\begin{equation}\label{eq1}
	\int_{\mathbb{R}^3}dy\,|(\nabla - i \mathsf{A}_a)\hat{v}|^2 \geq \beta_a \int_{\mathbb{R}^3} dy\,|\hat{v}|^2, 
\end{equation}
where $\beta_a$ is the bottom of the spectrum of the operator $(\nabla - i \mathsf{A}_a)^2$ in Section~\ref{sec:eff2}. 
Then, using~\eqref{eq3},~\eqref{eq2}, and a scaling argument in~\eqref{eq1}, we get for $v=\chi_ju$
\begin{equation}\label{eq:4}
	Q_{\mathfrak{b}, \mathbf{F}}(\chi_j u) \geq \big(\beta_a \mathfrak{b} - Cc_2(R_0, \mathfrak{b})\big)\int_{\Omega}dx\,  |\chi_j u|^2,
\end{equation}
where $c_2(R_0, \mathfrak{b})$ is a constant depending on $R_0$ and $\mathfrak{b}$ which explicitly reads as 
\begin{equation}\label{eq: def c2}
	c_2(R_0, \mathfrak{b}) = R_0\mathfrak{b}^{1-\rho} + \mathfrak{b}^{1-\delta} + R_0^4\mathfrak{b}^{2-4\rho+\delta},
\end{equation}
and where, recalling  the Jacobian, $J_{\Phi^{-1}}$, of the change of coordinates function $\Phi^{-1}$,  we used that 
\[
	\int_{\mathbb{R}^3}dy\, |\hat{v}|^2 = \int_{\mathbb{R}^3}dy\, |\tilde v|^2 = \int_{\Om} dx\, |\chi_j u|^2 (J_{\Phi^{-1}}),
\]
together with the estimate \eqref{eq: approx determinant}.\\

\noindent \textit{Estimates in  $D_4$.}  Finally, we consider  the balls centered at the discontinuity edge $\Gamma$. For $j\in\mathcal I_4$, we have $\supp \chi_j \subset B(x_j, R_0\mathfrak{b}^{-\rho})$ where $x_j\in\Gamma$.  Here, the proof outline is quite similar to the one above (for the  balls centered in $D_3$), but with using  the local coordinates and the notation introduced in Section~\ref{sec: coordinates discontinuity line}. Thus, we will omit some computation details. Let $\Phi=\Phi_{x_j}$ be the diffeomorphism introduced in the foregoing section with $\Phi(x_j) = (0,0,0)$.  We suppose that there exists a positive constant $c>0$ such that ${\Phi}\big({B(x_j, R_0\mathfrak{b}^{-\rho})}{\cap\Om}\big)\subset B(0, cR_0\mathfrak{b}^{-\rho})$. We consider $ v\in \mathcal D(Q_{\mathfrak{b}, \mathbf{F}})$ s.t. $ \supp v\subset B(x_j, R_0\mathfrak{b}^{-\rho})$. As above, using the estimates~\eqref{eq: approx matrix} and~\eqref{eq: approx determinant}, we can write 
	\begin{multline}
	(1- CR_0\mathfrak{b}^{-\rho})\int_{ \Phi\big( B(x_j, R_0\mathfrak{b}^{-\rho})\cap\Om\big)}  dy\, |(\nabla -i\mathfrak{b} \widetilde{\mathbf{F}}){ \tilde{v}}|^2 \leq Q_{\mathfrak{b}, \mathbf{F}}(v)\\ \leq (1+ CR_0\mathfrak{b}^{-\rho})\int_{\Phi\big( B(x_j, R_0\mathfrak{b}^{-\rho})\cap\Om\big)} dy\, |(\nabla -i\mathfrak{b} \widetilde{\mathbf{F}})\tilde{v}|^2.
\end{multline}
{ Using now the gauge transform result in Lemma \ref{lem: change gauge D4} with $B(0,\ell)=B(0, cR_0\mathfrak{b}^{-\rho})$ and $x_0=x_j$ in this lemma, then applying Cauchy's inequality, we get for  $\delta \in(0,1)$}
\begin{multline}\label{eq: Cauchy 2 D4}
	 \int_{\Phi\big( B(x_j, R_0\mathfrak{b}^{-\rho})\cap\Om\big)} dy\, |(\nabla - i\mathfrak{b} \widetilde{\mathbf{F}})\tilde{v}|^2 
	 \\
	 \geq  (1- \mathfrak{b}^{-\delta})\int_{\Phi\big( B(x_j, R_0\mathfrak{b}^{-\rho})\cap\Om\big)} dy\, |(\nabla - i\mathfrak{b} \mathbf{A}_{\alpha_j,\gamma_j,a})\hat{v}|^2 - CR_0^4\mathfrak{b}^{2-4\rho + \delta}\int_{\mathbb{R}^3}dy\, |\hat{v}|^2,
\end{multline}
where  $\hat{v} = e^{i\mathfrak b\omega}\tilde{v}$ and $\omega$ is the gauge function in Lemma \ref{lem: change gauge D4}.   Using~\eqref{eq:Lalfa} and~\eqref{eq: def lambda} and proceeding similarly as in establishing~\eqref{eq:4} above, we get
\begin{equation}
	Q_{\mathfrak{b}, \mathbf{F}}(\chi_j u) \geq \big(\mathfrak{b}\lambda_{\alpha_j,\gamma_j, a} - C c_3(R_0, \mathfrak{b}))\int_{\mathbb{R}^3} dx\, |\chi_j u|^2,
\end{equation}
where 
\begin{equation}\label{eq: def c3}
	c_3(R_0, \mathfrak{b}) = R_0\mathfrak{b}^{1-\rho} + \mathfrak{b}^{1-{\delta}} + R_0^4\mathfrak{b}^{2-4\rho + {\delta}}.
\end{equation}

Finally, we choose $\rho = \delta = 1/2$ and $R_0 > 1$ large. This yields
\begin{equation}
	|c_j(R_0, \mathfrak{b})|\leq CR_0^4\mathfrak{b}^{\frac{1}{2}}, \quad\mbox{for}\  j=1,2,3.
\end{equation}
Gathering the results in the proof above establishes the proposition statement.
\end{proof}
 Actually, the proof of Proposition~\ref{pro: lower bound quadratic form} yields the following particular result on the lower bound of the ground state energy:
\begin{pro}[Lower bound of $\lambda(\mathfrak{b})$]\label{pro: lower bound bottom} Under Assumption \ref{asu}, there exist $C>0$ and $\mathfrak{b}_0 >0$ such that for all $\mathfrak{b}\geq \mathfrak{b}_0$, it holds
\begin{equation}
	\lambda(\mathfrak{b}) \geq \mathfrak{b} \inf_{x\in \Gamma} \lambda_{\alpha_x,\gamma_x,a} - C\mathfrak{b}^{\frac{3}{4}}.
\end{equation}
\end{pro}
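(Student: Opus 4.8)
The plan is to deduce Proposition~\ref{pro: lower bound bottom} directly from Proposition~\ref{pro: lower bound quadratic form} by an optimal choice of the partition parameter $R_0$. Concretely, take any normalized ground state $u$ of $\mathcal{P}_{\mathfrak{b},\mathbf{F}}$, so that $Q_{\mathfrak{b},\mathbf{F}}(u)=\lambda(\mathfrak{b})\|u\|_{L^2(\Omega)}^2$, and apply the lower bound \eqref{eq: est quadratic form lb}. The first step is to bound the potential $U_{\mathfrak{b}}$ from below by a single $x$-independent constant: in each of the four cases defining $U_{\mathfrak{b}}$ one has $U_{\mathfrak{b}}(x)\geq \mathfrak{b}\min\big(|a|,|a|\Theta_0,\beta_a,\inf_{x_j\in\Gamma}\lambda_{\alpha_j,\gamma_j,a}\big) - CR_0^4\mathfrak{b}^{1/2}$, and since $\Theta_0<1$, $\beta_a\geq |a|\Theta_0$ (see Section~\ref{sec:eff2}), and $\lambda_{\alpha_x,\gamma_x,a}\leq |a|\zeta_{\nu_x}\leq |a|$ with $\inf_{x\in\Gamma}\lambda_{\alpha_x,\gamma_x,a}<|a|\Theta_0$ under Assumption~\ref{asu}, the minimum of these four quantities is exactly $\inf_{x\in\Gamma}\lambda_{\alpha_x,\gamma_x,a}$. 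Thus \eqref{eq: est quadratic form lb} gives
\[
\lambda(\mathfrak{b})\|u\|^2 = Q_{\mathfrak{b},\mathbf{F}}(u) \geq \Big(\mathfrak{b}\inf_{x\in\Gamma}\lambda_{\alpha_x,\gamma_x,a} - CR_0^4\mathfrak{b}^{1/2} - CR_0^{-2}\mathfrak{b}\Big)\|u\|^2,
\]
after absorbing the $CR_0^{-2}\mathfrak{b}$ error from the partition of unity (which, with $\rho=1/2$, is the $CR_0^{-2}\mathfrak{b}^{2\rho}=CR_0^{-2}\mathfrak{b}$ term appearing in \eqref{eq: est IMS gradient}).

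The second step is to optimize over $R_0>1$. The two error terms are $CR_0^4\mathfrak{b}^{1/2}$ and $CR_0^{-2}\mathfrak{b}$; balancing them requires $R_0^4\mathfrak{b}^{1/2}\sim R_0^{-2}\mathfrak{b}$, i.e. $R_0^6\sim \mathfrak{b}^{1/2}$, hence $R_0\sim \mathfrak{b}^{1/12}$. With this choice both error terms are of order $\mathfrak{b}^{1/2}\cdot\mathfrak{b}^{1/3}=\mathfrak{b}^{5/6}$, which already beats $\mathfrak{b}^{3/4}$... but one must be careful: the constraint from Proposition~\ref{pro: lower bound quadratic form} is only that $R_0>1$ is \emph{fixed} (the constants $C,\mathfrak{b}_0$ may depend on it in principle, though a reading of the proof shows they do not), and more importantly the balls of radius $R_0\mathfrak{b}^{-\rho}=R_0\mathfrak{b}^{-1/2}$ must shrink, so we need $R_0\mathfrak{b}^{-1/2}\to 0$, i.e. $R_0=o(\mathfrak{b}^{1/2})$, which $\mathfrak{b}^{1/12}$ certainly satisfies. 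Since the statement only claims the bound $-C\mathfrak{b}^{3/4}$, it suffices to take $R_0=\mathfrak{b}^{1/16}$, say: then $CR_0^4\mathfrak{b}^{1/2}=C\mathfrak{b}^{1/4+1/2}=C\mathfrak{b}^{3/4}$ and $CR_0^{-2}\mathfrak{b}=C\mathfrak{b}^{1-1/8}=C\mathfrak{b}^{7/8}$, so this choice is \emph{not} good enough; instead $R_0=\mathfrak{b}^{1/12}$ gives both errors $\leq C\mathfrak{b}^{5/6}\leq C\mathfrak{b}^{3/4}$ for $\mathfrak{b}$ large — wait, $5/6>3/4$. So one actually must balance more carefully: choosing $R_0=c\mathfrak{b}^{\kappa}$ yields errors $\mathfrak{b}^{1/2+4\kappa}$ and $\mathfrak{b}^{1-2\kappa}$; setting $1/2+4\kappa=1-2\kappa$ gives $\kappa=1/12$ and common exponent $1/2+1/3=5/6$. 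Since $5/6>3/4$ this does not give the stated $\mathfrak b^{3/4}$, so the partition error term in \eqref{eq: prop partition unity} must in fact be the sharper $CR_0^{-2}\mathfrak{b}^{2\rho}$ with $2\rho$ replaced by something — re-reading, with $\rho=1/2$ it is $CR_0^{-2}\mathfrak b$; to get $\mathfrak b^{3/4}$ one needs the dominant error among $c_j(R_0,\mathfrak b)$ and $R_0^{-2}\mathfrak b$ to be $\mathfrak b^{3/4}$, which forces one to keep $\rho$ as a free parameter and re-optimize $(\rho,\delta,R_0)$ jointly in the proof of Proposition~\ref{pro: lower bound quadratic form}; taking $\rho$ slightly larger than $1/2$ and $R_0$ a suitable power of $\mathfrak b$ yields the exponent $3/4$.

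Therefore the honest plan is: do \emph{not} fix $\rho=\delta=1/2$ prematurely, but instead record from \eqref{eq: def c1}, \eqref{eq: def c2}, \eqref{eq: def c3} that the local errors are sums of terms $R_0\mathfrak b^{1-\rho}$, $\mathfrak b^{1-\delta}$, $R_0^4\mathfrak b^{2-4\rho+\delta}$, $R_0^2\mathfrak b^{3/2-2\rho}$, together with the partition error $R_0^{-2}\mathfrak b^{2\rho}$; then choose $\rho$, $\delta$, and $R_0=R_0(\mathfrak b)$ (a power of $\mathfrak b$, with $R_0\mathfrak b^{-\rho}\to 0$) so that the maximum of all these exponents is $3/4$. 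A clean choice is $\rho=3/4$, $\delta=1/4$, $R_0$ bounded (or a small power of $\mathfrak b$): then $\mathfrak b^{1-\delta}=\mathfrak b^{3/4}$, $\mathfrak b^{1-\rho}=\mathfrak b^{1/4}$, $\mathfrak b^{2-4\rho+\delta}=\mathfrak b^{-3/4}$, $\mathfrak b^{3/2-2\rho}=\mathfrak b^0$, $\mathfrak b^{2\rho}=\mathfrak b^{3/2}$ — the last is too big because it enters with $R_0^{-2}$ only, not a negative power of $\mathfrak b$; so the IMS error $R_0^{-2}\mathfrak b^{2\rho}$ with $\rho=3/4$ is $R_0^{-2}\mathfrak b^{3/2}$, which needs $R_0\gtrsim\mathfrak b^{3/8}$ to be $\leq\mathfrak b^{3/4}$, and then $R_0^4\mathfrak b^{2-4\rho+\delta}=\mathfrak b^{3/2}\cdot\mathfrak b^{-3/4}=\mathfrak b^{3/4}$, consistent. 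So: take $\rho=3/4$, $\delta=1/4$, $R_0=\mathfrak b^{3/8}$, verify $R_0\mathfrak b^{-\rho}=\mathfrak b^{-3/8}\to 0$, and conclude every error term is $O(\mathfrak b^{3/4})$. Substituting back into \eqref{eq: est quadratic form lb} with a normalized ground state and dividing by $\|u\|^2=1$ gives $\lambda(\mathfrak b)\geq \mathfrak b\inf_{x\in\Gamma}\lambda_{\alpha_x,\gamma_x,a}-C\mathfrak b^{3/4}$, which is the claim. The main obstacle is purely bookkeeping: tracking the five competing error exponents in $\rho$, $\delta$, and $R_0$ and exhibiting a single admissible choice that makes the worst one equal to $3/4$ while respecting the constraint that the localization balls shrink; there is no new analytic input beyond Proposition~\ref{pro: lower bound quadratic form}.
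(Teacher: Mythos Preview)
Your approach is correct and matches the paper's: both go back to the proof of Proposition~\ref{pro: lower bound quadratic form} with the parameters $\rho,\delta,R_0$ left free, compare the four local energies using Assumption~\ref{asu} to identify $\inf_{x\in\Gamma}\lambda_{\alpha_x,\gamma_x,a}$ as the minimum, and then choose the parameters so that every error term is $O(\mathfrak b^{3/4})$. The paper simply takes the cleaner choice $\rho=3/8$, $\delta=1/4$, $R_0=1$ (so that $R_0^{-2}\mathfrak b^{2\rho}=\mathfrak b^{3/4}$, $\mathfrak b^{1-\delta}=\mathfrak b^{3/4}$, $R_0^4\mathfrak b^{2-4\rho+\delta}=\mathfrak b^{3/4}$, and the remaining terms are smaller), which avoids letting $R_0$ depend on $\mathfrak b$; your choice $\rho=3/4$, $\delta=1/4$, $R_0=\mathfrak b^{3/8}$ also works once you verify the ball radius $R_0\mathfrak b^{-\rho}=\mathfrak b^{-3/8}\to 0$.
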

\begin{proof}
The result is a consequence of the min-max principle in~\eqref{eq:minmax}, the proof in Proposition \ref{pro: lower bound quadratic form} (taking $\rho= 3/8$, $\delta = 1/4$, $R_0 = 1$ in the proof of this proposition), and the fact that  
\begin{equation}\label{eq: lb lambda b}
	\inf_{x\in\Gamma} \lambda_{\alpha_x,\gamma_x,a}<\min (|a|, \Theta_0 |a|, \beta_a) =\Theta_0|a|
\end{equation}
under Assumption~\ref{asu}.
\end{proof}
\subsection{Upper bound}
In this section, we will establish  an upper bound of $\lambda(\mathfrak{b})$.  Note that this upper bound is not optimal. However it will be sufficient  to get the right order in  the localization estimates in Theorem \ref{thm: localization}.
\begin{pro}[Upper bound of $\lambda(\mathfrak{b})$]\label{pro: upper bound lambda b} Let $\overline{x}\in D$. Under Assumption \ref{asu}, there exists $C_{\overline{x}}>0$ depending on $\overline{x}$ and $\mathfrak{b}_0 >0$ such that for all $\mathfrak{b}\geq \mathfrak{b}_0$, it holds
	\begin{equation}\label{eq: ub lambda b x bar}
	\lambda(\mathfrak{b}) \leq \mathfrak{b} \lambda_{\alpha_{\overline{x}},\gamma_{\overline{x}},a} + C_{\overline{x}}\mathfrak{b}^{\frac{5}{6}}.
	\end{equation}
\end{pro}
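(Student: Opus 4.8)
The plan is to construct an explicit trial function localized near the point $\overline{x}\in D$ and to evaluate the Rayleigh quotient $Q_{\mathfrak{b},\mathbf F}/\|\cdot\|^2$ on it. By Assumption \ref{asu} and the discussion in Remark \ref{rem:asu}, at the point $\overline{x}$ we have $\lambda_{\alpha_{\overline x},\gamma_{\overline x},a}<|a|\Theta_0\le\min(\beta_a,|a|\zeta_{\overline\nu})$, so the strict inequality \eqref{eq:lb} holds; hence Theorem \ref{thm:main} provides a $\tau_*\in\R$ and a genuine $L^2$-eigenfunction $v_*:=v_{\alpha_{\overline x},\gamma_{\overline x},a,\tau_*}$ of $\mathcal L_{\underline\Ab_{\alpha_{\overline x},\gamma_{\overline x},a}}+V_{\underline{\mathbf B}_{\alpha_{\overline x},\gamma_{\overline x},a},\tau_*}$ with eigenvalue $\lambda_{\alpha_{\overline x},\gamma_{\overline x},a}$, and by Theorem \ref{thm:v0-decay} this eigenfunction enjoys exponential (Agmon) decay in $\R^2_+$. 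First I would lift $v_*$ back to a function on $\R^3_+$ by tensoring with the Fourier mode $e^{i\tau_* y_3}$ coming from the decomposition \eqref{eq:F1}, truncated in $y_3$ by a cutoff at scale $\mathfrak b^{-s}$ for an exponent $s\in(0,1/2)$ to be optimized; this produces (after the usual rescaling $y\mapsto\sqrt{\mathfrak b}\,y$) a quasimode for $\mathcal L_{\alpha_{\overline x},\gamma_{\overline x},a}$ concentrated at the origin of the boundary coordinates.

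The second step is to transplant this quasimode into $\Omega$ using the boundary coordinates $\Phi=\Phi_{\overline x}$ near $\overline x\in\Gamma$ constructed in Section \ref{sec: coordinates discontinuity line}, together with the gauge function $\omega$ furnished by Lemma \ref{lem: change gauge D4}: one sets $u(x):=\zeta\big(\sqrt{\mathfrak b}\,\Phi(x)\big)\,e^{i\mathfrak b\omega(\Phi(x))}\,\chi\!\big(\Phi(x)\big)\,\big[v_*\otimes e^{i\tau_*\cdot}\big]\big(\sqrt{\mathfrak b}\,\Phi(x)\big)$, where $\chi$ is a fixed cutoff supported in $B(0,\ell)\cap\R^3_+$ and $\zeta$ is the $y_3$-truncation mentioned above. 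The errors are then controlled term by term: the metric distortion $G^{-1}=Id+\mathcal O(|y|)$ and $|J_\Phi|=1+\mathcal O(|y|)$ from \eqref{eq: approx matrix}–\eqref{eq: approx determinant} contribute $\mathcal O(\mathfrak b^{-1/2})$ relative errors on the support (of size $\sqrt{\mathfrak b}\,|y|\lesssim R_0$); the remainder $\mathcal O(|y|^2)$ in Lemma \ref{lem: change gauge D4}, multiplied by $\mathfrak b$ and squared, gives an absolute error $\mathcal O(\mathfrak b\cdot(\mathfrak b^{-1/2})^2\cdot\mathfrak b^{1/2})=\mathcal O(\mathfrak b^{1/2})$ after a Cauchy–Schwarz split at scale $\mathfrak b^{\delta}$, exactly as in \eqref{eq: Cauchy 2 D4}; the $y_3$-cutoff produces an error $\lesssim\mathfrak b^{2s}$ from $|\zeta'|^2$ and an exponentially small error from the tail of $v_*$; finally the lateral cutoff $\chi$ costs only an exponentially small amount because $v_*$ decays. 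Optimizing — one wants the localization scale in $\Phi$-coordinates (giving $\mathcal O(\mathfrak b^{1/2})$), the $y_3$-truncation ($\mathcal O(\mathfrak b^{2s})$), and the mismatch between $\sqrt{\mathfrak b}$-scaling and the Taylor order to balance — leads to the exponent $5/6$ in \eqref{eq: ub lambda b x bar}: taking $s=5/12$ makes $\mathfrak b^{2s}=\mathfrak b^{5/6}$ and one checks the other contributions are $\le C\mathfrak b^{5/6}$. Dividing $Q_{\mathfrak b,\mathbf F}(u)$ by $\|u\|^2_{L^2(\Omega)}$ (which equals $(1+o(1))$ times the $\R^3_+$ norm of the rescaled quasimode by the same Jacobian estimate) and invoking the min-max characterization \eqref{eq:minmax} yields the claimed bound.

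I would organize the writeup as: (i) recall the eigenpair and its decay; (ii) define the rescaled model quasimode on $\R^3_+$ and record $Q^{\R^3_+}/\|\cdot\|^2=\mathfrak b\lambda_{\alpha_{\overline x},\gamma_{\overline x},a}+O(\mathfrak b^{2s})$; (iii) transplant via $\Phi$ and $\omega$, expanding the quadratic form as in \eqref{eq: diff quad form} and collecting errors; (iv) choose $\rho$, $\delta$, $s$ and conclude. The main obstacle I expect is bookkeeping the several competing error scales — the metric/Jacobian errors, the gauge remainder $\mathcal O(|y|^2)$ amplified by $\mathfrak b$, and the $y_3$-localization error — and checking that the choice producing $\mathfrak b^{5/6}$ is simultaneously admissible for all of them; there is also a minor technical point that $v_*\otimes e^{i\tau_*\cdot}$ is not compactly supported, so one must verify that truncating it (both laterally and in $y_3$) keeps it in $\mathcal D(Q_{\mathfrak b,\mathbf F})=H^1(\Omega)$ and changes the energy only by controlled amounts, which is where the Agmon estimate of Theorem \ref{thm:v0-decay} is essential.
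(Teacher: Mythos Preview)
Your proposal is essentially the paper's own approach: use Theorem~\ref{thm:main} and the Agmon decay of Theorem~\ref{thm:v0-decay} to build a trial state from the 2D eigenfunction $v_*$, tensor with $e^{i\tau_* y_3}$, cut off in $y_3$, rescale by $\sqrt{\mathfrak b}$, transplant to $\Omega$ via $\Phi$ and the gauge $\omega$ of Lemma~\ref{lem: change gauge D4}, and then balance the competing error scales. The paper carries this out with a $y_3$-cutoff at scale $\mathfrak b^{-\eta}$, a lateral cutoff at scale $\mathfrak b^{-\rho}$, a Cauchy parameter $\mathfrak b^{-\epsilon}$, and the final choice $\epsilon=\rho=1/6$, $\eta=1/3$; your variant with $s=5/12$ (and a fixed lateral cutoff, relying instead on the exponential decay of $v_*$) also lands on $\mathfrak b^{5/6}$.

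There is one arithmetic slip you should correct when writing it up. You estimate the error coming from the gauge remainder $\mathcal O(|y|^2)$ in Lemma~\ref{lem: change gauge D4} as $\mathcal O(\mathfrak b^{1/2})$ by taking $|y|\sim\mathfrak b^{-1/2}$, but that scale applies only to $(y_1,y_2)$ (via the decay of $v_*$); the $y_3$ variable lives at the larger scale $\mathfrak b^{-s}$ with $s<1/2$. The $y_3$ contribution to the squared remainder, after the Cauchy split at scale $\mathfrak b^{\delta}$, is of order $\mathfrak b^{2}\cdot\mathfrak b^{-4s}\cdot\mathfrak b^{\delta}=\mathfrak b^{2-4s+\delta}$, which with $s=5/12$ is $\mathfrak b^{1/3+\delta}$; for any admissible $\delta\in[1/6,1/2]$ this is between $\mathfrak b^{1/2}$ and $\mathfrak b^{5/6}$, and in particular it (together with the $y_3$-cutoff term $\mathfrak b^{2s}$) is what actually saturates the $\mathfrak b^{5/6}$ bound. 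This is exactly the term~IV in the paper's bookkeeping. The conclusion is unaffected, but the $y_3$ part of $|y|^2$ must be tracked separately in the gauge-remainder estimate.
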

\begin{proof}
	Recall the set $D$ defined in~\eqref{eq:D} by
	\begin{equation*}
		D= \big\{ \overline{x}\in \Gamma\,\,\, \vert\,\,\, \lambda_{\alpha_{\overline{x}}, \gamma_{\overline{x}}, a} < |a| \Theta_0\big\}.
	\end{equation*}
Let $\overline{x}\in D$. Below, we define a suitable trial state  supported near $\overline x$. In a convenient neighborhood $\overline{\mathcal{N}}$ of $\overline{x}$, we can consider the local coordinates $(y_1, y_2,y_3)$ introduced in Section \ref{sec: coordinates discontinuity line}, with the related diffeomorphism  $\Phi=\Phi_{\overline x}$ which satisfies $\Phi(\overline x)=(0,0,0)$.  For $\mathfrak{b}>0$ sufficiently large, we can assume that 
	\begin{equation*}
		\mathcal N_{\rho,\eta}:= (-\mathfrak{b}^{-\rho},\mathfrak{b}^{-\rho}) \times  (0,\mathfrak{b}^{-\rho}) \times (-\mathfrak{b}^{-\eta},\mathfrak{b}^{-\eta})\subset \Phi(\overline{\mathcal{N}}),
	\end{equation*} 
	for some $\rho\in (0,1/2)$ and $\eta > 0$ to be fixed later. We also set $\mathcal{N}_{\rho} := (-\mathfrak{b}^{-\rho}, \mathfrak{b}^{-\rho})\times (0,\mathfrak{b}^{-\rho})$ for a later use. Let $\chi\in C^\infty(\mathbb{R})$ such that $\|\chi\|_2=1$ and 
	\[
	0\leq \chi\leq 1, \qquad \chi = 1 \,\,\mbox{in}\,\, ({-1/2},1/2), \quad\mbox{and}\ \supp\chi\subset ({-1},1).
	\] 
	We define the functions $\chi_\rho(\cdot) = \chi(\mathfrak{b}^\rho\cdot)$ and $\chi_\eta(\cdot) = \chi(\mathfrak{b}^\eta\cdot)$.
	
	Under Assumption~\ref{asu}, we can consider a normalized eigenfunction, $\overline{v}$, of the operator $\underline{\mathcal{L}}_{\alpha_{\overline{x}},{\gamma}_{\overline{x}}, a} + V_{\underline{\mathbf{B}}_{{\alpha}_{\overline{x}},{\gamma}_{\overline{x}}, a}, \tau_\ast}$ mentioned in Theorem~\ref{thm:main},  for some $\tau^\ast\in \mathbb{R}$, and satisfying $\underline\sigma(\alpha_{\overline{x}},\gamma_{\overline{x}},a,\tau_\ast)=\lambda_{\alpha_{\overline{x}}, \gamma_{\overline{x}},a}$. The existence of $\overline{v}$ is ensured by Theorem \ref{thm:main} (see Remark~\ref{rem:asu}).
	
	We define the function  $\overline{u}$ by\footnote{This rescaling is useful to get the leading order in the upper bound of  $\lambda(\mathfrak{b})$ as $\mathcal O(\mathfrak b)$.} $\overline{u}(y)= \sqrt{\mathfrak{b}}\overline{v}(\sqrt{\mathfrak{b}} y)$. 
	We then introduce our trial function $u_{\mathrm{trial}}$ such that
	\begin{equation*}
		u_{\mathrm{trial}}(x) = \begin{cases}  (\tilde{u}\circ \Phi)(x) &\mbox{if} \,\,\, x\in \Phi^{-1}({ \mathcal N_{\rho,\eta}})\cap \Omega,\\ 0 &\mbox{otherwise.}  \end{cases}
	\end{equation*}
	with
	\begin{equation*}
		\tilde{u}(y)=\tilde{u}(y_1,y_2,y_3) := \begin{cases}\mathfrak{b}^{\eta/2}\chi_\eta(y_3)\chi_\rho(y_1)\chi_\rho(y_2) \overline{u}(y_1,y_2) e^{i \mathfrak{b}\omega(y)} e^{i\sqrt{\mathfrak{b}}\tau_\ast y_3} &\mbox{if}\,\,\, y\in {\mathcal N_{\rho,\eta}}\cap\mathbb{R}^3_+ ,\\ 0 &\mbox{otherwise.} \end{cases}
	\end{equation*}
	where $\omega$ is the gauge function in Lemma \ref{lem: change gauge D4}  and the normalization factor $\mathfrak{b}^{\eta/2}$ ensures that $\|\mathfrak{b}^{\eta/2}\chi_\eta\|_2 = 1$.

	Next, we evaluate $Q_{\mathfrak{b}, \mathbf{F}}(u_{\mathrm{trial}})$ and $\|u_{\mathrm{trial}}\|_2$. In what follows, the constants in the estimates will depend on the point $\overline{x}$. But for simplicity, we omit this dependence and write $C$ instead of $C_{\overline{x}}$ to denote these constants. From the definition of $u_{\mathrm{trial}}$ and the property of the diffeomorpism $\Phi$ (see \eqref{eq: approx determinant}), we can write, doing a change of variables, 
	\begin{eqnarray*}
		\int_{\Omega}dx\, |u_{\mathrm{trial}}|^2 &\geq& (1-C(\mathfrak{b}^{-\rho} + \mathfrak{b}^{-\eta}))\int_{{\mathcal N_{\rho,\eta}}\cap \mathbb{R}^3_+} dy\, |\tilde{u}|^2 
		\\
		&=&  (1-C(\mathfrak{b}^{-\rho} + \mathfrak{b}^{-\eta}))\int_{{\mathcal N_{\rho}}\cap \mathbb{R}^2_+} dy_1dy_2\, |\chi_\rho(y_1)\chi_\rho(y_2)\overline{u}(y_1,y_2)|^2 
		\\
		&=& (1-C(\mathfrak{b}^{-\rho}+\mathfrak{b}^{-\eta})\int_{\mathcal N_{\rho-1/2}\,\cap \mathbb{R}^2_+} dy_1dy_2\,|\chi_\rho(\mathfrak{b}^{-\frac{1}{2}}y_1)\chi_\rho(\mathfrak{b}^{-\frac{1}{2}}y_2)\overline{v}(y_1,y_2)|^2 
		\\
		&\geq& (1-C(\mathfrak{b}^{-\rho} +\mathfrak{b}^{-\eta}))\left(1-\int_{\mathcal N^c_{\rho-1/2}\,\cap \mathbb{R}^2_+} dy_1dy_2\,|\overline{v}(y_1,y_2)|^2 \right),
	\end{eqnarray*}
	where $\mathcal N^c_{\rho - 1/2}$ denotes the complement of $\mathcal N_{\rho -1/2}$. For $\rho \in (0,1/2)$, using the exponential decay of $\overline{v}$ stated in Theorem \ref{thm:v0-decay}, one gets that
	\begin{equation*}
		\int_{\Omega}dx\, |u_{\mathrm{trial}}|^2 \geq (1-C\mathfrak{b}^{-\rho} - C\mathfrak{b}^{-\eta}).
	\end{equation*}
	We now estimate $Q_{\mathfrak{b}, \mathbf{F}}(u_{\mathrm{trial}})$.
	Similarly as in Proposition \ref{pro: lower bound quadratic form}, using \eqref{eq: approx matrix} and \eqref{eq: approx determinant}, we can write
	\begin{equation}\label{eq: quadr form ub}
		Q_{\mathfrak{b}, \mathbf{F}}(u_{\mathrm{trial}}) \leq (1+ C\mathfrak{b}^{-\rho} +  C\mathfrak{b}^{-\eta})\int_{ \mathbb{R}^3_+} dy\, |(\nabla - i \mathfrak{b}\widetilde{\mathbf{F}})\tilde{u}|^2,
	\end{equation}
	where  $\widetilde{\mathbf{F}}$ is the transform of the vector potential $\mathbf{F}$ by $\Phi$.  Using the change of gauge in Lemma \ref{lem: change gauge D4}, we have
	\begin{multline}\label{eq: change gauge ub}
		\int_{\mathbb{R}^3_+}dy\, |(\nabla - i \mathfrak{b}\widetilde{\mathbf{F}})\tilde{u}|^2 
		\\
		 = \mathfrak{b}^{\eta}\int_{\mathbb{R}^3_+}dy\,  |(\nabla - i \mathfrak{b}\big(\mathbf{A}_{\overline{\alpha},\overline{\gamma},a}(y_1,y_2) + { {\mathcal{O}}}(|y|^2)\big)(\chi_\eta(y_3)\chi_\rho(y_1)\chi_\rho(y_2) \overline{u}(y_1,y_2))e^{i\sqrt{\mathfrak{b}}\tau_\ast y_3}|^2,
	\end{multline}
	where $\mathbf{A}_{\overline{\alpha},\overline{\gamma},a}$ is the vector potential in this lemma (introduced in \eqref{eq:Ag}), for $\overline\alpha=\alpha_{\overline x}$ and $\overline\gamma=\gamma_{\overline x}$. $\mathbf{A}_{\overline{\alpha},\overline{\gamma},a}$  was chosen independent of $y_3$. We now use Cauchy's inequality to write for some $0< \epsilon < 1$, 
\begin{multline}\label{eq: cauchy upper bound}
\mathfrak{b}^{\eta}\int_{\mathbb{R}^3_+}dy\,  |(\nabla - i \mathfrak{b}\big(\mathbf{A}_{\overline{\alpha},\overline{\gamma},a}(y_1,y_2) + { {\mathcal{O}}}(|y|^2)\big)(\chi_\eta(y_3)\chi_\rho(y_1)\chi_\rho(y_2) \overline{u}(y_1,y_2)e^{i\sqrt{\mathfrak{b}}\tau_\ast y_3})|^2
\\
\leq \mathfrak{b}^{\eta}(1+\mathfrak{b}^{-\epsilon})\int_{\mathbb{R}^3_+}dy\,  |(\nabla - i \mathfrak{b}\mathbf{A}_{\overline{\alpha},\overline{\gamma},a}(y_1,y_2)) \overline{u}(y_1,y_2)e^{i\sqrt{\mathfrak{b}}\tau_\ast y_3}|^2|\chi_\eta(y_3)|^2
\\
+C\mathfrak{b}^{\epsilon + \eta}\|\chi_\eta^\prime\|_\infty^2 \int_{\frac{1}{2}\mathfrak{b}^\eta \leq |y_3| \leq \mathfrak{b}^{\eta}}dy_3 \int_{\mathbb{R}^2_+}\,dy_1dy_2\,   |\overline{u}(y_1,y_2)|^2 
\\
+  C\mathfrak{b}^{\epsilon + \eta} \int_{\mathbb{R}^3_+}dy\,  |\chi_\eta(y_3)|^2\bigg(|\chi_\rho^\prime(y_1)|^2|\chi_\rho(y_2)|^2 + |\chi_\rho(y_1)|^2|\chi^\prime_\rho(y_2)|^2\bigg)|\overline{u}(y_1,y_2)|^2
\\
+C \mathfrak{b}^{ 2 +\epsilon +\eta} \int_{\mathbb{R}^3_+}dy\, |y|^4 |\chi_\eta(y_3)|^2| \overline{u}(y_1,y_2)|^2.
\end{multline}
We now take into account separately the integrals in the RHS of the inequality above. Via a change of coordinates, 
	\begin{multline*}
	\mathrm{I} := \mathfrak{b}^{\eta}(1+\mathfrak{b}^{-\epsilon})\int_{\mathbb{R}^3_+}dy\,  |(\nabla - i \mathfrak{b}\mathbf{A}_{\overline{\alpha},\overline{\gamma},a}(y_1,y_2)) \overline{u}(y_1,y_2)e^{i\sqrt{\mathfrak{b}}\tau_\ast y_3}|^2|\chi_\eta(y_3)|^2
		\\
		= \mathfrak{b}^{1+\eta}(1+\mathfrak{b}^{-\epsilon})\int_{\mathbb{R}^3_+}dy\,  |(\nabla - i \mathfrak{b}\mathbf{A}_{\overline{\alpha},\overline{\gamma},a}(y_1,y_2)) (\overline{v}(\sqrt{b}(y_1,y_2))e^{i\sqrt{\mathfrak{b}}\tau_\ast y_3})|^2|\chi_\eta(y_3)|^2
		\\
		= \mathfrak{b}^{1+\eta}(1 + \mathfrak{b}^{-\epsilon})\int_{\mathbb{R}}dy_3 |\chi_\eta(y_3)|^2  \underline{Q}^{\tau_\ast}_{\overline{\alpha},\overline{\gamma},a}(\overline{v})
		\\
		= \mathfrak{b}(1 + \mathfrak{b}^{-\epsilon}) \underline{Q}^{\tau_\ast}_{\overline{\alpha},\overline{\gamma},a}(\overline{v}), 
	\end{multline*}
	where $\underline{Q}^{\tau_\ast}_{\overline{\alpha},\overline{\gamma},a}$ is the quadratic form defined in \eqref{eq:Qa} and where we used that $\mathfrak{b}^{-\eta} = \|\chi_\eta\|_2^2$. Using that 
	\[
		 \underline{Q}^{\tau_\ast}_{\overline{\alpha}, \overline{\gamma},a}(\overline{v}) = \underline{\sigma}(\overline{\alpha}, \overline{\gamma}, a, \tau_\ast) = \lambda_{\overline{\alpha}, \overline{\gamma},a}, 
	\]
	we conclude that 
	\begin{equation}\label{eq: est A}
		\mathrm{I} = \mathfrak{b}(1 + \mathfrak{b}^{-\epsilon})\lambda_{\overline{\alpha}, \overline{\gamma},a} .
	\end{equation}
Next, let
\[
	 \mathrm{II}:= C\mathfrak{b}^{\epsilon + \eta} \|\chi_\eta^\prime\|_\infty^2 \int_{\frac{1}{2}\mathfrak{b}^\eta \leq |y_3| \leq \mathfrak{b}^{\eta}}dy_3 \int_{\mathbb{R}^2_+}\,dy_1dy_2\,   |\overline{u}(y_1,y_2)|^2.
\]
 We bound $\|\chi_\eta^\prime\|_{\infty}\leq C\mathfrak{b}^{\eta}$, and  consequently get
\begin{equation}\label{eq: est B}
	|\mathrm{II}| \leq C\mathfrak{b}^{\epsilon + 2\eta}\int_{\mathbb{R}^2_+}dy_1 dy_2\, |\overline{v}(y_1,y_2)|^2 \leq C\mathfrak{b}^{\epsilon + 2\eta}.
\end{equation}
For the term 
\[
	\mathrm{III}: = C\mathfrak{b}^{\epsilon + \eta} \int_{\mathbb{R}^3_+}dy\,  |\chi_\eta(y_3)|^2\bigg(|\chi_\rho^\prime(y_1)|^2|\chi_\rho(y_2)|^2 + |\chi_\rho(y_1)|^2|\chi^\prime_\rho(y_2)|^2\bigg)|\overline{u}(y_1,y_2)|^2,
\]
we use the support of $\chi_\rho^\prime$ and the exponential decay of $\overline{u}(y_1,y_2) = \sqrt{\mathfrak{b}} \overline{v}(\sqrt{\mathfrak{b}}(y_1,y_2))$ in Theorem \ref{thm:v0-decay}, and get 
\begin{equation}\label{eq: est C}
	|\mathrm{III}| \leq C\mathfrak{b}^{\epsilon}.
\end{equation}
To bound the term, 
\[
	\mathrm{IV} := C \mathfrak{b}^{ 2 +\epsilon +\eta} \int_{\mathbb{R}^3_+}dy\, |y|^4 |\chi_\eta(y_3)|^2| \overline{u}(y_1,y_2)|^2,
\]
we use again the decay of $\overline{v}$ from Theorem \ref{thm:v0-decay}, which implies that $\int |y|^4 |\overline{v}|^2 \leq C$ . We write 
\begin{equation}\label{eq: est D}
	|\mathrm{IV}| \leq C\mathfrak{b}^{2+\epsilon + \eta}\int_{\mathbb{R}^3_+} dy\, (|y_1|^4 + |y_2|^4 + |y_3|^4) |\chi_\eta(y_3)|^2 |\overline{u}(y_1,y_2)|^2  \leq
	C\mathfrak{b^{2+\epsilon - 4\eta}} + C\mathfrak{b}^{\epsilon}.
\end{equation}
Inserting~\eqref{eq: est A}--~\eqref{eq: est D} in \eqref{eq: change gauge ub}, we get
	\begin{equation}\label{eq: final ub 1}
		\int_{\mathcal{N}_{\rho,\eta}}dy |(\nabla - i\mathfrak{b}\widetilde{\mathbf{F}})\tilde{u}|^2 \leq \mathfrak{b}(1 + \mathfrak{b}^{-\epsilon})\lambda_{\alpha_j,\gamma_j, a} + C\mathfrak{b}^{\epsilon + 2\eta} + C\mathfrak{b}^{2+\epsilon - 4\eta}.
	\end{equation}
	Then, using the bound \eqref{eq: final ub 1} in \eqref{eq: quadr form ub}, we have  for $\rho\in (0,1/2)$, $\epsilon \in (0,1)$, and $\eta> 0$,
	\begin{multline*}
	Q_{\mathfrak{b}, \mathbf{F}}(u_{\mathrm{trial}}) \leq (1 + C\mathfrak{b}^{-\rho} + C\mathfrak{b}^{-\eta})\left[ \mathfrak{b}(1 + \mathfrak{b}^{-\epsilon})\lambda_{\alpha_j,\gamma_j, a} + C\mathfrak{b}^{\epsilon + 2\eta} + C\mathfrak{b}^{2+\epsilon - 4\eta} \right].
	\end{multline*}
	Taking $\epsilon = \rho = 1/6$ and $\eta = 1/3$ gives
	\begin{equation}\label{eq: lambda b x bar}
		\frac{Q_{\mathfrak{b}, \mathbf{F}}(u_{\mathrm{trial}})}{\|u_{\mathrm{trial}}\|_2^2} \leq \mathfrak{b}\lambda_{\alpha_{\overline{x}},\gamma_{\overline{x}}, a}+ C\mathfrak{b}^{\frac{5}{6}}.
	\end{equation}
\end{proof}
\subsection{Proof of Theorem~\ref{thm: asympt semicl}} The lower bound of $\lambda(\mathfrak b)$  is established in Proposition~\ref{pro: lower bound bottom}.  For the upper bound, introducing the $\limsup$ (as $\mathfrak b\rightarrow+\infty$) in Equation~\eqref{eq: ub lambda b x bar} of Proposition~\ref{pro: upper bound lambda b} gives 
	\[
	\limsup_{\mathfrak{b}\rightarrow +\infty}\frac{\lambda(\mathfrak{b})}{\mathfrak{b}} \leq \lambda_{\alpha_{\overline{x}}, \gamma_{\overline{x}}, a}.
	\]
	Taking then the infimum over $\overline{x}\in D$ and noticing that $\inf_{x\in \Gamma} \lambda_{\alpha_x,\gamma_x,a}=\inf_{x\in D} \lambda_{\alpha_x,\gamma_x,a}$ yield the desired upper bound.

\subsection{Proof of Theorem \ref{thm: localization}} The proof of Theorem \ref{thm: localization} follows now by standard arguments. 
\begin{proof}
	We use the common proof for Agmon-type estimates (see e.g. \cite[Theorem 9.4.1]{fournais2010spectral}). Let $R_0 > 1$, $\eta>0$ to be chosen later and let
	\begin{equation}
		g(x) := \eta \max(\mathrm{dist}(x, D), R_0 \mathfrak{b}^{-\frac{1}{2}}), \qquad x\in\Omega.
	\end{equation}	
	Notice that 
\begin{equation}\label{eq:g}
	\supp(\nabla g)\subset \{\mathrm{dist}(x,D) \geq R_0\mathfrak{b}^{-\frac{1}{2}}\}\ \mbox{and} \ |\nabla g| \leq C\eta. 
	\end{equation}
	By an integration by parts, the ground state  $\psi$  of the operator $\mathcal{P}_{\mathfrak{b}, \mathbf{F}}$ satisfies
	\begin{equation}\label{eq: agmon int parts}
		\lambda(\mathfrak{b}) \| e^{\sqrt{b} g}\psi\|^2 =  \mathrm{Re} \langle \mathcal{P}_{\mathfrak{b}, \mathbf{F}}\psi, e^{2\sqrt{\mathfrak{b}}g}\psi\rangle = Q_{\mathfrak{b}, \mathbf{F}}(e^{\sqrt{\mathfrak{b}} g}\psi) - \mathfrak{b}\| |\nabla g | e^{\sqrt{\mathfrak{b}} g}\psi\|^2.
	\end{equation} 
	Moreover, under Assumption~\ref{asu} and from Proposition \ref{pro: lower bound quadratic form} we get 
	\begin{eqnarray}\label{eq: lw quadratic form}
		Q_{\mathfrak{b}, \mathbf{F}}(e^{\sqrt{b}g}\psi) &\geq& \int_{\Omega}dx\, \big(U_{\mathfrak{b}}(x)  - CR_0^{-2}\mathfrak{b}\big)| e^{\sqrt{b}g}\psi|^2
		\\
		&\geq& \int_{\mathrm{dist}(x, D) \geq R_0 \mathfrak{b}^{-1/2}}dx\,  \big (|a|\Theta_0\mathfrak{b} - CR_0^4 \mathfrak{b}^{\frac{1}{2}} - CR_0^{-2}\mathfrak{b} \big) |e^{\sqrt{b} g}\psi|^2\nonumber \\
		&& + \int_{\mathrm{dist}(x, D) < R_0 \mathfrak{b}^{-1/2}}dx\,  \big (\lambda_D\mathfrak{b} - CR_0^4 \mathfrak{b}^{\frac{1}{2}} - CR_0^{-2}\mathfrak{b} \big) |e^{\sqrt{b} g}\psi|^2,\nonumber 
	\end{eqnarray}
	where $\lambda_{D} := \inf_{x\in D}\lambda_{\alpha_x, \gamma_x, a}=\inf_{x\in \Gamma}\lambda_{\alpha_x, \gamma_x, a}$. By  Assumption~\ref{asu}, we have  that $ \lambda_D<|a|\Theta_0 $. 
	
 Now, by Theorem~\ref{thm: asympt semicl} (more precisely see Proposition~\ref{pro: upper bound lambda b}),  we have
	\begin{equation}\label{eq: ub agmon}
		\lambda(\mathfrak{b}) \leq \mathfrak b\lambda_D+ o(\mathfrak b).
	\end{equation}
	Inserting \eqref{eq: lw quadratic form} and \eqref{eq: ub agmon} in \eqref{eq: agmon int parts} implies
	\begin{multline}\label{eq: agmon 2}
		\int_{\mathrm{dist}(x, D) \geq R_0 \mathfrak{b}^{-1/2}}dx\,  \big (|a|\Theta_0 - CR_0^4 \mathfrak{b}^{-\frac{1}{2}} - CR_0^{-2}\big) |e^{\sqrt{b} g}\psi|^2 
		\\
		+ \int_{\mathrm{dist}(x, D) < R_0 \mathfrak{b}^{-1/2}}dx\,  \big (\lambda_D - CR_0^4 \mathfrak{b}^{-\frac{1}{2}} - CR_0^{-2}\big) |e^{\sqrt{b} g}\psi|^2
		\\
		\leq (\lambda_{D} + o(1)) \|e^{\sqrt{b}g}\psi\|^2 + \| |\nabla g| e^{\sqrt{b}g}\psi\|^2.
	\end{multline}
	Finally, using the properties of $g$ in~\eqref{eq:g}, we write
	\[
	\| |\nabla g| e^{\sqrt{b} g}\psi \|^2 \leq \eta^2 \|e^{\sqrt{\mathfrak{b}}g}\psi\|^2,
	\]
	and insert this equation in \eqref{eq: agmon 2}. This yields, for $0<\eta < \sqrt{|a|\Theta_0 - \lambda_{D}}$, the existence of $R_0>1$ and  (a sufficiently large) $\mathfrak{b}_0$ such that 
	\begin{equation}
		\int_{\Omega}dx\, e^{2{  \eta}\sqrt{\mathfrak{b}} \mathrm{dist}(x, D)}|\psi|^2 \leq C(R_0,\eta) \|\psi\|^2. 
	\end{equation} 
	The estimate for the gradient term easily follows from the inequality above, using that $\psi$ is a ground state of $\mathcal{P}_{\mathfrak{b}, \mathbf{F}}$.
\end{proof}

\noindent\textbf{Acknowledgments.} The authors are thankful to  A. Kachmar for his valuable feedback on Section 5.

\appendix


\begin{thebibliography}{BNDPR12}
	

	\bibitem{agmon2014lectures}
	S.~Agmon.
	\newblock {\em Lectures on Exponential Decay of Solutions of Second-Order
		Elliptic Equations: Bounds on Eigenfunctions of N-Body Schr{\"o}dinger
		Operations.(MN-29)}.
	\newblock Princeton University Press, 2014.
	
	
	
	\bibitem{Assaad3}
	W.~Assaad.
	\newblock The breakdown of superconductivity in the presence of magnetic steps.
	\newblock {\em Commun. Contemp. Math.,
		https://doi.org/10.1142/S0219199720500054}, 2020.
 
 	\bibitem{assaad2020magnetic}
 W.~Assaad.
 \newblock Magnetic steps on the threshold of the normal state.
 \newblock {\em J. Math. Phys.},
 61(10):101508, 2020.
 	
 	\bibitem{AssaadHearing21}
 W.~Assaad, B.~Helffer, and A.~Kachmar.
 \newblock Semi-classical eigenvalue estimates under magnetic steps (accepted in Anal. PDE).
 \newblock {\em   arXiv preprint arXiv:2108.03964}, 2021.

\bibitem{Assaadlowest20}
W.~Assaad and A.~Kachmar.
\newblock  Lowest energy band function for magnetic steps (accepted in Journal of Spectral Theory).
\newblock {\em   arXiv preprint arXiv:2012.13794}, 2020. 

	
	\bibitem{Assaad2019}
W.~Assaad, A.~Kachmar, and M.~Persson-Sundqvist.
\newblock The distribution of superconductivity near a magnetic barrier.
\newblock {\em Comm. Math. Phys.}, 366(1):269--332, 2019.

\bibitem{bonnaillie2003analyse}
	V.~Bonnaillie.
	\newblock {\em Analyse math{\'e}matique de la supraconductivit{\'e} dans un
		domaine {\`a} coins: m{\'e}thodes semi-classiques et num{\'e}riques}.
	\newblock PhD thesis, Universit{\'e} Paris Sud-Paris XI, 2003.
	
	\bibitem{bonnaillie2012harmonic}
V.~Bonnaillie-No{\"e}l.
\newblock Harmonic oscillators with {N}eumann condition on the half-line.
\newblock {\em Commun. Pure Appl. Math}, 11(6):2221--2237, 2012.

\bibitem{bonnaillie2012discrete}
	V.~Bonnaillie-No{\"e}l, M.~Dauge, N.~Popoff, and N.~Raymond.
	\newblock Discrete spectrum of a model Schr{\"o}dinger operator on the
	half-plane with Neumann conditions.
	\newblock {\em ZAMP}, 63(2):203--231, 2012.


	\bibitem{bonnaillie2015ground}
	V.~Bonnaillie-No{\"el}, M.~Dauge,  and N.~Popoff.
	\newblock {\em Ground state energy of the magnetic Laplacian on general three-dimensional corner domains}.
	\newblock {M{\'e}moires de la SMF}, volume~145, 2015.

	\bibitem{BNF} V. Bonnaillie-No\"el, S. Fournais. Superconductivity in Domains with Corners, {\it Rev. Math. Phys.}, 19:607--637, 2007.

	\bibitem{cycon2009schrodinger}
	H.L. Cycon, R.G. Froese, W.~Kirsch, and B.~Simon.
	\newblock {\em Schr{\"o}dinger {O}perators: {With} {A}pplication to {Q}uantum
		{M}echanics and {G}lobal {G}eometry}.
	\newblock Springer, 2009.
	
\bibitem{CG1} M. Correggi, E.L. Giacomelli. Surface superconductivity in presence of corners, \textit{Rev. Math. Phys.} 29, 1750005, 2017.

	\bibitem{CG2} M. Correggi, E.L. Giacomelli. Effects of corners in surface superconductivity, to appear in \textit{Calc. Var. Partial Differ. Equ.}, DOI: 10.1007/s00526-021-02101-7, 2021.

	\bibitem{CG3} M. Correggi, E.L. Giacomelli. Almost flat angles in surface superconductivity,  \textit{Nonlinearity} (34):7633--7661, 2021. 
	
	
\bibitem{CR1} M. Correggi, N. Rougerie. On the Ginzburg-Landau functional in the surface superconductivity regime, \textit{Comm. Math. Phys.} 332:1297--1343, 2014; erratum {\it Comm. Math. Phys.} {338}:1451--1452, 2015.

	\bibitem{CR2} M. Correggi, N. Rougerie. Boundary behavior of the Ginzburg-Landau order parameter in the surface superconductivity regime, {\it Arch. Rational Mech. Anal.} { 219}:553--606, 2015.


	\bibitem{exner2018bound}
	P.~Exner, V.~Lotoreichik, and A.~P{\'e}rez-Obiol.
	\newblock On the bound states of magnetic {L}aplacians on wedges.
	\newblock {\em Rep. Math. Phys.}, 82(2):161--185, 2018.
	

	
	\bibitem{fournais2006third}
	S.~Fournais and B.~Helffer.
	\newblock On the third critical field in {G}inzburg--{L}andau theory.
	\newblock {\em Commun. Math. Phys.}, 266(1):153--196, 2006.
	
	\bibitem{fournais2010spectral}
	S.~Fournais and B.~Helffer.
	\newblock {\em {Spectral} {Methods} in {Surface} {Superconductivity}},
	volume~77.
	\newblock Springer Science \& Business Media, 2010.

 	\bibitem{fournais2022tunneling}
 S.~Fournais, B.~Helffer,  and A.~Kachmar.
 \newblock Tunneling effect induced by a curved magnetic edge.
 \newblock {\em The Physics and Mathematics of Elliott Lieb}, 1:315-350, 2022.
 
 \bibitem{FKP}S. Fournais, A. Kachmar, M. Persson, The ground state energy of the three dimensional Ginzburg–Landau functional. Part II: Surface regime, \textit{J. Math, Pures Appl.} 99:343-374, 2013.

 \bibitem{Gia} E.L. Giacomelli, On the magnetic laplacian with a piecewise constant magnetic field in $\mathbb{R}^3_+$, \textit{arXiv:2207.13627}, 2022.

 
 
	\bibitem{giorgi2002breakdown}
	T.~Giorgi and D.~Phillips.
	\newblock The breakdown of superconductivity due to strong fields for the
	{Ginzburg--Landau} model.
	\newblock {\em SIAM J. Math. Anal.}, 30:341--359, 1999.
	
	\bibitem{gradshteyn2015table}
	I.S. Gradshteyn and I.M. Ryzhik.
	\newblock {\em Table of Integrals, Series, and Products}.
	\newblock Elsevier, {A}cademic {P}ress, {A}msterdam, 2015.
	
	\bibitem{HK} B. Helffer, A. Kachmar.
	 The density of superconductivity in domains with corners, {\it Lett. Math. Phys.} {108}:2169--2187, 2018.
	 
	 	
	 \bibitem{helffer2001magnetic}
	 B.~Helffer and A.~Morame.
	 \newblock Magnetic bottles in connection with superconductivity.
	 \newblock {\em J. Funct. Anal.}, 185(2):604--680, 2001.
	 

	\bibitem{helffer2004magnetic}
	B.~Helffer and A.~Morame.
	\newblock Magnetic bottles for the {N}eumann problem: curvature effects in the
	case of dimension 3 (general case).
	\newblock {\em Ann. Sci. {\'E}c. Norm. Sup{\'e}r.}, 37(1):105--170, 2004.

	\bibitem{hislop2016band}
	P.D. Hislop, N.~Popoff, N.~Raymond, and M.~Sundqvist.
	\newblock Band functions in the presence of magnetic steps.
	\newblock {\em Math. Models Methods Appl. Sci.}, 26(1):161--184, 2016.
	
	\bibitem{hislop2015edge}
	P.D. Hislop and E.~Soccorsi.
	\newblock Edge states induced by {Iwatsuka} {Hamiltonians} with positive
	magnetic fields.
	\newblock {\em J. Math. Anal. Appl.}, 422(1):594--624, 2015.
		
	
	\bibitem{leinfelder1983gauge}
	H.~Leinfelder.
	\newblock Gauge invariance of Schr{\"o}dinger operators and related spectral
	properties.
	\newblock {\em J. Operator Theory}, 9(1):163--179, 1983.
	
	
	
	\bibitem{lu99eigen}
	K.~Lu and X.B. Pan.
	\newblock Eigenvalue problems of Ginzburg–Landau operator in bounded domains.
	\newblock {\em  J. Math. Phys.}, 40(6):2647--2670, 1999.
	

	
	\bibitem{lu2000surface}
	K.~Lu and X.B. Pan.
	\newblock Surface nucleation of superconductivity in 3-dimension.
	\newblock {\em J. Differential Equations}, 168(2):386--452, 2000.
	
	\bibitem{morame2005remarks}
	A.~Morame and F.~Truc.
	\newblock Remarks on the spectrum of the {N}eumann problem with magnetic field
	in the half-space.
	\newblock {\em J. Math. Phys.}, 46(1):012105, 2005.

	
	\bibitem{Pan2}	X.-B. Pan. Upper critical field for superconductors with edges and corners, {\it Calc. Var. Partial Differential Equations} {14}:447--482, 2002.
	
	\bibitem{Pan1} X.-B. Pan. Surface superconductivity in $3$ dimensions, {\it Transactions of the American Mathematical Society}
 356(10): 3899--3937, 2004.

	\bibitem{persson1960bounds}
	A.~Persson.
	\newblock Bounds for the discrete part of the spectrum of a semi-bounded
	{S}chr{\"o}dinger operator.
	\newblock {\em Math. Scand.}, 8(1):143--153, 1960.
	
	
	\bibitem{popoff2013schrodinger}
	N.~Popoff.
	\newblock The {S}chr{\"o}dinger operator on an infinite wedge with a tangent
	magnetic field.
	\newblock {\em J. Math. Phys.}, 54(4):041507, 2013.
	
	
	
	\bibitem{popoff2015model}
	N.~Popoff.
	\newblock The model magnetic Laplacian on wedges.
	\newblock {\em J. Spectr. Theory}, 5(3):617--661, 2015.

	\bibitem{PopRay}
	N. Popoff, N. Raymond.
	\newblock When the $3D$ magnetical Laplacian meets a curved edge in the semiclassical limit,
	\newblock{\em Siam J. Math. Anal.}, 45(4):2354--2395, 2013.
	
	\bibitem{raymond2009sharp}
	N.~Raymond.
	\newblock Sharp asymptotics for the {N}eumann {L}aplacian with variable
	magnetic field: case of dimension 2.
	\newblock {\em Ann. Henri Poincar{\'e}}, 10(1):95--122, 2009.
	
	
\bibitem{raymond2013var}
N.~Raymond.
\newblock From the Laplacian with variable magnetic field to the electric Laplacian in the semiclassical limit.
\newblock {\em J. Analysis \& PDE}, 6(6):1289--1326, 2013.

\bibitem{raymond2017}
N.~Raymond.
\newblock {\em Bound states of the magnetic Schr{\"o}dinger operator,} volume~27.
\newblock EMS Tracts in Mathematics, 2017. 


	\bibitem{simon1972methods}
	M.~Reed and B.~Simon.
	\newblock {\em Methods of {M}odern {M}athematical {P}hysics {I}: {Functional}
		{Analysis}}.
	\newblock Academic Press, New York, 1972.

	\bibitem{saint1963onset}
D.~Saint~James and P.G.~de Gennes.
\newblock Onset of superconductivity in decreasing fields.
\newblock {\em Phys. Lett.}, 7(5):306--308, 1963.

	\bibitem{SS} 
	E.~Sandier and S.~Serfaty.
	\newblock {\em {V}ortices in the {M}agnetic {Ginzburg--Landau} {M}odel}.
	\newblock {P}rogress in {N}onlinear {Partial} {Differential} {Equations} and
	their {A}pplications, Birkh{\"a}user-Boston, 2007.
\end{thebibliography}
\end{document}